\newif\ifshort
\newif\ifappendix
\newcommand{\ExternalLink}{%
    \tikz[color=magenta, x=1.2ex, y=1.2ex, baseline=-0.05ex]{%
        \begin{scope}[x=1ex, y=1ex]
            \clip (-0.1,-0.1) 
                --++ (-0, 1.2) 
                --++ (0.6, 0) 
                --++ (0, -0.6) 
                --++ (0.6, 0) 
                --++ (0, -1);
            \path[draw, 
                line width = 0.5, 
                rounded corners=0.5] 
                (0,0) rectangle (1,1);
        \end{scope}
        \path[draw, line width = 0.5] (0.5, 0.5) 
            -- (1, 1);
        \path[draw, line width = 0.5] (0.6, 1) 
            -- (1, 1) -- (1, 0.6);
        }
}
\crefname{alg}{Algorithm}{Algorithms}
\Crefname{alg}{Algorithm}{Algorithms}
\crefname{axiom}{}{}
\tikzstyle{path} = [color=black!10,line cap=round, line join=round, line width=12pt]
\tikzset{
	vertex/.style={
		circle,
		draw=black,
		solid,
		thick,
		fill=white,
		minimum size=2mm,
		inner sep=0,
	},
	edge/.style={
		draw=black,
		semithick,
	},
	arc/.style={
		edge,
		->,
		>=stealth',
	},
	colornode/.style = {
		circle,
		draw=#1!70!black,
		very thick,
		fill=#1
	},
	smallcolornode/.style = {
		colornode=#1,
		thick,
		scale=0.65
	}
}
\def\@maketitle{%
  \newpage
  \null
  \vskip 2em%
  \begin{center}%
  \let \footnote \thanks
    {\Large\bf \@title \par}%
    \vskip 1.5em%
    {\large
      \lineskip .5em%
      \begin{tabular}[t]{c}%
        \@author
      \end{tabular}\par}%
    \vskip 1em%
    {\large \@date}%
  \end{center}%
  \par
  \vskip 1.5em}
\def\url@leostyle{%
  \@ifundefined{selectfont}{\def\UrlFont{\sf}}{\def\UrlFont{\small\ttfamily}}}
\theoremstyle{plain}
\newtheorem{theorem}{Theorem}
\crefname{theorem}{Theorem}{Theorems}
\Crefname{theorem}{Thm{.}}{Thms{.}}
\newtheorem{lemma}[theorem]{Lemma}
\newtheorem{proposition}[theorem]{Proposition}
\newtheorem{corollary}[theorem]{Corollary}
\newtheorem{observation}[theorem]{Observation}
\Crefname{observation}{Obs{.}}{Obs{.}}
\crefname{observation}{Observation}{Observations}
\theoremstyle{definition}
\newtheorem{definition}[theorem]{Definition}
\newtheorem{alg}{Algorithm}
\theoremstyle{remark}
\declaretheorem[style=definition,name=Construction,qed=$\diamond$]{construction}
\newcommand{\prob}[1]{\textnormal{\textsc{#1}}}
\newcommand{\probDef}[3]{
	\begin{center}
	\begin{minipage}{\columnwidth-6em}
		\noindent
		\prob{#1}
		\vspace{5pt}\\
		\setlength{\tabcolsep}{3pt}
		\begin{tabularx}{\textwidth}{@{}lX@{}}
			\textbf{Input:}     & #2 \\
			\textbf{Question:}  & #3
		\end{tabularx}
	\end{minipage}
	\end{center}
}
\DeclarePairedDelimiterX{\abs}[1]{\lvert}{\rvert}{#1}
\DeclarePairedDelimiterX{\norm}[1]{\lVert}{\rVert}{#1}
\DeclarePairedDelimiterX{\ceil}[1]{\lceil}{\rceil}{#1}
\newcommand{\NN}{\ensuremath{\mathds{N}}}
\newcommand{\Nzero}{\ensuremath{\NN_0}}
\newcommand{\ZZ}{\ensuremath{\mathds{Z}}}
\newcommand{\FFF}{\ensuremath{\mathcal{F}}}
\newcommand{\NNN}{\ensuremath{\mathcal{N}}}
\newcommand{\PPP}{\ensuremath{\mathcal{P}}}
\newcommand{\QQQ}{\ensuremath{\mathcal{Q}}}
\newcommand{\RRR}{\ensuremath{\mathcal{R}}}
\newcommand{\SSS}{\ensuremath{\mathcal{S}}}
\newcommand{\TTT}{\ensuremath{\mathcal{T}}}
\newcommand{\WWW}{\ensuremath{\mathcal{W}}}
\newcommand{\YYY}{\ensuremath{\mathcal{Y}}}
\newcommand{\tildy}[1]{\widetilde{#1}}
\newcommand{\hatty}[1]{\widehat{#1}}
\newcommand{\FPT}{\ensuremath{\mathrm{FPT}}}
\newcommand{\NP}{\ensuremath{\mathrm{NP}}}
\newcommand{\bigO}{\ensuremath{\mathcal{O}}}
\newcommand{\yes}{\textnormal{\texttt{yes}}}
\newcommand{\no}{\textnormal{\texttt{no}}}
\newcommand{\true}{\texttt{true}}
\newcommand{\false}{\texttt{false}}
\DeclareMathOperator{\dist}{dist}
\DeclareMathOperator{\rep}{rep}
\DeclareMathOperator{\orep}{orep}
\DeclareMathOperator{\prep}{prep}
\newcommand{\repr}{\ensuremath{\subseteq_{\rep}}}
\newcommand{\orepr}{\ensuremath{\subseteq_{\orep}}}
\newcommand{\prepr}{\ensuremath{\subseteq_{\prep}}}
\newcommand{\representative}[1][$q$]{#1-re\-pre\-sen\-ta\-tive}
\newcommand{\orepresentative}[1][$r$]{ordered \representative[#1]}
\newcommand{\prepresentative}[1]{partial #1-re\-pre\-sen\-ta\-tive}
\newcommand{\xtoy}[2]{\ensuremath{[ #1, #2 ]}} %
\newcommand{\oneto}[1]{\ensuremath{[ #1 ]}} %
\newcommand{\nullto}[1]{\xtoy{0}{#1}} %
\newcommand{\ceq}{\ensuremath{\coloneqq}}
\newcommand{\abX}[3]{\ensuremath{#1}\nobreakdashes-\ensuremath{#2}~#3}
\newcommand{\walk}{walk}
\newcommand{\walks}{walks}
\newcommand{\abpath}[2]{\abX{#1}{#2}{path}}
\newcommand{\abpaths}[2]{\abX{#1}{#2}{paths}}
\newcommand{\stpath}{\abpath{s}{t}}
\newcommand{\stpaths}{\abpaths{s}{t}}
\newcommand{\abwalk}[2]{\abX{#1}{#2}{\walk}}
\newcommand{\abwalks}[2]{\abX{#1}{#2}{\walks}}
\newcommand{\stwalk}{\abwalk{s}{t}}
\newcommand{\stwalks}{\abwalks{s}{t}}
\newcommand{\subp}{subpath}
\newcommand{\subw}{sub\walk}
\newcommand{\pref}{rainbow}
\newcommand{\prefness}{rainbowness}
\newcommand{\ppref}{locally rainbow}
\newcommand{\pprefness}{local rainbowness}
\newcommand{\rppref}{\ensuremath{r}-\pref}
\newcommand{\rpprefness}{\ensuremath{r}-\prefness}
\newcommand{\ppRef}{Locally Rainbow}
\newcommand{\fp}{\prob{\ppRef{} Path}}
\newcommand{\fw}{\prob{\ppRef{} Walk}}
\newcommand{\col}{\ensuremath{c}}
\newcommand{\colors}{\ensuremath{C}}
\newcommand{\ncols}{\ensuremath{\abs{\colors}}}
\newcommand{\fpt}{fixed-parameter tractable}
\newcommandx{\set}[2][1=1]{\ensuremath{\{#1,\ldots,#2\}}} %
\newcommandx{\tlog}[3][1=,3=]{\log_{#1}^{#3}(#2)}
\newcommandx{\ith}[2][1=th]{#2\nobreakdash-#1}
\newcommand{\appsymb}{$\bigstar$}
\newcommand{\appref}[1]{\hyperref[proof:#1]{\appsymb}}
\newcommand{\appref}[1]{\appsymb}
\newcommand{\appref}[1]{}
\newcommand{\appendixsection}[1]{%
	\ifshort{}%
	\gappto{\appendixProofText}{\section{Additional Material for Section~\ref{#1}}\label{app:#1}}%
	\fi{}%
}
\newcommand{\toappendix}[1]{%
\ifshort{}%
  \gappto{\appendixProofText}
  {{
    #1
  }}
\else{}#1\fi{}%
}
\newcommand{\appendixproof}[2]{%
\ifshort{}
  \gappto{\appendixProofText}
  {
    \subsection[Missing Proof]{Proof of \cref{#1}}\label{proof:#1}
    #2
  }%
\else{}
#2
\fi{}
}
\title{Locally Rainbow Paths}
\author{Till Fluschnik\thanks{Supported by the DFG, project AFFA (BR~5207/1).}$^,$}
\affil{\normalsize Institut für Informatik, TU Clausthal, Germany}
\author{Leon Kellerhals}
\author{Malte Renken}
\affil{%
	\normalsize Technische Universität Berlin, Algorithmics and Computational Complexity, Germany

	\bigskip
	\small\texttt{till.fluschnik@tu-clausthal.de,\,\{leon.kellerhals,\,m.renken\}@tu-berlin.de}
}
\date{}
\begin{document}

\maketitle

\begin{abstract}
	We introduce the algorithmic problem of finding a \emph{locally rainbow} path of length $\ell$ connecting two distinguished vertices $s$ and $t$ in a vertex-colored directed graph.
	Herein, a path is locally rainbow if between any two visits of equally colored vertices, the path traverses consecutively at least $r$ differently colored vertices.
	This problem generalizes the well-known problem of finding a rainbow path.
	It finds natural applications whenever there are different types of resources that must be protected from overuse, such as crop sequence optimization or production process scheduling.
	We show that the problem is computationally intractable even if $r=2$ or if one looks for a locally rainbow among the shortest paths.
	On the positive side, if one looks for a path that takes only a short detour (i.e., it is slightly longer than the shortest path) and if $r$ is small, the problem can be solved efficiently.
	Indeed, the running time of the respective algorithm is near-optimal unless the ETH fails.
\end{abstract}

\section{Introduction}

Many graph connectivity problems are studied with additional constraints to make them applicable to real-world problems.
Typical constraints include forbidden pairs of vertices or edges in the solution or
--- if the graph is colored --- requiring that the solutions are rainbow (no two elements in the solution have the same color) or properly colored (no two adjacent elements have the same color).
Examples for such constraints on problems can be found for spanning trees \cite{broersma1997spanning,darmann2011disjunctive}, Steiner trees \cite{una2016steiner,ferone2022rainbow,haldorsson2018spanning}, but most notably for paths \cite{alon1995color,agrawal2020conflictfree,bhattacharya2010search,bentert2023paths}.

For paths, the properly edge-colored variant forbids two equally colored edges to appear subsequently in the path.
What, to the best of our knowledge, has not been considered yet, is any model that forbids a visited color for the next, say $r$, subsequent vertices of the path.
For example, this allows the modeling of protecting certain types of resources from overuse.
This for example is relevant for crop sequence optimization: here, different colors model different types of crops which, depending on the season, have different impacts on soil health \cite{dury2012cropping,turchetta2022crop,benini2023crop}.
Other applications include holiday trip planning (different colors modeling different types of leisure activities),
production process scheduling (different colors modeling different workers or machines).

More concretely, given a vertex-colored graph, we propose the concept of \emph{\ppref{}} paths, in which every subpath of bounded length is required to carry pairwise distinct colors.
Formally, 
a path or walk~$W=(v_0,v_1,\dots,v_q)$ in~$G$ is \emph{\rppref{}}
if for every~$i\in\nullto{q-r}$,
the vertices~$v_i,v_{i+1},\dots,v_{i+r}$ have pairwise different color
(see \cref{fig:ppref-path} for an illustration).
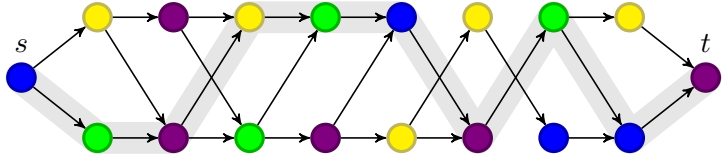
\begin{figure}[tb]
\centering
\ifdefined\externaltikz\tikzsetnextfilename{fishing-tours-example}\fi%
\begin{tikzpicture}[yscale=.8]
		\tikzstyle{anode}=[colornode=blue]
		\tikzstyle{bnode}=[colornode=violet]
		\tikzstyle{cnode}=[colornode=green]
		\tikzstyle{dnode}=[colornode=yellow]
		\tikzstyle{enode}=[colornode=gray]
		\node[anode, label=$s$] at (-4.5,0) (s) {};
		\node[bnode, label=$t$] at(4.5,0) (t) {};
		
		\node[cnode] at (-3.5,-1) (p11) {};
		\node[bnode] at (-2.5,-1) (p12) {};
		\node[cnode] at (-1.5,-1) (p13) {};
		\node[bnode] at (-.5,-1) (p14) {};
		\node[dnode] at (.5,-1) (p15) {};
		\node[bnode] at (1.5,-1) (p16) {};
		\node[anode] at (2.5,-1) (p17) {};
		\node[anode] at (3.5,-1) (p18) {};
		
		\node[dnode] at (-3.5,1) (p21) {};
		\node[bnode] at (-2.5,1) (p22) {};
		\node[dnode] at (-1.5,1) (p23) {};
		\node[cnode] at (-.5,1) (p24) {};
		\node[anode] at (.5,1) (p25) {};
		\node[dnode] at (1.5,1) (p26) {};
		\node[cnode] at (2.5,1) (p27) {};
		\node[dnode] at (3.5,1) (p28) {};

		\tikzstyle{myarc}=[->,>=stealth',semithick]
			\draw[myarc] (s) to (p11);
			\draw[myarc] (s) to (p21);
			\draw[myarc] (p11) to (p12);
			\draw[myarc] (p21) to (p12);
			\draw[myarc] (p21) to (p22);
			\draw[myarc] (p12) to (p13);
			\draw[myarc] (p12) to (p23);
			\draw[myarc] (p22) to (p13);
			\draw[myarc] (p22) to (p23);
			\draw[myarc] (p13) to (p14);
			\draw[myarc] (p13) to (p24);
			\draw[myarc] (p23) to (p24);
			\draw[myarc] (p14) to (p15);
			\draw[myarc] (p14) to (p25);
			\draw[myarc] (p24) to (p25);
			\draw[myarc] (p15) to (p16);
			\draw[myarc] (p15) to (p26);
			\draw[myarc] (p25) to (p16);
			\draw[myarc] (p16) to (p27);
			\draw[myarc] (p26) to (p17);
			\draw[myarc] (p17) to (p18);
			\draw[myarc] (p27) to (p18);
			\draw[myarc] (p27) to (p28);
			\draw[myarc] (p18) to (t);
			\draw[myarc] (p28) to (t);

		\begin{scope}[on background layer]
			\draw[path] (s.center) -- (p11.center) -- (p12.center) -- (p23.center) -- (p24.center) -- (p25.center) -- (p16.center) -- (p27.center) -- (p18.center) -- (t.center);
		\end{scope}
	\end{tikzpicture}
	\caption{
		A digraph whose vertices are colored with four colors,
		with a shortest $2$-\pref{} (but not $3$-\pref{}) \stpath{}.
	}
	\label{fig:ppref-path}	
\end{figure}
We arrive at the following problem description:

\probDef{\fp{}}
{A digraph~$G$, a vertex-coloring~$\col \colon V(G) \to \colors$, two distinct vertices~$s, t \in V(G)$, and two integers~$r,\ell\in\Nzero$.}
{Is there an \rppref{} \stpath{} of length at most~$\ell$ in~$G$?}

We also consider \fw{}, where we look for \stwalks{} with the same constraints.
\fp{} becomes the aforementioned
problem of finding a rainbow path
when $r$~equals the path's length;
on the other extreme, 
if~$r=1$,
then the problem coincides with finding a properly colored \stpath{}.

\paragraph{Our contributions.}
We study the parameterized complexity of \fw{} and \fp{},
with a focus on the \emph{locality} parameter~$r$.
We show that the path variant is \NP-hard for any fixed value of $r \ge 2$ (\cref{thm:ufp}).
In contrast, we are able to design an algorithm with running time~$2^{\bigO(r \log r)} \cdot n^{\bigO(1)}$  for the walk variant (\cref{thm:dft-fpt-r}), with $n$ being the number of vertices.
This result is achieved by developing an ordered version of the \emph{representative families} technique.
We prove this result to be optimal in the sense that no $2^{o(r \log r)} \cdot n^{\bigO(1)}$-time algorithm is possible if the ETH holds (\cref{thm:r-eth}).

Note that an \rppref{} \stwalk{} of length~$\ell$ must always be a path when $\ell \leq \dist(s, t) + r$.
Thus, our algorithm for \fw{} also applies to the path variant when the \emph{detour length} $k \coloneqq \ell - \dist(s, t)$ is small.
Motivated by this observation and the result of~\citet{bezakova2019detours} that finding \stpaths{} of detour length~$k$ is fixed-parameter tractable for the parameter~$k$,
we also investigate this parameter.
While both of our problem variants remain \NP-hard even when~$k = 0$ (\cref{thm:r-eth}),
we are able to give a fixed-parameter tractable algorithm for the combined parameter~$k + r$ (\cref{thm:dfp-above}).

We mention in passing that our results also hold when coloring the edges instead of vertices (and adapting \pprefness{} accordingly).
Furthermore, our (nontrivial) algorithmic results also hold when looking for paths of length \emph{exactly} $\ell$.
\ifshort{}
Proofs of results marked with \appsymb{}
\ifappendix{}are deferred to the appendix.%
\else{}are deferred to the paper's full version.\fi{}\fi{}
 
\paragraph{Related work.}
Finding a rainbow path is known to be \NP-hard \cite{chen2011determining} and fixed-parameter tractable with respect to the number of colors \cite{kowalik2016rainbow,uchizawa2013rainbow}.
While 
finding a properly colored path
is trivially linear-time solvable,
it is less obvious that 
this is also solvable in that time 
in an (undirected) \emph{edge}-colored graph.
This was shown by \citet{szeider2003transitions}.

The field of finding paths of detour length exactly or at least~$k$ is rather active, with the former being easier to tackle than the latter.
\citet{bezakova2019detours} prove both variants to be fixed-parameter tractable, however,
for the latter variant only on undirected graphs.
While there has been some progress on directed graphs, most recently by \citet{jacob2023detours}, 
it is open whether finding a path with detour length at least~one is polynomial-time solvable.

Another closely related and more applied area is that of finding resource-constrained paths.
Here, the graph carries arc (or vertex) weights and the desired \stpath{} must not accumulate more than a given threshold of that weight.
The problem is known to be \NP-hard \cite{HZ80} and 
studied in many variations \cite{ford2022backtracking,Irnich2005,PG13}.
A variation close to our setting introduces so-called \emph{replenishment arcs}, at which one may ``drop off'' the weight accumulated so far \cite{smith2012replenishment}.
This setting is relevant in airline/train crew scheduling (weight represents duty hours, replenishment arcs correspond to crew overnight rests) and aircraft/train routing (weight represents machine hours, replenishment arcs correspond to maintenance events) and also has ties with electric vehicle routing problems (weight represents battery discharge, replenishment arcs correspond to charging events) \cite{adler2016electric,zundorf2014ev}.
Our \prefness{} constraint is similar in that it ``replenishes'' any colors that were visited more than~$r$ steps ago.

\section{Preliminaries}

We denote by~$\ZZ$, $\Nzero$, and $\NN$ the set of all, the non-negative, and the positive integers, respectively.
For~$n, m \in \ZZ$ we denote by~$\xtoy{n}{m} \coloneqq \{ i \in \ZZ \mid n \le i \le m \}$ the set of integers between~$n$ and~$m$ and define~$\oneto{n} \coloneqq \xtoy{1}{n}$.
We denote by~$e \approx 2.718$ Euler's number and by~$\omega < 2.373$ the matrix multiplication constant~\cite{matrixmultiplication}.

Let~$\sigma \coloneqq (a_1, \dots, a_n)$ be a sequence.
We denote by~$\abs{\sigma} \coloneqq n$ its \emph{length}, i.e., the number of elements in~$\sigma$, and also call~$\sigma$ an~\emph{$n$-sequence}.
We write~$x \in \sigma$ if~$x = a_i$ for some~$i \in \oneto{n}$.
If every element in~$\sigma$ is contained in a set~$U$, then we say that~$\sigma$ is a \emph{sequence on} (or \emph{over})~$U$.
A sequence~$\sigma'$ is a \emph{substring} or \emph{consecutive subsequence} of~$\sigma$ if there are~$i < j \in \oneto{n}$ with~$\sigma' = (a_i, a_{i+1}, \dots, a_j)$.
If~$i=1$ or~$j=n$, then we also say that~$\sigma$ \emph{begins with} or \emph{ends on} $\sigma'$, respectively.
If~$\rho = (b_1, \dots, b_m)$ is a sequence, then we denote by~$\sigma \circ \rho \coloneqq (a_1, \dots, a_n, b_1, \dots, b_m)$ the \emph{concatenation} of~$\sigma$ and~$\rho$.
For sequences~$\sigma_1, \dots, \sigma_n$, we denote by~$\bigcirc_{i=1}^n \sigma_i = \sigma_1 \circ \dots \circ \sigma_n$ their consecutive concatenation.

\paragraph{Graph theory.}

For basic notations on (directed) graph theory see, e.g., \cite{Diestel,BangJensenG09}.
A digraph~$G$ is a tuple~$(V,A)$ with~$A\subseteq V\times V$.
In this work,
all digraphs contain no self-loops,
i.e.,
no arcs from the set~$\{(v,v)\mid v\in V\}$.
For a digraph~$G=(V,A)$ we also denote by~$A(G)$ the arc set~$A$ 
and by~$V(G)$ the vertex set~$V$.
We call a digraph~$G$ symmetric if~$(v,w)\in A(G)\iff (w,v)\in A(G)$.
The symmetrization of the digraph~$G$ is the graph~$(V,A(G)\cup\{(v,w)\mid (w,v)\in A(G)\})$.
For two vertices~$v,w\in V(G)$,
a \abwalk{v}{w} $W=(u_0=v,u_1,\dots,u_q=w)$ (of length~$q$) is a sequence of vertices from~$V$ such that~$(u_{i-1},u_{i})\in A(G)$ for every~$i\in\oneto{q}$.
A \abwalk{v}{w} is a path if all vertices are pairwise different.
A digraph~$G$ is weakly connected if in its symmetrization~$G^*$
it holds true that for any~$(v,w)\in V\times V$ there is an~\abpath{v}{w}.
Throughout, 
unless stated otherwise,
we denote by $n \coloneqq \abs{V(G)}$ and~$m \coloneqq \abs{A(G)}$
and
assume the input digraph~$G$ to be weakly connected (and hence~$n \le m-1$).
For a vertex~$v$,
we denote by~$N^-(v)\ceq \{w\in V(G)\mid (w,v)\in A(G)\}$.

\paragraph{Color sequences and compatibility.}
Let~$G$ be a digraph and let~$c \colon V(G) \to \colors$ be a vertex coloring.
Recall that we call a path or walk~$W=(v_0,v_1,\dots,v_q)$ in~$G$
\emph{\rppref{}}
if for every~$i\in\nullto{q-r}$,
the vertices~$v_i,v_{i+1},\dots,v_{i+r}$ have pairwise different color.
The \emph{color sequence} of~$W$ is~$\sigma \coloneqq (\col(v_0), \dots, \col(v_q))$.
We sometimes also call~$\sigma$ \rppref{} if~$W$ is \rppref{}.
For two \rppref{} sequences $\sigma= (a_1, \dots, a_n)$ and~$\rho= (b_1, \dots b_m)$,
we say that~$\sigma$ is \emph{$r$-compatible} to~$\rho$ if a path or walk with color sequence~$\sigma \circ \rho$ is \rppref{}.
\ifshort{}
	Formally, 
	$\sigma$ is $r$-compatible to~$\rho$ if
	$\{a_{\max(1,n-j+1)},\dots,a_n\}\cap \{b_1,\dots,b_{\min(r-j+1,m)}\}=\emptyset$ for all~$j\in\oneto{r}$.
\else{}
	Formally, 
	we define the compatibility of one sequence with another as follows.

	\begin{definition}[$r$-compatible]
		\label{def:compatible}
		Let~$\sigma = (a_1, \dots, a_n)$ and~$\rho = (b_1, \dots b_m)$ be two 
		\rppref{} sequences.
		Then~$\sigma$ is $r$-compatible to~$\rho$ if
		$\{a_{\max(1,n-j+1)},\dots,a_n\}\cap \{b_1,\dots,b_{\min(r-j+1,m)}\}=\emptyset$ for all~$j\in\oneto{r}$.
	\end{definition}
\fi{}

\paragraph{Parameterized complexity.}
Let~$\Sigma$ be a finite alphabet and~$\Sigma^*=\{x\in\Sigma^n\mid n\in\Nzero\}$.
A parameterized problem~$P$ is a subset~$\{(x,k)\mid x\in\Sigma^*,k\in\Nzero\}\subseteq \Sigma^*\times \Nzero$,
where~$k$ is referred to as the parameter.
A parameterized problem~$P$ is \fpt{} (in \FPT{})
if every instance~$(x,k)$ is solvable in~$f(k)\cdot |x|^{O(1)}$ time,
where~$f$ is some computable function only depending on~$k$.
The Exponential Time Hypothesis (ETH)~\cite{ImpagliazzoP01,ImpagliazzoPZ01} states that there
exists some fixed $\varepsilon>0$ such that 
\textsc{3-Sat} cannot be decided in $2^{\varepsilon\cdot n}\cdot (n+m)^{O(1)}$
time on any input with $n$ variables and~$m$ clauses.
For more details,
see~\citet{bluebook}.

\section{Walks}
\label{ft:sec:walks}
\appendixsection{ft:sec:walks}

In this section we study the parameterized complexity of \fw{} with respect to the parameter~$r$.
Note that all results obtained here also hold for finding shortest \rppref{} paths, i.e., paths with length~$\ell = \dist(s, t)$.
We will see that the problem is fixed-parameter tractable, by providing an~$r^{\bigO(r)}\cdot  n^{\bigO(1)}$-time algorithm.
Indeed, although the length of a walk is not bounded in the input size, we can show that the above running time holds even if we ask whether there exists an \rppref{} \stwalk{} of any length.
Finally, we prove a asymptotically tight running time lower bound based on the Exponential Time Hypothesis (ETH).

\subsection{Fixed-Parameter Tractability}
\label{ssec:FPTwalks}

In this section we show the following.

\begin{theorem}
	\label{thm:dft-fpt-r}
	\fw{} can be solved in~$\bigO((r\cdot e)^{\omega r} \cdot \ell m)$ time, where~$m$ is the number of arcs in the input graph and $\omega$ is the matrix multiplication constant.
\end{theorem}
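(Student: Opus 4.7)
The plan is to perform layered dynamic programming. For each layer $i \in \nullto{\ell}$ and each vertex $v$, we maintain a collection $\mathcal{F}_i(v)$ of ordered sequences of length $\min(i, r)$ recording the last $\min(i, r)$ colors visited by some $r$-\ppref{} \abwalk{s}{v} of length $i$. The transition rule is natural: for every arc $(u, v) \in A(G)$ and every signature $\sigma \in \mathcal{F}_i(u)$ with $\col(v) \notin \sigma$, the updated signature (appending $\col(v)$, truncating to length $r$) is added to $\mathcal{F}_{i+1}(v)$, and the existence of an $i \le \ell$ with $\mathcal{F}_i(t) \ne \emptyset$ certifies a \yes{}-instance. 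Since a naive implementation would store up to $\ncols^r$ signatures per vertex, after every layer update we compress $\mathcal{F}_i(v)$ to an $r$-representative subfamily $\hat{\mathcal{F}}_i(v)$: whenever some $\sigma \in \mathcal{F}_i(v)$ is $r$-compatible to a continuation $\rho$, at least one $\hat\sigma \in \hat{\mathcal{F}}_i(v)$ must also be $r$-compatible to $\rho$. A straightforward induction on $i$ shows this replacement preserves the existence of a solution walk.

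To construct $\hat{\mathcal{F}}_i(v)$ efficiently, I would translate the ordered $r$-compatibility relation into set disjointness. To each length-$r$ signature $\sigma = (a_1, \dots, a_r)$ associate the set $\phi(\sigma) \coloneqq \{(a_{r-k+1}, j) \mid k \in \oneto{r},\ j \in \oneto{r-k+1}\}$ of size $\binom{r+1}{2}$ in the universe $\colors \times \oneto{r}$, and to each continuation prefix $\rho = (b_1, \dots, b_r)$ the set $\psi(\rho) \coloneqq \{(b_j, j) \mid j \in \oneto{r}\}$ of size $r$; unfolding the formal compatibility definition shows $\sigma$ is $r$-compatible to $\rho$ if and only if $\phi(\sigma) \cap \psi(\rho) = \emptyset$. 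In this set-disjointness form, the classical matrix-multiplication-based representative-set algorithm of Fomin, Lokshtanov, Panolan, and Saurabh applies and yields $\hat{\mathcal{F}}_i(v)$ of size at most $\binom{\binom{r+1}{2}+r}{r}$, which one bounds by $\bigO((re)^r)$ via the standard estimate $\binom{n}{k} \le (en/k)^k$, with the dominant matrix multiplication costing $\bigO((re)^{\omega r})$ per vertex–layer pair. Boundary layers $i < r$ with shorter signatures are handled by padding with placeholder entries, and summing over the $\ell$ layers and $m$ arc transitions gives the advertised total running time.

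The main obstacle is arranging the encoding so that the positional---and not merely color-wise---information in the ordered $r$-compatibility relation is captured faithfully as set disjointness over a moderately sized universe; the ``triangular'' structure of $\phi$ above, which records each color together with each of its blocking positions measured from the end of $\sigma$, is precisely what makes this an \emph{ordered} extension of the representative-families technique, as opposed to the classical set-based setting where only the unordered multiset of accumulated colors would matter. A secondary technical point is checking that the representative-family compression is monotone under the DP transition so that the induction in the first paragraph goes through; this boils down to verifying that $r$-compatibility behaves well under the ``append-and-truncate'' operation, which follows directly from the definition.
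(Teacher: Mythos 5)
Your proposal is correct and follows essentially the same route as the paper: a length-layered dynamic program storing the last at-most-$r$ colors of each walk, compressed after every layer by translating ordered $r$-compatibility into set disjointness over $\colors \times \oneto{r}$ via exactly the paper's ``triangular'' encoding and then invoking the representative-families algorithm of Fomin et al., with the same $(r\cdot e)^r$ size bound and the same per-vertex/per-layer time accounting summing to $\bigO((r\cdot e)^{\omega r}\cdot \ell m)$. The only deviations are cosmetic: the paper keeps signatures of length $\min\{i+1,r\}$ (so the color of $s$ is not dropped in early layers) and handles short sequences directly in the encoding rather than padding, but neither point affects the argument.
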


Note that this does not yet prove fixed-parameter tractability for \fw{} parameterized by~$r$ as the walk may become very long, i.e., $\ell$ may not be bounded polynomially in the input size or by any function in~$r$.
Later in this section we will show that we can always find a solution whose length is bounded by a function in~$r$; thus proving fixed-parameter tractability with~$r$.

Our algorithm will build a family~$\WWW^p_v$ of \rppref{} length-$p$ \abwalks{s}{v} for every length~$p$ and each vertex~$v$ using dynamic programming in a Dijkstra fashion --- that is, it will extend the \walks{} along the arcs of the graph.
To ensure that the \rpprefness{} is maintained in this process we only need to remember the sequence~$\sigma = (c_1, \dots, c_r)$ at the end of the color sequence of any walk~$W$.
So we want to compute for each~$v \in V(G)$ and~$p \in \oneto{\ell}$ the family
\begin{equation}
	\label{eq:walk-table}
	\WWW^{p}_{v} \coloneqq \left\{ \sigma \coloneqq (a_1, \dots, a_{p'})\ \middle\lvert\ 
	\begin{aligned}
		& p' = \min \{p+1, r\} \text{ and $G$ contains an \rppref{}}\\
		& \text{length-$p$ \abwalk{s}{v} whose color sequence ends on } \sigma
	\end{aligned}
	\right\}.
\end{equation}

Note that trivial dynamic programming on these families would blow up the size of such families to~$\bigO(\ncols^r)$, which is too large for our purposes.
Yet,
$\sigma$ restricts the choice in colors for the next~$r$ vertices on the path: 
The path may only continue with a sequence~$\rho$ of colors to which~$\sigma$ is $r$-compatible.
If however, for some sequence~$\rho$ there are multiple sequences in~$\WWW^p_v$ that are~$r$-compatible to~$\rho$, then it suffices to remember only one of them.
We call the remaining family an \emph{ordered representative} for~$\WWW^p_v$ and define it formally as follows.

\begin{definition}[Ordered representative]
	\label{def:ordered-rep}
	Let~$p, r \in \NN$ with~$p \le r$ and let $\WWW$ be a family of sequences of length at most~$p$.
	A subfamily~$\hatty \WWW$ of~$\WWW$ is an \orepresentative{} for~$\WWW$ (written~$\widehat\WWW \orepr^r \WWW$) if the following holds for every sequence~$\rho$ of length at most~$r$:
	If there exists a~$\sigma \in \WWW$ that is $r$-compatible to~$\rho$, then there exists a~$\hatty\sigma \in \hatty\WWW$ that is $r$-compatible to~$\rho$.
\end{definition}

To compute an \orepresentative{} for~$\WWW^p_v$ we make use of an algorithm by \citet{fomin2016representative} to compute representatives of (unordered) set families.
Let us first define (unordered) representatives for families of sets.

\begin{definition}[Unordered representative]
	\label{def:unordered-rep}
	Let~$\FFF$ be a family of~$p$-element sets and~$q \in \NN$.
	A subfamily~$\widehat \FFF$ of~$\FFF$ is a \emph{$q$-representative} for~$\FFF$ (written~$\widehat \FFF \repr^q \FFF$) if the following holds for every set~$Y$ of size at most~$q$:
	If~$\FFF$ contains a set~$X$ disjoint from~$Y$, then~$\widehat \FFF$ contains a set~$\widehat X$ disjoint from~$Y$.
\end{definition}

While \citet{fomin2016representative} state their results for families of independent sets of a matroid, for our purposes, the simpler definition for set families (a special case) suffices.

\begin{proposition}[\cite{fomin2016representative}]
	\label{thm:unordered-rep}
	There is an algorithm that, given a family~$\FFF$ of~$p$-sets over a universe~$U$
	and an integer~$q \in \Nzero$, computes in time
	\ifshort{}$\else{}\[\fi{}
		\textstyle \bigO\big( \abs{\FFF} \cdot \binom{p+q}{p} p^\omega + \abs{\FFF} \cdot \binom{p+q}{q}^{\omega-1} \big)
	\ifshort{}$\else{}\]\fi{}
	a $q$-representative~$\hatty\FFF$ for~$\FFF$ of size at most~$\binom{p+q}{p}$.
\end{proposition}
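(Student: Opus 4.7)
The plan is to reduce computing a $q$-representative to a linear-algebra task and solve it with fast matrix multiplication. I would fix a field $\FF$ of sufficiently large size (say $\poly(\abs{U}, p+q)$) and pick a $(p+q)\times \abs{U}$ matrix $M$ over $\FF$ such that any $p+q$ columns of $M$ form a nonsingular square submatrix; a deterministic Vandermonde-style construction suffices. For each $p$-set $X\in\FFF$, define $v_X\in\FF^{\binom{p+q}{p}}$ whose coordinate at an unordered $p$-subset $S$ of the $p+q$ rows of $M$ is the determinant of the $p\times p$ submatrix of $M$ with rows $S$ and columns $X$. Dually, for a $q$-set $Y\subseteq U$, define $w_Y\in\FF^{\binom{p+q}{p}}$ using the complementary rows $\overline S$ and columns $Y$, with appropriate signs attached. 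The Cauchy-Binet expansion of the block matrix $[M_X\mid M_Y]$ then yields
\[
\det\bigl[M_X\mid M_Y\bigr]\;=\;\sum_{S}\sgn(S)\cdot(v_X)_S\cdot(w_Y)_S.
\]
If $X\cap Y=\emptyset$, $[M_X\mid M_Y]$ picks $p+q$ distinct columns of $M$ and is nonsingular; otherwise it has a repeated column and determinant $0$. Hence $X\cap Y=\emptyset$ iff $\langle v_X, w_Y\rangle\neq 0$.

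Given this equivalence, any subfamily $\hatty\FFF\subseteq\FFF$ whose vectors $\{v_{\hatty X}:\hatty X\in\hatty\FFF\}$ span the same row space as $\{v_X:X\in\FFF\}$ is automatically a $q$-representative: if some $X\in\FFF$ is disjoint from a given $Y$, then $\langle v_X, w_Y\rangle\neq 0$; expressing $v_X$ as a linear combination of the spanning vectors forces at least one $\hatty X\in\hatty\FFF$ with $\langle v_{\hatty X}, w_Y\rangle\neq 0$, so $\hatty X\cap Y=\emptyset$. It therefore suffices to extract a row basis of the matrix $V\in\FF^{\abs{\FFF}\times\binom{p+q}{p}}$ whose rows are the $v_X$; such a basis has size at most $\rank(V)\le\binom{p+q}{p}$, matching the size bound in the statement.

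For the running time, building all $v_X$ costs $\bigO(\abs{\FFF}\cdot\binom{p+q}{p}\cdot p^\omega)$ since each of the $\binom{p+q}{p}$ coordinates is a $p\times p$ determinant computable in $p^\omega$ time, yielding the first term. Extracting a row basis of the tall matrix $V$ proceeds by block-wise Gaussian elimination: process $V$ in batches of $\binom{p+q}{p}$ rows, each batch requiring $\bigO(\binom{p+q}{p}^\omega)$ work via fast matrix multiplication; summing over the $\abs{\FFF}/\binom{p+q}{p}$ batches yields $\bigO(\abs{\FFF}\cdot\binom{p+q}{p}^{\omega-1})$, matching the second term. The main obstacle is the algebraic identification of disjointness with a non-vanishing inner product -- carefully tracking the sign conventions in the Cauchy-Binet expansion and realizing the column-wise nonsingular matrix $M$ deterministically (a Vandermonde construction suffices). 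Everything else is standard linear algebra combined with careful block-matrix bookkeeping.
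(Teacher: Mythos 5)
The paper offers no proof of this proposition; it is imported verbatim from \citet{fomin2016representative}, and your sketch is precisely the argument given there for the uniform-matroid case: the exterior-algebra embedding $X \mapsto v_X$ of $p$-sets into $\FF^{\binom{p+q}{p}}$ via $p\times p$ minors of a column-wise generic matrix, the generalized Laplace/Cauchy--Binet identity turning disjointness into non-vanishing of a signed inner product, and batch-wise Gaussian elimination to extract a spanning subfamily of size at most $\binom{p+q}{p}$ within the stated time bounds. The only detail left implicit is that the determinant criterion requires $\abs{X}+\abs{Y}=p+q$ exactly, whereas a $q$-representative must also handle sets $Y$ of size strictly less than $q$; this is repaired by the standard device of padding the universe with $q$ fresh dummy elements and padding each $Y$ accordingly, and does not affect correctness or running time.
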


We will first show how to translate a sequence~$\sigma$ into a corresponding (unordered) set so that we can make use of the concept of representatives for unordered set families.
After that, we are ready to devise an algorithm for \cref{thm:dft-fpt-r}.

Consider two \rppref{} walks~$W_\sigma$ and~$W_\rho$ with color sequences~$\sigma = (a_1, \dots, a_p)$ and~$\rho = (b_1, \dots, b_q)$.
We wish to define two functions~$\pi$ and~$\pi'$ that map color sequences to subsets of~$\colors \times \oneto{r}$ such that
$\sigma$ is $r$-compatible to~$\rho$ if and only if~$\pi(\sigma) \cap \pi'(\rho) = \emptyset$.
By definition, 
$\sigma$ is $r$-compatible to~$\rho$ if and only if
$b_i$ does not equal any of the last~$r-i+1$ entries of~$\sigma$.
Define
\begin{equation}
	\label{eq:pi}
	\begin{aligned}
		\pi'(\rho) &\coloneqq \{ (b_i, i) \mid i \in \oneto{\min\{r, q\}} \} \quad \text{and}\\
		\pi(\sigma) &\coloneqq  \{ (a_j, i) \mid i \in \oneto{r},\, j \in \xtoy{p-(r-i)}{p} \cap \NN \}.
	\end{aligned}
\end{equation}
Then,
$(b_i, i) \notin \pi(\sigma)$ if and only if~$b_i$ does not appear among the last~$r-i+1$ entries of~$\sigma$.
In other words, we have the following.

\begin{observation}
	\label{thm:pi-disjoint}
	A sequence~$\sigma$ is~$r$-compatible to a sequence~$\rho$ if and only if~$\pi(\sigma) \cap \pi'(\rho) = \emptyset$.
\end{observation}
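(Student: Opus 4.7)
My plan is to prove the observation by a direct, elementwise unfolding of both sides of the equivalence and verifying that each reduces to the same combinatorial condition on pairs of indices. The maps $\pi$ and $\pi'$ have been engineered precisely for this, so the argument will be careful index-bookkeeping rather than anything deeper.

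First I would unfold the right-hand side of the equivalence. Every element of $\pi'(\rho)$ has the form $(b_i, i)$ with $i \in \oneto{\min\{r, q\}}$, and such an element lies in $\pi(\sigma)$ exactly when there exists $j \in \xtoy{\max(1, p-r+i)}{p}$ with $a_j = b_i$. Hence $\pi(\sigma) \cap \pi'(\rho) \ne \emptyset$ if and only if there exist indices $i \in \oneto{\min\{r,q\}}$ and $j \in \xtoy{\max(1, p-r+i)}{p}$ satisfying $a_j = b_i$.

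Next, I would unfold the left-hand side via the formal characterisation of $r$-compatibility recalled just before the definitions of $\pi, \pi'$: $\sigma$ is not $r$-compatible to $\rho$ iff there exist $j' \in \oneto{r}$, $k \in \xtoy{\max(1, p-j'+1)}{p}$ and $l \in \oneto{\min\{r-j'+1, q\}}$ with $a_k = b_l$. Fixing a candidate witnessing pair $(k, l)$, the set of admissible $j'$ becomes $\xtoy{\max(1, p-k+1)}{\min(r, r-l+1)}$, which is nonempty precisely when $p - k + 1 \le r - l + 1$, i.e., $k \ge p - r + l$. So non-$r$-compatibility is equivalent to the existence of $l \in \oneto{\min\{r,q\}}$ and $k \in \xtoy{\max(1, p-r+l)}{p}$ with $a_k = b_l$ --- which matches the condition from the first step verbatim under the identification $(i, j) = (l, k)$.

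The main (mild) obstacle will be the edge cases where $p < r$ or $q < r$, so that the range clippings with $\max$ and $\min$ become nontrivial; I would handle them uniformly by intersecting all index ranges with $\NN$ throughout, as already done in the definition of $\pi$. Once this bookkeeping is in place, the two conditions coincide literally, and the equivalence follows by contraposition.
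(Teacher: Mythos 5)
Your proposal is correct and follows essentially the same route as the paper, which (in the text preceding the observation) also just unfolds the definition of $r$-compatibility to the condition that each $b_i$ avoids the last $r-i+1$ entries of $\sigma$ and notes that this is exactly the condition $(b_i,i)\notin\pi(\sigma)$. Your version merely makes the quantifier elimination over the index $j$ in the compatibility definition explicit; the bookkeeping checks out.
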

We now have the promised connection between ordered and unordered representatives.

\begin{lemma}
	\label{thm:pi}
	Let~$\WWW$ be a family of $p$-sequences and let~$\FFF \coloneqq \{\pi(\sigma) \mid \sigma \in \WWW\}$.
	If~$\hatty \FFF$ is an~$r$-representative of~$\FFF$, then~$\hatty \WWW \coloneqq \{ \sigma \mid \pi(\sigma) \in \hatty \FFF \}$ is an~\orepresentative{} of~$\WWW$.
\end{lemma}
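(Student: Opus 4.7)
The plan is to directly verify the ordered-representative property of $\hatty\WWW$ by pulling the compatibility question back through the maps $\pi$ and $\pi'$, and then invoking the given set-family representative property of $\hatty\FFF$. The main conceptual content is already packaged in \cref{thm:pi-disjoint}: $r$-compatibility of sequences corresponds exactly to disjointness of their $\pi$- and $\pi'$-images.

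First I would fix an arbitrary sequence $\rho$ of length at most $r$ and a sequence $\sigma \in \WWW$ that is $r$-compatible to $\rho$, aiming to produce some $\hatty\sigma \in \hatty\WWW$ which is also $r$-compatible to $\rho$. Setting $Y \coloneqq \pi'(\rho)$, I would read off from the definition of $\pi'$ in \eqref{eq:pi} that $\abs{Y} = \min\{r,\abs{\rho}\} \le r$, so $Y$ is an admissible test set for $r$-representation. By \cref{thm:pi-disjoint}, the $r$-compatibility of $\sigma$ with $\rho$ translates to $\pi(\sigma) \cap Y = \emptyset$, i.e.\ $\FFF$ contains a set (namely $\pi(\sigma)$) disjoint from $Y$.

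Now I would apply the hypothesis that $\hatty\FFF \repr^r \FFF$: this yields some $\hatty X \in \hatty\FFF$ with $\hatty X \cap Y = \emptyset$. Since $\hatty\FFF \subseteq \FFF$, there exists at least one $\hatty\sigma \in \WWW$ with $\pi(\hatty\sigma) = \hatty X$, which by definition of $\hatty\WWW$ lies in $\hatty\WWW$. Applying \cref{thm:pi-disjoint} once more in the reverse direction, disjointness of $\pi(\hatty\sigma)$ and $\pi'(\rho)$ gives $r$-compatibility of $\hatty\sigma$ with $\rho$, which is precisely the requirement of \cref{def:ordered-rep}. The inclusion $\hatty\WWW \subseteq \WWW$ needed for $\hatty\WWW$ to qualify as a subfamily holds by construction.

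There is no real obstacle here: the lemma is essentially a bookkeeping step whose only subtlety is checking the size bound $\abs{\pi'(\rho)} \le r$ so that the $r$-representative property of $\hatty\FFF$ is applicable, and observing that $\pi$ need not be injective but this is harmless since we only need existence of a preimage in $\hatty\WWW$. The genuine mathematical work was done in setting up $\pi,\pi'$ so that \cref{thm:pi-disjoint} holds; once that is in hand, the proof is a one-line translation in each direction.
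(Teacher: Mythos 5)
Your proof is correct and follows essentially the same route as the paper: translate $r$-compatibility into disjointness via \cref{thm:pi-disjoint}, apply the $r$-representative property of $\hatty\FFF$ to the test set $\pi'(\rho)$, and translate back. The extra checks you make (that $\abs{\pi'(\rho)}\le r$ and that non-injectivity of $\pi$ is harmless) are fine and only make explicit what the paper leaves implicit.
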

\begin{proof}
	Consider a sequence~$\rho = (b_1, \dots, b_r)$.
	Suppose that~$\sigma = (a_1, \dots, a_p) \in \WWW$ is $r$-compatible to~$\rho$.
	Then by \cref{thm:pi-disjoint}, $\pi(\sigma)$ is disjoint from $\pi'(\rho)$.
	Therefore, there is a set~$\pi(\hatty\sigma) \in \hatty\FFF$ which is disjoint from~$\pi'(\rho)$;
	Thus by \cref{thm:pi-disjoint}, $\hatty\sigma$ is~$r$-compatible to~$\rho$.
\end{proof}

Consequently, we can use \cref{thm:unordered-rep} to compute \orepresentative{s}.

\newcommand{\thmorderedrep}{%
	\label{thm:ordered-rep}
	There is an algorithm that, given a family~$\WWW$ of $p$-sequences over a universe~$U$
	and an integer~$r \in \Nzero$,
	computes in time
	\ifshort{}%
		$\bigO \big( \abs{\WWW} \cdot (r\cdot e)^r r^\omega + \abs{\WWW} \cdot (r\cdot e)^{(\omega-1)r} \big)$
	\else{} \[
		\bigO \big( \abs{\WWW} \cdot (r\cdot e)^r r^\omega + \abs{\WWW} \cdot (r\cdot e)^{(\omega-1)r} \big)
	\]
	\fi{}%
	an \orepresentative{} $\hatty\WWW$ of~$\WWW$ of size at most~$(r\cdot e)^r$.
}
\ifshort
\begin{corollary}[\appref{thm:ordered-rep}]
	\thmorderedrep
\end{corollary}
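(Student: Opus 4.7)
The plan is to apply \cref{thm:pi} in the natural way: compute an $r$-representative of the set family $\FFF \coloneqq \{\pi(\sigma) \mid \sigma \in \WWW\}$ via \cref{thm:unordered-rep}, then pull the result back to an ordered representative of $\WWW$.

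First, I would observe that since all sequences in $\WWW$ have the same length $p$, the sets $\pi(\sigma)$ have uniform size, and from \eqref{eq:pi} this size is exactly $\sum_{i=1}^{r}(r-i+1) = r(r+1)/2$ whenever $p \ge r$ (and at most this in general). Setting $p' \coloneqq r(r+1)/2$, and (if $p < r$) padding each $\pi(\sigma)$ with fresh dummy elements drawn from outside $\colors \times \oneto{r}$ so that $\FFF$ becomes a family of $p'$-sets, preserves disjointness with every $\pi'(\rho)$. I would then invoke \cref{thm:unordered-rep} on $\FFF$ with $q' \coloneqq r$, which upper-bounds $|\pi'(\rho)|$ for any sequence $\rho$ of length at most $r$. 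Taking $\hatty\WWW \coloneqq \{\sigma \in \WWW \mid \pi(\sigma) \in \hatty\FFF\}$ and applying \cref{thm:pi} then yields the required ordered representative.

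For the size bound, the estimate $\binom{n}{k} \le (ne/k)^k$ with $n = r(r+3)/2$ and $k = r$ gives $|\hatty\WWW| \le \binom{p' + q'}{p'} = \binom{r(r+3)/2}{r} \le ((r+3)e/2)^r \le (r \cdot e)^r$ for $r \ge 3$, with the remaining small cases verified by hand.

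The only part requiring some care is the running-time bookkeeping. Substituting $p' = r(r+1)/2$ and $q' = r$ into the bound from \cref{thm:unordered-rep} yields $\bigO(|\WWW| \cdot (r \cdot e)^r r^{2\omega} + |\WWW| \cdot (r \cdot e)^{(\omega - 1) r})$, whose first term carries an additional factor of $r^\omega$ compared to the stated claim. Since $\omega > 2$, however, $r^{2\omega} \le (r \cdot e)^{(\omega - 2) r}$ once $r$ exceeds a constant threshold, so this first term is absorbed into $|\WWW| \cdot (r \cdot e)^{(\omega - 1) r}$ (the finitely many bounded-$r$ cases contributing only constants), recovering the claimed running time.
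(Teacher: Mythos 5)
Your proposal is correct and follows essentially the same route as the paper: map $\WWW$ to the set family $\{\pi(\sigma)\}$, invoke \cref{thm:unordered-rep} with set size $p' \le r(r+1)/2$ and $q=r$, pull back via \cref{thm:pi}, and bound $\binom{p'+r}{r}$ by $(r\cdot e)^r$ for $r\ge 3$ with the small cases checked separately. Your extra steps (padding to uniform set size when $p<r$, and explicitly absorbing the $(p')^\omega$ factor into the $(r\cdot e)^{(\omega-1)r}$ term) are harmless refinements of details the paper glosses over.
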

\else
\begin{corollary}
	\thmorderedrep
\end{corollary}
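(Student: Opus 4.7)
The plan is to reduce to the unordered setting by combining the map $\pi$ from equation~\eqref{eq:pi} with \cref{thm:unordered-rep} and \cref{thm:pi}.

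First, I would build the family $\FFF \coloneqq \{\pi(\sigma) \mid \sigma \in \WWW\}$ over the universe $U \times \oneto{r}$. By inspecting the definition of $\pi$, every set $\pi(\sigma)$ contains exactly $s \coloneqq \sum_{i=1}^{r} \abs{\xtoy{p-(r-i)}{p} \cap \NN}$ elements (distinct by the \rpprefness{} of $\sigma$); this quantity depends only on $p$ and $r$ and is upper bounded by $r(r+1)/2$. Thus $\FFF$ is a family of equal-sized $s$-sets, so \cref{thm:unordered-rep} applies with $q \coloneqq r$ and produces an $r$-representative $\hatty\FFF \repr^r \FFF$ of size at most $\binom{s+r}{s}$ within the time bound of that proposition.

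Next, I would set $\hatty\WWW \coloneqq \{\sigma \in \WWW \mid \pi(\sigma) \in \hatty\FFF\}$ and invoke \cref{thm:pi} to conclude that $\hatty\WWW$ is an \orepresentative{} of $\WWW$. Moreover, for the length-$p \leq r$ sequences that are the relevant regime for the application inside \cref{thm:dft-fpt-r}, the map $\pi$ is injective on \rppref{} sequences: the unique pair with second coordinate $r$ determines $a_p$, the pairs with second coordinate $r-1$ together with $a_p$ then determine $a_{p-1}$, and so on. Consequently $\abs{\hatty\WWW} = \abs{\hatty\FFF} \leq \binom{s+r}{s}$.

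Finally, the claimed bounds follow by elementary estimates. Combining $\binom{s+r}{r} \leq \bigl(e(s+r)/r\bigr)^r$ with $s + r \leq r(r+3)/2$ gives $\binom{s+r}{r} \leq (r\cdot e)^r$ for $r \geq 3$ (small $r$ can be absorbed into the constants). Plugging this into the running time of \cref{thm:unordered-rep} and noting that the resulting $s^\omega = \poly(r)$ overhead is dominated either by the $(re)^r\cdot r^\omega$ term or by the $(re)^{(\omega-1)r}$ term yields the stated complexity. The crux of the proof is the counting argument for $\abs{\pi(\sigma)}$; once uniform set size is established, everything else is routine bookkeeping on top of \cref{thm:unordered-rep} and \cref{thm:pi}.
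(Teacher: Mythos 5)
Your proposal is correct and follows essentially the same route as the paper: map each sequence to the fixed-size set $\pi(\sigma)$ (of size at most $r(r+1)/2$), run the unordered representative-set algorithm of \cref{thm:unordered-rep} with $q = r$, transfer back via \cref{thm:pi}, and bound the binomial coefficient by $(r\cdot e)^r$ — your added injectivity argument for $\pi$ on \rppref{} sequences of length $p \le r$ is a welcome justification of $\abs{\hatty\WWW} = \abs{\hatty\FFF}$ that the paper leaves implicit. The only loose end is that the size bound ``at most $(r\cdot e)^r$'' is a non-asymptotic claim, so the cases $r \in \{1,2\}$ cannot simply be absorbed into constants; the paper checks them directly ($\binom{2+1+2}{2} = 10 < (2e)^2$ and $\binom{2}{1} = 2 < e$), and the same one-line check completes your argument.
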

\fi

\appendixproof{thm:ordered-rep}
{
\begin{proof}
	Note that each~$\sigma \in \WWW$
	is of size~$p' \coloneqq \abs{\pi(\sigma)} = \sum_{i=1}^r \abs{\xtoy{p-r+i}{p} \cap \NN}
	= \max\{r-p,0\}\cdot p + \sum_{i=1}^{\min\{p,r\}} i$.
	Hence, we can use \cref{thm:unordered-rep} to compute an~$r$-representative~$\hatty\FFF$ of~$\FFF \coloneqq \{ \pi(\sigma) \mid \sigma \in \WWW \}$, which corresponds to an \orepresentative{} $\hatty\WWW$ by \cref{thm:pi}.
	Observe that~$p' \le \sum_{i=1}^r i = {r(r+1)/2} \eqqcolon r'$.
	If~$r \ge 3$, then~$r'+r \le r^2$, and we have
	\[
		\abs{\hatty\WWW} = \abs{\hatty\FFF} \le \textstyle\binom{p'+r}{p'} \le \binom{r'+r}{r'} = \binom{r'+r}{r} \le \binom{r^2}{r} < (r^2 \cdot e / r)^r = (r\cdot e)^r,
	\]
	using the folklore inequality~$\binom{z}{r} < (ne/r)^r$ for~$1 \le r \le z$ with~$z\in\NN$.
	If~$r = 2$, then~$\binom{r'+r}{r} = \binom{2+1+2}{2} = 10 < (2e)^2$, and if~$r = 1$, then~$\binom{r'+r}{r} = 2 < e$; thus the size bound holds for all values of~$r$.
	The running time bound follows from plugging the size bound into the running time stated in \cref{thm:unordered-rep} and the fact that~$\abs{\FFF} = \abs{\WWW}$.
\end{proof}
}

Ordered representatives are transitive, just like their unordered counterparts \cite{fomin2016representative}.

\newcommand{\thmtransitive}{%
	\label{thm:transitive}
	If~$\hatty\WWW \orepr^r \tildy\WWW$ and~$\tildy\WWW \orepr^r \WWW$, 
	then~$\hatty\WWW \orepr^r \WWW$.
}
\ifshort
\begin{observation}[\appref{thm:transitive}]
	\thmtransitive
\end{observation}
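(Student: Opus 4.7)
The plan is to unfold the definition of ordered representative twice and chain the two given relations. The argument is a direct mimicry of the transitivity proof for ordinary (unordered) representative families, but phrased in terms of $r$-compatibility with an arbitrary test sequence $\rho$ of length at most $r$ rather than disjointness with a test set.

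Concretely, I would fix an arbitrary sequence $\rho$ of length at most $r$ and suppose there exists $\sigma \in \WWW$ that is $r$-compatible to $\rho$; this is the hypothesis we need to verify in order to conclude $\hatty\WWW \orepr^r \WWW$. First, I would invoke the assumption $\tildy\WWW \orepr^r \WWW$ on the pair $(\sigma, \rho)$ to obtain some $\tildy\sigma \in \tildy\WWW$ that is $r$-compatible to $\rho$. Next, I would apply $\hatty\WWW \orepr^r \tildy\WWW$ to the pair $(\tildy\sigma, \rho)$ to obtain $\hatty\sigma \in \hatty\WWW$ that is $r$-compatible to $\rho$. This witnesses the required condition of \cref{def:ordered-rep} for $\hatty\WWW \orepr^r \WWW$.

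There is no real obstacle here: the statement is a purely formal chaining of the defining implication, and nothing about the internal structure of $r$-compatibility (as spelled out by \cref{thm:pi-disjoint} or the compatibility formula) needs to be opened up. The one small thing to be careful about is that the same bound on $|\rho|$ (namely $|\rho| \le r$) is used in both applications of the definition, which is immediate since $\rho$ is fixed throughout. Hence the entire proof is two lines.
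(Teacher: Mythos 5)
Your proof is correct and is essentially identical to the paper's own argument: fix a test sequence $\rho$ of length at most $r$, chain the two ordered-representative relations to pass from $\sigma \in \WWW$ to $\tildy\sigma \in \tildy\WWW$ to $\hatty\sigma \in \hatty\WWW$, all $r$-compatible to $\rho$. Nothing more is needed.
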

\else
\begin{observation}
	\thmtransitive
\end{observation}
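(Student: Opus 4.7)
The plan is a straightforward two-step chase through the definition of ordered representative. Given arbitrary $\hatty\WWW \orepr^r \tildy\WWW$ and $\tildy\WWW \orepr^r \WWW$, I want to verify the defining condition of $\hatty\WWW \orepr^r \WWW$, namely that for every sequence $\rho$ of length at most $r$, the existence of some $\sigma \in \WWW$ that is $r$-compatible to $\rho$ forces the existence of some $\hatty\sigma \in \hatty\WWW$ that is $r$-compatible to $\rho$.

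First I would fix such a $\rho$ and such a $\sigma \in \WWW$. Applying $\tildy\WWW \orepr^r \WWW$ to $\rho$ produces a sequence $\tildy\sigma \in \tildy\WWW$ that is $r$-compatible to $\rho$. Then applying $\hatty\WWW \orepr^r \tildy\WWW$ with $\rho$ (which still has length at most $r$) and witness $\tildy\sigma$ produces a sequence $\hatty\sigma \in \hatty\WWW$ that is $r$-compatible to $\rho$. This chain is exactly what the defining condition requires.

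The only remaining bookkeeping is the subfamily condition implicit in \cref{def:ordered-rep}: since $\hatty\WWW$ is a subfamily of $\tildy\WWW$ and $\tildy\WWW$ a subfamily of $\WWW$, we automatically get $\hatty\WWW \subseteq \WWW$. There is no real obstacle here beyond transparently following the definitions; the argument is the ordered counterpart of the analogous well-known transitivity of unordered representative families, and the crucial point is just that the definition of $\orepr^r$ universally quantifies over all $\rho$ of length at most $r$, so the same $\rho$ can be fed into both given relations in turn.
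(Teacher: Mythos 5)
Your proposal is correct and follows exactly the same two-step chase as the paper's proof: apply $\tildy\WWW \orepr^r \WWW$ to obtain $\tildy\sigma$, then apply $\hatty\WWW \orepr^r \tildy\WWW$ with the same $\rho$ to obtain $\hatty\sigma$. The additional remark on the subfamily inclusion is fine bookkeeping that the paper leaves implicit.
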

\fi

\appendixproof{thm:transitive}{
\begin{proof}
	Consider a $q$-sequence~$\rho$ such that some~$\sigma \in \WWW$ is $r$-compatible to~$\rho$.
	By the definition of \orepresentative{s} and since~$\tildy\WWW \orepr^r \WWW$, 
	there exists a~$\tildy\sigma \in \tildy\WWW$ that is $r$-compatible to~$\rho$.
	Then there also exists a~$\hatty\sigma \in \hatty\WWW$ that is $r$-compatible to~$\rho$
	since~$\hatty\WWW \orepr^r \tildy\WWW$.
\end{proof}
}

With a way to efficiently compute \orepresentative{s} at hand, 
we can compute an \rppref{} \stwalk{} of length~$\ell$ with the following routine.
Recall that we are given a graph~$G$ with two terminals~$s$ and~$t$, 
a coloring~$\col \colon V(G) \to C$, 
and two integers~$r$ and~$\ell$ as input.

\begin{alg}\label{alg:fpt-r}
	Set~$\hatty\WWW^0_{s} \ceq \{ (\col(s)) \}$
	and for all~$v \in V(G) \setminus \{s\}$, 
	set~$\hatty\WWW^0_{v} \ceq \emptyset$.
	Now,
	for each~$p = 1, 2, \dots, \ell$ 
	compute for all~$v \in V(G)$ the set
	\begin{equation}
		\label{eq:recurrence}
		\NNN^p_{v} \ceq
		\begin{cases}
			\displaystyle\bigcup_{u \in N^-(v)}
			\left\{(a_1, \dots, a_{p+1}) \ \Big\vert\
				\begin{aligned}
					& (a_1, \dots, a_p) \in \hatty\WWW^{p-1}_u \text{ and }\\
					& \col(v) = a_{p+1} \notin \{a_1, \dots, a_p\}
				\end{aligned}
			\right\}
			& \text{ if } p < r,\\
			\displaystyle\bigcup_{u \in N^-(v)}
			\left\{(a_2, \dots, a_{r+1}) \ \Big\vert\
				\begin{aligned}
					& (a_1, \dots, a_r) \in \hatty\WWW^{p-1}_u \text{ and }\\
					& \col(v) = a_{r+1} \notin \{a_1, \dots, a_r\}
				\end{aligned}
			 \right\}
			& \text{ if } p \ge r,
		\end{cases}
	\end{equation}
	and an \orepresentative{}~$\hatty\WWW^p_{v} \orepr^r \NNN^p_{v}$.
	Return \yes{} if and only if~$\hatty\WWW^{q}_t \ne \emptyset$ for some $q \in \oneto{\ell}$.
\end{alg}

Let us show that \cref{alg:fpt-r} indeed computes representatives of the family~$\WWW^p_{v}$ as defined in equation~\eqref{eq:walk-table}, and hence, is correct.

\begin{lemma}
	\label{lem:fpt-r-repr}
	For each~$v \in V(G)$ and~$p \in \nullto{\ell}$,
	the family $\hatty \WWW^p_{v}$ computed by \cref{alg:fpt-r}
	contains at most~$(r\cdot e)^r$ sets
	and is an \orepresentative{} of $\WWW^p_{v}$ as defined in~\eqref{eq:walk-table}.
\end{lemma}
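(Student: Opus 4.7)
The plan is to prove both statements simultaneously by induction on $p$. The base case $p = 0$ is immediate: by the initialization, $\hatty\WWW^0_s = \{(\col(s))\}$ and $\hatty\WWW^0_v = \emptyset$ for $v \neq s$, which coincide with $\WWW^0_v$ as defined in~\eqref{eq:walk-table}, and the size bound is trivially satisfied.

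For the inductive step, suppose $\hatty\WWW^{p-1}_u \orepr^r \WWW^{p-1}_u$ for every $u \in V(G)$. I would first define $\MMM^p_v$ by the same rule as $\NNN^p_v$ in~\eqref{eq:recurrence} but replacing each $\hatty\WWW^{p-1}_u$ with $\WWW^{p-1}_u$. A direct unrolling of~\eqref{eq:walk-table} then gives $\MMM^p_v = \WWW^p_v$: every \rppref{} length-$p$ \abwalk{s}{v} decomposes uniquely into a length-$(p-1)$ \abwalk{s}{u} for some $u \in N^-(v)$ followed by the arc $(u,v)$, and the color-sequence bookkeeping matches on both sides.

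The main step is to show $\NNN^p_v \orepr^r \WWW^p_v$. Fix a sequence $\rho$ of length at most $r$ and suppose some $\sigma \in \WWW^p_v = \MMM^p_v$ is $r$-compatible to $\rho$. Then $\sigma$ arises from some $\tau \in \WWW^{p-1}_u$ with $u \in N^-(v)$ by appending $\col(v)$ (and dropping $\tau$'s first entry if $p \geq r$). Set $\rho^* \coloneqq (\col(v)) \circ \rho$. I would verify from the formal definition of $r$-compatibility that $\tau$ is $r$-compatible to $\rho^*$: the concatenation $\tau \circ \rho^*$ is the same color sequence that $\sigma \circ \rho$ represents (the possibly truncated leading entry of $\tau$ does not enter the $r$-compatibility conditions, which only examine the last $r$ entries of the left sequence), and the latter is \rppref{} by assumption. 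By the inductive hypothesis, there exists $\hatty\tau \in \hatty\WWW^{p-1}_u$ that is $r$-compatible to $\rho^*$. Since restricting $\rho^*$ to its one-element prefix $(\col(v))$ only shrinks the right-hand set in the compatibility condition, $\hatty\tau$ is also $r$-compatible to $(\col(v))$; equivalently, $\col(v)$ does not occur in the relevant trailing portion of $\hatty\tau$. Hence the recurrence~\eqref{eq:recurrence} applied to $\hatty\tau$ produces a sequence $\hatty\sigma \in \NNN^p_v$, and an analogous unrolling shows that $\hatty\sigma$ is $r$-compatible to $\rho$.

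With $\NNN^p_v \orepr^r \WWW^p_v$ established and $\hatty\WWW^p_v \orepr^r \NNN^p_v$ guaranteed by the invocation of \cref{thm:ordered-rep} inside \cref{alg:fpt-r}, transitivity (\cref{thm:transitive}) yields $\hatty\WWW^p_v \orepr^r \WWW^p_v$, closing the induction. The size bound $\abs{\hatty\WWW^p_v} \leq (r\cdot e)^r$ is inherited directly from \cref{thm:ordered-rep}. The main obstacle is the bookkeeping in the compatibility argument around the case split $p < r$ versus $p \geq r$: in the latter case one must confirm that dropping the first entry of $\tau$ in passing from $\WWW^{p-1}_u$ to $\NNN^p_v$ is harmless, which reduces to the observation that $r$-compatibility of an $n$-sequence to an $m$-sequence with $m \leq r$ depends only on the last $r$ entries of the left sequence.
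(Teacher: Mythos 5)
Your proposal is correct and follows essentially the same route as the paper's proof: induction on $p$, decomposing a witnessing walk (equivalently, unrolling the recurrence), prepending $\col(v)$ to $\rho$ to invoke the induction hypothesis at the in-neighbor $u$, rebuilding a compatible $\hatty\sigma \in \NNN^p_v$, and finishing with transitivity of ordered representatives plus the size bound from the representative-computation corollary. The only cosmetic difference is that you argue on color sequences via the auxiliary family $\MMM^p_v$ rather than on explicit walks, and note (correctly) that the $j=1$ compatibility condition for $\hatty\sigma$ still relies on the original $\sigma$ being $r$-compatible to $\rho$.
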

\begin{proof}
	Our proof is by induction.
	By the initial assignments of $\hatty \WWW^0_{v}$, the statement holds for~$p=0$.
	Now, fix some~$p \in \oneto{\ell}$ and assume that~$\hatty\WWW^{p-1}_u \orepr^r \WWW^{p-1}_u$ for all~$u \in V(G)$.

	Let~$\rho = (b_1, \dots, b_q)$ be a color sequence with~$q \le r$ and let~$v \in V(G)$.
	Suppose that there exists a sequence~$\sigma \in \WWW^p_v$ that is $r$-compatible to~$\rho$.
	We claim that
	there exists a~$\hatty\sigma \in \hatty\WWW^p_v$ that is $r$-compatible to~$\rho$; 
	thus proving that~$\hatty\WWW^p_v \orepr^r \WWW^p_v$.
	As the bound on~$\abs{\hatty\WWW^p_v}$ then follows from \cref{thm:ordered-rep}, we are done once the claim is proven.
	We will prove the claim first for~$p \ge r$ and afterwards for~$p < r$.

	If~$p \ge r$, then~$\sigma$ is an $r$-sequence~$(a_2, \dots, a_{r+1})$
	and there exists an \rppref{} length-$p$ \abwalk{s}{v} $W$ whose color sequence ends on~$\sigma$.
	Let~$a_1$ be the color that~$W$ visits just before visiting the colors in~$\sigma$, that is, the color sequence of~$W$ ends on~$(a_1, \dots, a_{r+1})$.
	Further, let~$u$ be the penultimate vertex visited by~$W$
	and let~$W'$ be the length-$(p-1)$ \subw{} of~$W$ ending on~$u$.
	Then the color sequence of~$W'$ ends on~$\sigma' \coloneqq (a_1, \dots a_r)$.
	Let~$\rho' \coloneqq (a_{r+1}) \circ \rho$.
	Observe that~$\sigma'$ is $r$-compatible to~$\rho'$, due to $\sigma$~being $r$-compatible to~$\rho$ and $W'$ being \rppref{}.
	Thus, by our induction hypothesis and the definition of \orepresentative{s}, there exists a sequence~$\hatty\sigma' \in \hatty\WWW^{p-1}_u$ that is $r$-compatible to~$\rho'$.
	Let~$\hatty W'$ be the \abwalk{s}{u} corresponding to~$\hatty\sigma'$ and let~$\hatty\sigma' \coloneqq (\hatty a'_1, \dots, \hatty a'_r)$.
	Define~$\hatty\sigma \coloneqq (\hatty a'_2, \dots, \hatty a'_r) \circ (a_{r+1})$.
	As~$u \in N^-(v)$ and~$\col(v) = a_{r+1} \notin \{\hatty a'_1, \dots, \hatty a'_r\}$, 
	we have that~$\hatty\sigma \in \NNN^p_v$.
	Finally, observe that~$\hatty\sigma$ is $r$-compatible with~$\rho$.
	Thus, $\NNN^p_v \orepr^r \WWW^p_v$.
	Since $\hatty\WWW^p_v \orepr^r \NNN^p_v$, the claim follows for $p \ge r$ due to the transitivity of \orepresentative{s} (\cref{thm:transitive}).

	If~$p < r$, then~$\sigma$ is a $(p+1)$-sequence~$(a_1, \dots, a_{p+1})$
	and there exists an \rppref{} length-$p$ \abwalk{s}{v} $W$ whose color sequence ends on~$\sigma$.
	Indeed, $\sigma$ is the entire color sequence of~$W$.
	This case is similar to the above, but there is no color that is visited before~$\sigma$ in~$W$.
	Hence, in this case, $\sigma' \coloneqq (a_1, \dots, a_p)$
	and~$\hatty\sigma \coloneqq \hatty\sigma \circ (a_{p+1})$.
	The remainder of the proof is the same.
\end{proof}

Next, 
we show that the algorithm runs in the claimed running time.%
\ifshort{}
\cref{thm:dft-fpt-r} then follows from \cref{lem:fpt-r-repr,lem:fpt-r-time}.
\fi{}

\newcommand{\lemfptrtime}{%
	\label{lem:fpt-r-time}
	\cref{alg:fpt-r} runs in $\bigO((r\cdot e)^{\omega r} \cdot \ell m)$ time on~$m$-arc digraphs.
}
\ifshort
\begin{lemma}[\appref{lem:fpt-r-time}]
	\lemfptrtime
\end{lemma}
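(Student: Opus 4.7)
The plan is to bound the work done in each of the $\ell$ iterations of the outer loop in \cref{alg:fpt-r} and then sum. Within a single iteration (for fixed $p$), we process every vertex $v \in V(G)$, and the processing for $v$ consists of two tasks: (i)~building $\NNN^p_v$ by scanning the families $\hatty\WWW^{p-1}_u$ of all in-neighbors $u \in N^-(v)$, and (ii)~running the algorithm from \cref{thm:ordered-rep} on $\NNN^p_v$ to obtain $\hatty\WWW^p_v$.

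First I would invoke \cref{lem:fpt-r-repr} to get the size invariant $\abs{\hatty\WWW^{p-1}_u} \le (r \cdot e)^r$. This immediately yields $\abs{\NNN^p_v} \le \abs{N^-(v)} \cdot (r \cdot e)^r$, and task~(i) costs $\bigO(\abs{N^-(v)} \cdot (r \cdot e)^r \cdot r)$: for each candidate sequence one spends $\bigO(r)$ time to check whether $\col(v)$ appears among its last $r$ entries and to emit the extended sequence.

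Task~(ii) is the dominant contributor. Plugging $\abs{\NNN^p_v} \le \abs{N^-(v)} \cdot (r \cdot e)^r$ into the running time of \cref{thm:ordered-rep} gives
\[
  \bigO\big( \abs{N^-(v)} \cdot (r \cdot e)^{2r} r^\omega + \abs{N^-(v)} \cdot (r \cdot e)^{\omega r} \big).
\]
A short calculation shows that the second summand dominates: for $r \ge \omega/(\omega-2)$ (using $\omega > 2$) one has $2r + \omega \le \omega r$ and thus $(r \cdot e)^{2r} r^\omega \le (r \cdot e)^{2r + \omega} \le (r \cdot e)^{\omega r}$; for the finitely many smaller values of $r$, the ratio $r^\omega / (r \cdot e)^{(\omega-2) r}$ is bounded by a universal constant, which is absorbed into the $\bigO$-notation. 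Hence task~(ii) takes $\bigO(\abs{N^-(v)} \cdot (r \cdot e)^{\omega r})$ time, which also dominates task~(i).

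Summing over all $v \in V(G)$ via $\sum_{v \in V(G)} \abs{N^-(v)} = m$ bounds a single iteration by $\bigO(m \cdot (r \cdot e)^{\omega r})$, and multiplying by the $\ell$ iterations of the outer loop yields the claimed $\bigO((r \cdot e)^{\omega r} \cdot \ell m)$. The $\bigO(n)$ cost of the initial assignments of $\hatty\WWW^0_v$ is absorbed since $n \le m+1$ by weak connectivity. The only mildly subtle point is the comparison of the two terms produced by \cref{thm:ordered-rep}; beyond that, the proof is routine accounting for the dynamic program.
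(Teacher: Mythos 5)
Your proof is correct and follows essentially the same route as the paper's: bound $\abs{\NNN^p_v}$ by $\deg^-(v)\cdot(r\cdot e)^r$ via the size guarantee of the representative families, plug this into the running time of \cref{thm:ordered-rep}, observe that the $(r\cdot e)^{\omega r}$ term dominates, and sum over all $v$ and $p$ using the handshaking lemma. Your explicit justification that the second summand dominates (which the paper merely asserts) is a welcome extra detail, but not a different approach.
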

\else
\begin{lemma}
	\lemfptrtime
\end{lemma}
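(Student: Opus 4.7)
The plan is to account for the cost of \cref{alg:fpt-r} level by level, since the algorithm proceeds through $\ell$ dynamic-programming rounds. First I would invoke \cref{lem:fpt-r-repr} to obtain $|\hatty\WWW^{p-1}_u| \le (r\cdot e)^r$ for every vertex $u$ and every earlier level $p-1$. Each sequence in $\hatty\WWW^{p-1}_u$ contributes at most one sequence to $\NNN^p_v$ per in-neighbor relation $u \in N^-(v)$, hence $|\NNN^p_v| \le |N^-(v)| \cdot (r\cdot e)^r$, and $\NNN^p_v$ can be assembled in $\bigO(r \cdot |N^-(v)| \cdot (r\cdot e)^r)$ time (the extra factor $r$ covers the test $\col(v) \notin \{a_1, \dots, a_r\}$ that validates each extension according to \eqref{eq:recurrence}).

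Next I would apply \cref{thm:ordered-rep} to each $\NNN^p_v$ to obtain $\hatty\WWW^p_v$ at cost
\[
\bigO\bigl(|\NNN^p_v| \cdot (r\cdot e)^r \cdot r^\omega + |\NNN^p_v| \cdot (r\cdot e)^{(\omega-1)r}\bigr).
\]
Summing over all $v$ at a fixed level $p$ and using the handshake identity $\sum_v |N^-(v)| = m$, the per-level cost collapses to
\[
\bigO\bigl(m \cdot (r\cdot e)^{2r} \cdot r^\omega + m \cdot (r\cdot e)^{\omega r}\bigr),
\]
and because $\omega > 2$ the second summand dominates, giving $\bigO(m \cdot (r\cdot e)^{\omega r})$ time per level. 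Multiplying by the $\ell$ levels yields the claimed $\bigO((r\cdot e)^{\omega r} \cdot \ell m)$ bound; the initialization of the $\hatty\WWW^0_v$ and the final check for a nonempty $\hatty\WWW^q_t$ are easily absorbed.

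I do not anticipate any real obstacle: the argument is a mechanical combination of the size bound from \cref{lem:fpt-r-repr} with the representative-computation cost from \cref{thm:ordered-rep}, collapsed via the standard handshake identity. The only point that deserves explicit mention is the inequality $(r\cdot e)^{2r} \cdot r^\omega = \bigO\bigl((r\cdot e)^{\omega r}\bigr)$, which holds because $\omega > 2$, so the two terms in the per-level estimate collapse cleanly into a single dominant summand.
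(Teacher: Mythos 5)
Your proposal is correct and follows essentially the same route as the paper's proof: bound $\abs{\NNN^p_v}$ by $\deg^-(v)\cdot (r\cdot e)^r$ via the size guarantee of the ordered representatives, apply the cost of \cref{thm:ordered-rep}, absorb the $(r\cdot e)^{2r} r^\omega$ term into $(r\cdot e)^{\omega r}$ using $\omega > 2$, and sum over all vertices and levels with the handshaking lemma. The only cosmetic difference is your explicit factor $r$ for the membership test, which is harmlessly absorbed.
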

\fi
\appendixproof{lem:fpt-r-time}
{
\begin{proof}
	Let~$v \in V(G)$ and~$p \in \oneto{\ell}$.
	Note that all sequences
	in~$\NNN^p_{v}$ have length~$\min\{p+1, r\}$.
	To compute~$\NNN^p_{v}$, we need to consider the sets~$\hatty\WWW^{p-1}_{u}$ for each in-neighbor~$u$ of~$v$.
	By \cref{thm:ordered-rep}, $\hatty\WWW^{p-1}_{u}$ contains at most~$(r\cdot e)^r$ sets.
	Hence, $\NNN^p_{v}$ contains at most~$\deg^-(v) \cdot (r\cdot e)^r$ sets, and its computation time is upper bounded by the same term.
	Thus the time to compute~$\hatty\WWW^p_{v}$ is
	\begin{align*}
		& \bigO \big( \abs{\NNN^p_v} (r\cdot e)^r r^\omega + \abs{\NNN^p_v} (r\cdot e)^{(\omega-1)r} \big)\\
		\subseteq~& \bigO \big( \deg^-(v) (r\cdot e)^{2r} r^\omega + \deg^-(v) (r\cdot e)^{\omega r} \big)\\
		\subseteq~& \bigO \big( \deg^-(v) (r\cdot e)^{\omega r} \big).
	\end{align*}
	Doing this for each~$v \in V(G)$ and each~$p \in \oneto{\ell}$ 
	and assuming that our graph is connected
	yields the claimed running time 
	by the handshaking lemma.
\end{proof}
}

\ifshort{}\else{}
Our main theorem now trivially follows from \cref{lem:fpt-r-repr,lem:fpt-r-time}.
\fi

\appendixproof{thm:dft-fpt-r}{
\begin{proof}[Proof of \cref{thm:dft-fpt-r}]
	By \cref{lem:fpt-r-repr}, \cref{alg:fpt-r} correctly computes the \orepresentative~$\hatty \WWW^p_{t}$ for the family~$\WWW^p_{t}$ in~\eqref{eq:path-table} for each~$p \in \oneto{\ell}$.
	Hence, the algorithm returns \yes{} if and only if there exists an \rppref{} \stwalk{} of length~$q \le \ell$.
	The running time follows from \cref{lem:fpt-r-time}.
\end{proof}
}

\paragraph{Bounding the length of the walk.}
Note that the length of a walk may be significantly longer than the running time of the above algorithm for \fw{}.
Hence, \cref{thm:dft-fpt-r} does not imply fixed-parameter tractability for the problem of finding an \rppref{} \stwalk{} of \emph{any} length.
We can however show that we can always find an \rppref{} \stwalk{} in which the number of visits to each vertex is bounded by a function in~$r$.
The idea is as follows.
Consider a vertex~$v$ that is visited multiple times by an \rppref{} walk~$W$.
Relevant for us are the consecutive subsequences of length~$r-1$ of the color sequence~$\tau$ of~$W$ that appear immediately before and after each visit of~$v$.
Consider the~$i$-th visit and let~$\sigma_i$ and~$\rho_i$ be the consecutive length-$(r-1)$ subsequences of~$\tau$ before and after the~$i$-th visit of~$v$,
that is, the sequence~$\sigma_i \circ \col(v) \circ \rho_i$ is a consecutive subsequence of~$\tau$.
Now, if for a later visit, say the~$j$-th visit of~$v$, we have that~$\sigma_i \circ \col(v)$ is~$r$-compatible to~$\rho_j$, then we can \emph{skip} all vertices between~$i$ and~$j$.
We will show that the number of visits to~$v$ is bounded by a function in~$r$, or else we can skip visits.
\ifshort{}
For this, we will make use of a skewed variant \cite{frankl1982extremal} of the seminal Bollob{\'a}s' Two Families Theorem~\cite{bollobas1965}.
Details are deferred to \cref{app:sec:short-walk}.
\else{}
For this, we will make use of the following variant of the seminal Bollob{\'a}s' Two Families Theorem~\cite{bollobas1965}.
\fi{}

\toappendix{
\subsection{Bounding the Length of the Walk}
\label{app:sec:short-walk}
\begin{proposition}[\citet{frankl1982extremal}]
	\label{thm:bollobas}
	Let~$A_1, \dots, A_m$ and~$B_1, \dots, B_m$ be finite sets such that
	for every~$i,j\in\oneto{m}$,
	$A_i \cap B_i = \emptyset$  
	and~$A_i \cap B_j \ne \emptyset$ if~$i < j$.
	If~$\abs{A_i} \le p$ and~$\abs{B_i} \le q$ for every~$i\in\oneto{m}$,
	then~$m \le \binom{p+q}{p}$.
\end{proposition}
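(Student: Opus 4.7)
The plan is to prove the bound via the wedge-product (exterior algebra) technique, following Lov\'asz's approach to the symmetric Bollob\'as theorem but exploiting the skew intersection pattern through a downward induction. First I would normalize by padding each $A_i$ up to size exactly $p$ and each $B_i$ up to size exactly $q$ using fresh ``ghost'' elements that are private to one set each; this preserves both $A_i \cap B_i = \emptyset$ and $A_i \cap B_j \ne \emptyset$ for $i < j$. Then I would embed the (enlarged) universe into $V \ceq \RR^{p+q}$ in general position, say along the moment curve, so that any $p+q$ of the chosen vectors $v_x$ are linearly independent.

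Next I would associate to each index $i$ the multivectors
\[ \alpha_i \ceq \bigwedge_{x \in A_i} v_x \in \Lambda^p V \quad \text{and} \quad \beta_i \ceq \bigwedge_{x \in B_i} v_x \in \Lambda^q V, \]
and observe that the wedge $\alpha_i \wedge \beta_j$, which lives in the one-dimensional top space $\Lambda^{p+q} V$, vanishes if and only if $A_i \cap B_j \ne \emptyset$: a shared element produces a repeated factor, whereas disjoint $A_i$ and $B_j$ contribute $p+q$ distinct generic vectors whose joint wedge is nonzero. Under the hypotheses of the theorem this yields $\alpha_i \wedge \beta_i \ne 0$ for every $i$ and $\alpha_i \wedge \beta_j = 0$ whenever $i < j$, an upper-triangular vanishing pattern.

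The main step is then to derive linear independence of $\alpha_1, \dots, \alpha_m$ in $\Lambda^p V$ from this pattern. Given a putative dependence $\sum_i c_i \alpha_i = 0$, I would wedge on the right with $\beta_m$: every term with $i < m$ dies by the skew condition, so only the diagonal term survives and forces $c_m = 0$. Iterating with $\beta_{m-1}, \beta_{m-2}, \dots, \beta_1$ in turn, at stage $j$ the terms with $i < j$ vanish by hypothesis while the terms with $i > j$ vanish because those coefficients are already known to be zero, leaving $c_j (\alpha_j \wedge \beta_j) = 0$ and hence $c_j = 0$. The conclusion $m \le \dim \Lambda^p V = \binom{p+q}{p}$ is then immediate. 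The step most in need of care is the non-vanishing of $\alpha_i \wedge \beta_i$: it relies on the general-position assumption certifying that a wedge of $p+q$ distinct generic vectors in $\RR^{p+q}$ is a nonzero top form (a Vandermonde-type determinant). Once that is granted, the downward induction is what turns the one-sided intersection hypothesis into the full $\binom{p+q}{p}$ bound.
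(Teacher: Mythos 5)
Your argument is correct, but note that the paper itself gives no proof of this statement: it is imported verbatim as a known result of Frankl (the skew version of Bollob\'as' two-families theorem), so there is no in-paper argument to compare against. What you have reconstructed is essentially the standard exterior-algebra proof of the skew inequality, and it is sound: the padding step is legitimate because fresh elements private to a single set can neither create an intersection inside a pair $(A_i,B_i)$ nor destroy a nonempty intersection $A_i\cap B_j$; placing the padded universe on the moment curve makes any $p+q$ of the vectors independent, so $\alpha_i\wedge\beta_j\neq 0$ exactly when $A_i\cap B_j=\emptyset$; and the triangular vanishing pattern ($\alpha_i\wedge\beta_j=0$ for $i<j$, $\alpha_i\wedge\beta_i\neq 0$) together with the downward elimination $c_m=0$, then $c_{m-1}=0$, and so on, correctly yields linear independence of $\alpha_1,\dots,\alpha_m$ in $\Lambda^p\RR^{p+q}$ and hence $m\le\binom{p+q}{p}$. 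The one point you flag as delicate, the non-vanishing of $\alpha_i\wedge\beta_i$, is indeed exactly where general position is used, and your Vandermonde justification handles it. The only caveat worth recording is that the skew hypothesis gives no control over $\alpha_i\wedge\beta_j$ for $i>j$, which is why the elimination must proceed from $c_m$ downward, using previously established zero coefficients to discard those terms -- your write-up gets this right, and it is the step that distinguishes the skew version from the symmetric one.
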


Using \cref{thm:pi-disjoint}, we can prove an analog for sequences.

\begin{corollary}
	\label{thm:ordered-bollobas}
	Let~$\sigma_1, \dots, \sigma_m$ and~$\rho_1, \dots, \rho_m$ be sequences such that 
	for every~$i,j\in\oneto{m}$,
	$\sigma_i$ is $r$-compatible to~$\rho_i$
	but~$\sigma_i$ is \emph{not}~$r$-compatible to~$\rho_j$ if~$i < j$.
	If~$\abs{\sigma_i} \le r$ and~$\abs{\rho_i} \le r$ for every~$i\in\oneto{m}$,
	then~$m \le (r\cdot e)^r$.
\end{corollary}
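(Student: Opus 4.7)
The plan is to reduce to the skew Bollob\'as inequality (\cref{thm:bollobas}) via the encoding maps $\pi, \pi'$ from~\eqref{eq:pi}. For each $i \in \oneto{m}$, I would set $A_i \coloneqq \pi(\sigma_i)$ and $B_i \coloneqq \pi'(\rho_i)$. By~\cref{thm:pi-disjoint}, $r$-compatibility of $\sigma_i$ with $\rho_j$ is equivalent to $A_i \cap B_j = \emptyset$, so the hypotheses of the corollary translate \emph{verbatim} into $A_i \cap B_i = \emptyset$ for all $i$ and $A_i \cap B_j \ne \emptyset$ whenever $i < j$. This is exactly the setup of~\cref{thm:bollobas}, so the translation step is essentially free.

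Next I would read off size bounds directly from~\eqref{eq:pi}: since $\abs{\rho_i} \le r$, we have $\abs{B_i} \le r$, and since $\abs{\sigma_i} \le r$ we have $\abs{A_i} \le \sum_{i=1}^{r} i = r(r+1)/2$. Feeding $p = r(r+1)/2$ and $q = r$ into~\cref{thm:bollobas} then yields $m \le \binom{r(r+3)/2}{r}$.

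The remaining task is to simplify this binomial to the claimed $(r \cdot e)^r$. For $r \ge 3$ one has $r(r+3)/2 \le r^2$, so the folklore estimate $\binom{z}{r} \le (ze/r)^r$ immediately gives $\binom{r(r+3)/2}{r} \le \binom{r^2}{r} \le (re)^r$; this is precisely the calculation already carried out in the proof of~\cref{thm:ordered-rep}, and one can simply quote it. The small cases $r=1,2$ can be disposed of by direct evaluation of the binomial, just as there. I do not anticipate any serious obstacle: the conceptual content sits entirely in~\cref{thm:pi-disjoint} and in the skew Bollob\'as theorem, and what is left is the same arithmetic already invoked for the size bound on ordered representatives.
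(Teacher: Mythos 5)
Your proposal is correct and matches the paper's own proof essentially verbatim: the paper also sets $A_i \coloneqq \pi(\sigma_i)$, $B_i \coloneqq \pi'(\rho_i)$, invokes \cref{thm:pi-disjoint} to translate the compatibility hypotheses, applies \cref{thm:bollobas} with $\abs{A_i} \le r(r+1)/2$ and $\abs{B_i} \le r$, and then reuses the binomial estimate (including the cases $r\in\{1,2\}$) from the proof of \cref{thm:ordered-rep}. Your bound $\binom{r(r+3)/2}{r}$ is the same quantity as the paper's $\binom{r+r'}{r'}$ with $r'=r(r+1)/2$, so there is no difference in substance.
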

\begin{proof}
	Let~$A_i \coloneqq \pi(\sigma_i)$ and~$B_i \coloneqq \pi'(\sigma_i)$ (see~\eqref{eq:pi}).
	Then~$\abs{A_i} \le r(r+1)/2 \eqqcolon r'$ and~$\abs{B_i} \le r$.
	Due to~\cref{thm:pi-disjoint},
	we have for every~$i,j\in\oneto{m}$,
	$A_i \cap B_i = \emptyset$  
	and~$A_i \cap B_j \ne \emptyset$ if~$i < j$.
	Thus,
	by \cref{thm:bollobas},
	$m \le \binom{r+r'}{r'}$ which is less than~$(r\cdot e)^r$ as shown in the proof of \cref{thm:ordered-rep}.
\end{proof}

Now we can prove our bound of vertex visits in an \rppref{} walk.

\begin{lemma}
	\label{thm:short-walk}
	Let~$G$ be a digraph and~$s$ and~$t$ be two vertices.
	If~$G$ contains an \rppref{} \stwalk{} $W$, then it contains an \rppref{} \stwalk{} $W^*$ that visits each vertex at most~$((r-1)\cdot e)^{r-1}$ times.
\end{lemma}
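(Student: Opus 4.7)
The plan is to show that any vertex visited too often by $W$ admits a valid \emph{shortcut} producing a strictly shorter $r$-locally rainbow \stwalk{}. Write $W = (u_0, \dots, u_q)$ and let $v$ occur at positions $p_1 < \dots < p_m$. For each visit $l$, let $\sigma_l$ be the color sequence of the (at most) $r-1$ vertices immediately preceding $u_{p_l}$ in $W$, and let $\rho_l$ be the corresponding sequence immediately following. For $i < j$, the shortcut from visit $i$ to visit $j$ replaces $W$ by $W' \ceq (u_0, \dots, u_{p_i}, u_{p_j+1}, \dots, u_q)$. Since $u_{p_i} = u_{p_j} = v$, the arc $(u_{p_j}, u_{p_j+1})$ is reused as $(u_{p_i}, u_{p_j+1})$, so $W'$ is a valid, strictly shorter \stwalk{}.

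The heart of the argument is to show that $W'$ is $r$-locally rainbow if and only if $\sigma_i$ is $(r-1)$-compatible to $\rho_j$. The length-$(r+1)$ windows of $W'$ not inherited from $W$ are exactly those straddling the junction between $u_{p_i}$ and $u_{p_j+1}$, and these consist of the last $a$ entries of $\sigma_i$, followed by $c(v)$, followed by the first $r-a$ entries of $\rho_j$, for some $a \in \xtoy{0}{r-1}$. The window with $a = 0$ coincides with an $r$-rainbow window of $W$ at visit $j$, while for $a \in \xtoy{1}{r-1}$ the fact that $c(v)$ is disjoint from $\sigma_i \cup \rho_j$ (by $r$-rainbowness of $W$ at visits $i$ and $j$) reduces rainbowness to the condition that the last $a$ entries of $\sigma_i$ and the first $r-a$ entries of $\rho_j$ are disjoint. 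Over all these $a$, this is precisely the definition of $(r-1)$-compatibility for sequences of length at most $r-1$. The same window analysis applied to the sub-walk of $W$ around visit $i$ shows that $\sigma_i$ is $(r-1)$-compatible to $\rho_i$ for every $i$, as an immediate consequence of $r$-rainbowness of $W$.

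Now assume that no shortcut between two visits of $v$ is valid. Then $\sigma_1, \dots, \sigma_m$ and $\rho_1, \dots, \rho_m$ are sequences of length at most $r-1$ such that $\sigma_i$ is $(r-1)$-compatible to $\rho_i$ for every $i$, but $\sigma_i$ is \emph{not} $(r-1)$-compatible to $\rho_j$ whenever $i < j$. Applying \cref{thm:ordered-bollobas} with parameter $r-1$ in place of $r$ yields $m \le ((r-1) \cdot e)^{r-1}$, so if $v$ is visited more often, some valid shortcut must exist. Iterating shortcuts across all vertices preserves the endpoints and $r$-rainbowness while strictly decreasing the walk's length, terminating at the required walk $W^*$. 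The main subtlety---and source of the precise constant---is recognizing that the shortcut condition reduces to $(r-1)$-compatibility of length-$(r-1)$ sequences rather than $r$-compatibility of length-$r$ sequences (absorbing $c(v)$ into the compatibility parameter), so that the Bollob{\'a}s-type bound of \cref{thm:ordered-bollobas} can be invoked at parameter $r-1$ and deliver the $((r-1) \cdot e)^{r-1}$ estimate rather than the weaker $(r \cdot e)^r$.
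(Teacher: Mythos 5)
Your proof is correct and follows essentially the same route as the paper: it considers the length-$(r-1)$ color sequences $\sigma_i$, $\rho_i$ around each visit of an over-visited vertex, notes $\sigma_i$ is $(r-1)$-compatible to $\rho_i$, invokes \cref{thm:ordered-bollobas} at parameter $r-1$ to find $i<j$ admitting a splice, and shortens the walk (the paper phrases this as a shortest-walk contradiction rather than an iteration, which is the same argument). Your explicit window analysis justifying that $(r-1)$-compatibility of $\sigma_i$ to $\rho_j$ suffices for the spliced walk to stay \rppref{} is in fact spelled out more carefully than in the paper's proof.
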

\begin{proof}
	Assume towards a contradiction that~$W$ is a shortest \rppref{} \stwalk{} and that there is a vertex~$v$ which is visited~$q > ((r-1) \cdot e)^{r-1}$ times by~$W$.
	For~$i \in \oneto{q}$, consider the subwalks~$W_i$ and~$W'_i$ of~$W$ that contain the~$r-1$ vertices before and after the~$i$-th visit of~$v$, respectively.\footnote{If there are less than~$r-1$ vertices before or after the~$i$-th visit of~$v$, then the corresponding subwalk contains all vertices that appear before or after the~$i$-th visit in~$W$.}
	Denote by~$\sigma_i$ and~$\rho_i$ the color sequences of~$W_i$ and~$W'_i$.
	That is, the color sequence surrounding the~$i$-th visit of~$v$ is~$\sigma_i \circ (\col(v)) \circ \rho_i$.
	As~$W$ is \rppref{}, it is easy to verify that~$\sigma_i$ is~$(r-1)$-compatible to~$\rho_i$.
	As~$v$ is visited more then~$(r\cdot e)^r$ times, \cref{thm:ordered-bollobas} tell us that there exist~$i, j \in \oneto{q}$, $i < j$ such that~$\sigma_i$ is $r$-compatible to~$\rho_j$.
	Let~$W^*_1$ be the subwalk of~$W$ starting in~$s$ and ending before the~$i$-th visit of~$v$ (that is, the color sequence of $W^*_1$ ends on~$\sigma_i$)
	and let~$W^*_2$ be the subwalk of~$W$ starting after the~$j$-th visit of~$v$ (that is, the color sequence of~$W^*_2$ starts with~$\rho_j$).
	Then~$W^* \coloneqq W^*_1 \circ (v) \circ W^*_2$ is \rppref{} and shorter than~$W$ --- a contradiction.
\end{proof}
}

Combining \ifshort{}the above\else{}\cref{thm:short-walk}\fi{} with \cref{thm:dft-fpt-r}, we can prove that deciding whether a graph contains an \rppref{} \stwalk{} is fixed-parameter tractable when parameterized by~$r$.

\newcommand{\cordftfptr}{%
	\label{cor:dft-fpt-r}
	Given a vertex-colored $m$-arc digraph and two vertices~$s$ and~$t$, one can decide in
	$\bigO((r\cdot e)^{r(\omega +1)} \cdot m)$ time whether the graph contains an \rppref{} \stwalk{}.
}
\ifshort
\begin{corollary}[\appref{cor:dft-fpt-r}]
	\cordftfptr
\end{corollary}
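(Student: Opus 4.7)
The plan is to reduce the existence of an arbitrary \rppref{} \stwalk{} to the length-bounded version already solved by \cref{alg:fpt-r} of Theorem~\ref{thm:dft-fpt-r}, using Lemma~\ref{thm:short-walk} to provide an a priori length cap.

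The first step would be to note that, by Lemma~\ref{thm:short-walk}, if an \rppref{} \stwalk{} exists at all, then one exists in which every vertex is visited at most $((r-1)\cdot e)^{r-1}$ times. Summing over the $n$ vertices of the input graph therefore bounds its length by $L \coloneqq n\cdot ((r-1)\cdot e)^{r-1}$. Consequently, the unbounded decision problem is equivalent to deciding the existence of an \rppref{} \stwalk{} of length at most $L$, which \cref{alg:fpt-r} (underlying Theorem~\ref{thm:dft-fpt-r}) answers when invoked with $\ell \coloneqq L$.

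The second step is a routine running-time calculation: substituting $\ell = L$ into the bound $\bigO((r\cdot e)^{\omega r}\cdot \ell m)$ from Theorem~\ref{thm:dft-fpt-r}, then using the crude estimate $((r-1)\cdot e)^{r-1} \le (r\cdot e)^r$ together with the standing weak-connectedness assumption $n \le m + 1$, collapses the estimate to the claimed $\bigO((r\cdot e)^{r(\omega + 1)} \cdot m)$.

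There is no real obstacle in this proof; it is purely a combination of two previously established results. The only checks needed are that \cref{alg:fpt-r} is correctly invoked with the particular choice $\ell = L$ (so that rejection on the bounded instance truly certifies nonexistence of any \rppref{} \stwalk{}) and that the length cap coming from Lemma~\ref{thm:short-walk} is not so coarse as to blow up the $r$-dependence, which is immediate from the exponent match between $((r-1)\cdot e)^{r-1}$ and $(r\cdot e)^r$.
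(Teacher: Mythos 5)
Your overall route is the same as the paper's: cap the walk length using \cref{thm:short-walk} and then invoke \cref{alg:fpt-r} (\cref{thm:dft-fpt-r}) with that cap. The gap is in your final running-time step. Reading \cref{thm:short-walk} literally, you take the cap $L = n\cdot((r-1)\cdot e)^{r-1}$, since the lemma only bounds the number of visits \emph{per vertex}. But then plugging $\ell = L$ into the bound $\bigO((r\cdot e)^{\omega r}\cdot \ell m)$ of \cref{thm:dft-fpt-r} yields $\bigO((r\cdot e)^{r(\omega+1)}\cdot n m)$, and the weak-connectivity estimate $n \in \bigO(m)$ turns this into $\bigO((r\cdot e)^{r(\omega+1)}\cdot m^2)$, \emph{not} into the claimed $\bigO((r\cdot e)^{r(\omega+1)}\cdot m)$. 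An inequality of the form $n \le m+1$ cannot make a multiplicative factor of $n$ disappear; the ``collapses to the claimed bound'' step is an arithmetic error, and as written your argument only establishes the corollary with an extra factor of $n$ (equivalently $m$).

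For comparison, the paper's own proof calls \cref{thm:dft-fpt-r} with $\ell = ((r-1)\cdot e)^{r-1}$, i.e.\ it uses the per-vertex visit bound directly as the total length cap, which is exactly how it avoids the factor $n$ in the final bound; note that your cap $n\cdot((r-1)\cdot e)^{r-1}$ is the one that literally follows from \cref{thm:short-walk} (a walk must in particular have length at least $\dist(s,t)$, which can be of order $n$, so some dependence on $n$ in the length is unavoidable). Either way, your write-up does not bridge the gap between the cap you actually use and the running time you claim: to match the stated bound you would have to justify a length cap that is independent of $n$ (and explain why that suffices for correctness), or else settle for the weaker bound $\bigO((r\cdot e)^{r(\omega+1)}\cdot nm)$.
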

\else
\begin{corollary}
	\cordftfptr
\end{corollary}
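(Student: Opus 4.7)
The plan is a direct combination of \cref{thm:short-walk} with \cref{thm:dft-fpt-r}. First, I would apply \cref{thm:short-walk} to reduce the unbounded-length question to a bounded one: if $G$ contains any \rppref{} \stwalk{}, then it contains one visiting each vertex at most $((r-1)\cdot e)^{r-1}$ times, hence of total length at most $\ell^{\ast} \coloneqq n\cdot ((r-1)\cdot e)^{r-1}$. Consequently, deciding the existence of an arbitrary \rppref{} \stwalk{} is equivalent to deciding the existence of one of length at most $\ell^{\ast}$.

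Next, I would feed this $\ell^{\ast}$ as the length parameter into \cref{alg:fpt-r} and invoke the running-time bound from \cref{thm:dft-fpt-r}. Substituting $\ell = \ell^{\ast}$ into the stated $\bigO((r\cdot e)^{\omega r}\cdot \ell m)$ bound and bounding $(r-1)\cdot e \le r\cdot e$ collapses the two $(r\cdot e)$-powers into $(r\cdot e)^{\omega r + r - 1} \le (r\cdot e)^{r(\omega+1)}$, giving an overall running time of $\bigO((r\cdot e)^{r(\omega+1)}\cdot n m)$. Since the input graph is assumed weakly connected, we have $n = \bigO(m)$, which yields the bound claimed in the corollary (modulo absorbing the $n$ into the $m$-factor).

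Since both \cref{thm:short-walk} and \cref{thm:dft-fpt-r} already do the genuine combinatorial and algorithmic work, this corollary is essentially a ``wrapper''. The proof has no real obstacle: the only steps requiring (mild) care are verifying the arithmetic for the collapsed exponent and explicitly writing out the equivalence ``an \rppref{} \stwalk{} exists in $G$ iff one of length at most $\ell^{\ast}$ exists in $G$'', both of which are immediate from the preceding results.
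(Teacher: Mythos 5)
Your route is the same as the paper's: combine \cref{thm:short-walk} with \cref{thm:dft-fpt-r} (via \cref{alg:fpt-r}). The genuine gap is in your last step. From the per-vertex visit bound of \cref{thm:short-walk} you derive the length budget $\ell^{\ast} = n\cdot((r-1)\cdot e)^{r-1}$, and plugging this into \cref{thm:dft-fpt-r} gives $\bigO((r\cdot e)^{r(\omega+1)}\cdot nm)$. You then assert this matches the corollary ``modulo absorbing the $n$ into the $m$-factor'' because $n=\bigO(m)$. That inference is invalid: $n=\bigO(m)$ only yields $nm=\bigO(m^2)$, not $\bigO(m)$; a multiplicative factor of $n$ cannot be swallowed by a single factor of $m$. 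As written, your argument establishes only $\bigO((r\cdot e)^{r(\omega+1)}\cdot nm)$, which is strictly weaker than the stated $\bigO((r\cdot e)^{r(\omega+1)}\cdot m)$.

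For comparison, the paper's proof calls \cref{thm:dft-fpt-r} with $\ell=((r-1)\cdot e)^{r-1}$, i.e., it uses the quantity from \cref{thm:short-walk} directly as the walk-length budget with no factor of $n$, and that is precisely where the claimed bound comes from. Your $\ell^{\ast}=n\cdot((r-1)\cdot e)^{r-1}$ is the faithful consequence of the lemma as phrased (it bounds the number of visits \emph{per vertex}, not the total length), so to reach the corollary's bound you would need to argue for a total-length budget of $r^{\bigO(r)}$ independent of $n$ (as the paper's proof uses), rather than trying to eliminate the extra $n$ through the connectivity bound $n\le m+1$, which does not do the job.
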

\fi
\appendixproof{cor:dft-fpt-r}{
\begin{proof}
	Using \cref{thm:dft-fpt-r}, we check whether there exists an \rppref{} \stwalk{} of length at most~$\ell = ((r-1)\cdot e)^{r-1}$ in~$G$.
	The correctness follows from \cref{thm:short-walk}.
	The running time is~$\bigO((r\cdot e)^{\omega r} \cdot ((r-1)e)^{(r-1)} \cdot m) \subseteq \bigO((r\cdot e)^{r(\omega +1)} \cdot m)$.
\end{proof}
}

\subsection{A Matching Lower Bound}

\newcommand{\phs}{$k\times k$~Permutation Hitting Set}
\newcommand{\phsTsc}{\prob{\phs}}
\newcommand{\phsAcr}{\prob{$k\times k$~PHS}}

A close look at the above algorithm shows that using the algorithm by \citet{fomin2016representative} to compute \orepresentative{s} is actually not optimal
as the underlying unordered representative family also stores representatives for any set~$Y$ which does not correspond to~$\pi'(\rho)$ for any sequence~$\rho$.
This raises hope for a more efficient algorithm.
We can however show that finding an algorithm with a running time that is asymptotically smaller in the exponent than ours presented in \cref{ssec:FPTwalks}
would break the Exponential Time Hypothesis (ETH) \cite{ImpagliazzoP01}.
We remark that the reduction also proves the problem to be \NP-hard.
Moreover, 
the provided reduction holds even for shortest walks, 
and thus also for shortest paths.
Altogether, we will prove the following in this section.

\newcommand{\thmreth}{%
	\label{thm:r-eth}
	Even if~$\ell = \dist(s, t)$ and on acyclic digraphs,
	both \fw{}
	and \fp{}
	are \NP-hard
	and,
	unless the ETH fails,
	cannot be solved in~$2^{o(r \log r)} \cdot n^{\bigO(1)}$ time
	on~$n$-vertex digraphs.
}
\ifshort
\begin{theorem}[\appref{thm:r-eth}]
	\thmreth
\end{theorem}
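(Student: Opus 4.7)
The plan is to reduce from \phsAcr{}, for which an ETH-based $2^{\Omega(k \log k)}$-time lower bound is known. An instance of \phsAcr{} specifies an integer $k$ and a family $\mathcal F \subseteq 2^{[k] \times [k]}$, asking for a permutation $\pi$ of $[k]$ that hits every $F \in \mathcal F$ in the sense that $(i,\pi(i)) \in F$ for some $i \in [k]$. I would map such an instance in polynomial time to an acyclic digraph $G$ with a vertex-coloring $\col$, terminals $s, t$, and integers $r = \Theta(k)$ and $\ell = \dist(s,t)$, such that $G$ admits an \rppref{} $s$-$t$ walk of length at most $\ell$ iff the original instance is a \yes{}-instance. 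Since $\ell = \dist(s,t)$, any such walk is automatically a path, so correctness transfers to both \fw{} and \fp{}, and NP-hardness follows because \phsAcr{} is already NP-hard.

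The core of the construction is a sequence of $k$ \emph{row-selector gadgets}: the $i$-th gadget forces the path to pass through exactly one of $k$ parallel vertices $v_{i,1}, \ldots, v_{i,k}$, where $v_{i,j}$ carries the ``column color'' $j$. A rainbow window of size $r+1 \approx k+1$ spanning the $k$ chosen selector vertices then forces the column choices to be pairwise distinct, i.e., to encode a permutation $\pi$ of $[k]$. To encode the hitting condition, I would insert, for each $F \in \mathcal F$, a small \emph{check gadget} whose shortest traversal (of length matching the remaining $\dist(s,t)$) is only available when the committed permutation uses a cell of $F$; every alternative route through the gadget is strictly longer, so the constraint $\ell = \dist(s,t)$ rules them out. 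A natural implementation uses one private ``marker'' color per set $F$ placed on its matching cells, together with repeated copies of the column colors inside each check gadget, so that the local rainbow condition couples the gadget's admissible choices to the selector vertices.

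I expect the main obstacle to be this coupling: the rainbow window has width only $r = \Theta(k)$, while check gadgets for different sets $F$ may lie arbitrarily far from the corresponding row-selectors along the path. The most plausible workaround is to ``forward'' each selector choice through a chain of copies of its color, so that the local rainbow constraint relays the committed permutation down the path to every check gadget it must influence, while padding the intermediate segments with color-fresh vertices keeps $\ell = \dist(s,t)$ well-defined. Once the construction is verified to have size $\poly(k, |\mathcal F|)$ with $r = \Theta(k)$, a hypothetical $2^{o(r \log r)} \cdot n^{\bigO(1)}$-time algorithm for \fw{} (or \fp{}) would decide \phsAcr{} in time $2^{o(k \log k)} \cdot (k + |\mathcal F|)^{\bigO(1)}$, contradicting the ETH and completing the proof of \cref{thm:r-eth}.
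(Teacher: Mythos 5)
Your high-level plan matches the paper's: reduce from \phsTsc{} (whose $2^{o(k\log k)}$ ETH lower bound you cite correctly), set $r=\Theta(k)$, force $\ell=\dist(s,t)$ so that walks are automatically paths, and use ``column colors'' inside a window of size $r+1$ to make the selected cells encode a permutation. But the proof has a genuine gap, and you have named it yourself: how the permutation committed in one part of the path is enforced at check gadgets that lie far beyond the reach of any single rainbow window. Your proposed workaround --- one private marker color per set $F$, repeated column colors in each check gadget, and ``padding the intermediate segments with color-fresh vertices'' --- does not close this gap; it destroys the only propagation mechanism available. The \rppref{} condition only couples vertices at distance at most $r$, so as soon as you insert fresh colors (or any segment whose colors are not forced), the constraint chain is cut and the colors seen at a later check gadget are completely unconstrained by the selector vertices. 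A ``relay chain of copies of its color'' cannot be built per selected color either, because which color was selected is exactly what the graph cannot know in advance.

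The paper's construction resolves this with a rigidity trick you are missing: it uses \emph{exactly} $k+1$ colors with $r=k$, so every window of $r+1$ consecutive vertices must contain every color exactly once, which forces the entire color sequence of any feasible walk to be periodic with period $k+1$. Consequently the permutation is not selected once and relayed; it is automatically re-expressed in every block of $k+1$ steps, and consistency across blocks is free. There are no private marker colors at all; the hitting condition for each $F_q$ is enforced purely structurally, by giving each set its own pair of lower/upper grid copies where the only arcs from the lower to the upper copy (and from $u_q$ into the upper copy) correspond to cells of $F_q$, while only the upper copy connects onward to $u_{q+1}$. Thus selection and checking happen inside the same gadget, one gadget per set, with $\ell=m(k+1)=\dist(s,t)$. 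Without this (or an equivalent) mechanism forcing global periodicity, your separate selector-plus-check architecture does not yield a correct reduction, so the proposal as it stands does not establish the theorem.
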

\else
\begin{theorem}
	\thmreth
\end{theorem}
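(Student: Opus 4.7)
The plan is to reduce from $k \times k$~Permutation Hitting Set (\phsTsc{}), which under ETH admits no $2^{o(k \log k)} \cdot n^{\bigO(1)}$-time algorithm. In this problem one is given subsets $S_1, \ldots, S_m \subseteq [k] \times [k]$ and asks whether there is a permutation $\pi \colon [k] \to [k]$ such that $\{(i, \pi(i)) : i \in [k]\}$ intersects every $S_j$. Setting the locality parameter to $r = k$ in the target instance translates the $2^{o(k \log k)}$ lower bound into the desired $2^{o(r \log r)}$ lower bound for \fp{} and \fw{}.

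The central observation driving the construction is that if $r = k$ and the color palette is exactly $[k]$, then any \rppref{} walk of length at least~$k$ has a periodic color sequence: once the locality window contains all $k$ colors, the only color that may be appended next is the unique one just leaving the window. Hence a sufficiently long \rppref{} walk over~$[k]$ encodes a single permutation~$\pi$ of~$[k]$ repeated cyclically.

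Building on this, I would construct an acyclic digraph consisting of $m$ constraint gadgets $G_1, \ldots, G_m$ arranged in series from~$s$ to~$t$. Each~$G_j$ has $k$~layers and contains $|S_j|$ parallel copies, one per element $(i^*, j^*) \in S_j$. The copy for $(i^*, j^*)$ places a single vertex of color $j^*$ at layer $i^*$ and $k$~vertices of colors $1, \ldots, k$ at every other layer, with all inter-layer edges present. Traversing this copy therefore commits to $\pi(i^*) = j^*$, witnessing that the permutation hits $S_j$. Consecutive gadgets are glued by connecting the last layer of $G_j$ fully to the first layer of $G_{j+1}$, so that the periodicity imposed by locality forces a single permutation to be used across all gadgets. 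Padding ensures every \stwalk{} has the same length $\ell = \dist(s,t)$, so the reduction proves hardness for both the walk and the path variants even under the shortest-path constraint.

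The principal obstacle will be handling the boundary at~$s$, where the locality window has not yet filled and the periodicity argument does not apply directly. This is resolvable by prepending an initial priming gadget that allows the path to freely commit to an arbitrary permutation, after which all subsequent gadgets are locked into it by locality. With the resulting graph having $n = \poly(k, m)$ vertices and $r = k$, a hypothetical $2^{o(r \log r)} \cdot n^{\bigO(1)}$-time algorithm for \fp{} or \fw{} would decide \phsTsc{} in $2^{o(k \log k)} \cdot (km)^{\bigO(1)}$ time, contradicting ETH. Since \phsTsc{} is \NP-hard in general, the reduction simultaneously yields the stated \NP-hardness of both problems.
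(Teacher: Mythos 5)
Your overall route is the same as the paper's: a reduction from \phsTsc{} with $r=\Theta(k)$, using the fact that the locality constraint forces the color sequence of a long walk to be periodic, so that one fixed permutation is propagated through a gadget per set, and then transferring the known $2^{o(k\log k)}$ ETH lower bound for \phsTsc{}. Your gadget differs only in mechanics: the paper enforces the hit via connectivity (the walk must switch from a lower to an upper grid copy, and the switching arcs exist only at elements of $F_q$, with junction vertices carrying a fresh $(k{+}1)$-st color), whereas you enforce it by making only one color available at one layer of a per-element copy. That variation is fine, and your remarks about acyclicity, $\ell=\dist(s,t)$ and the path variant are justified because your graph is leveled, so every \stwalk{} is automatically a shortest path; no extra padding is even needed.

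The concrete gap is an off-by-one in the key mechanism. Under the paper's definition, an \rppref{} walk must have pairwise distinct colors on every $r+1$ (not $r$) consecutive vertices. Hence with $r=k$ and a palette of exactly $\oneto{k}$ there is \emph{no} \rppref{} walk on more than $k$ vertices at all: a window of $k+1$ vertices would need $k+1$ distinct colors. As stated, your construction therefore produces only \no{}-instances, and your periodicity claim (``the only color that may be appended next is the unique one just leaving the window'') is the statement one gets for windows of size $r$ rather than $r+1$. The repair is easy and does not affect the asymptotics: either set $r\coloneqq k-1$, so that windows of size $k$ over $k$ colors force period-$k$ periodicity and the intended solution (the chosen permutation repeated $m$ times) is indeed $(k-1)$-rainbow; or keep $r=k$ and add a $(k{+}1)$-st color, e.g.\ on the junction vertices between gadgets, as the paper does --- the latter choice also disposes of your boundary problem at~$s$ without any priming gadget. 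Finally, make explicit that the parallel copies inside a gadget share no arcs except at the gadget boundaries, since otherwise a walk could dodge the color-restricted layer by hopping between copies.
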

\fi

We will provide a polynomial-time reduction in which the parameter will be bounded linearly from the \phsTsc{} problem,
where one is given a family~$\FFF$ of subsets of a universe~$\oneto{k}\times\oneto{k}$ (which we will treat like a grid with~$k$ rows and columns),
and one is asked whether there is a \emph{hitting permutation}, 
that is, 
a bijection~$\varphi \colon \oneto{k} \to \oneto{k}$ such that 
each~$F \in \FFF$ contains an element~$(i, \varphi(i))$ with~$i \in \oneto{k}$.
Unless the ETH fails, 
\phsTsc{} cannot be solved in~$2^{o(k \log k)} \cdot (k+\abs{\FFF})^{\bigO(1)}$ time \cite{lokshtanov2018slightly}.

The rough idea for the construction is as follows 
(see \cref{fig:r-eth} for an illustrative example):
\begin{figure}[t]
	\centering
	\begin{tikzpicture}
		\def\xr{0.975}
		\def\xxr{0.90}
		\def\yr{.8}
		\def\elemr{.7}

		\tikzstyle{xn}=[scale=0.5,fill=white]
		\tikzstyle{xe}=[arc,semithick] %
		\tikzstyle{xei}=[arc,color=black!40] %
		\tikzstyle{xnA}=[colornode=green,   scale=0.65,thick]
		\tikzstyle{xnB}=[colornode=violet,  scale=0.65,thick]
		\tikzstyle{xnC}=[colornode=blue,    scale=0.65,thick]
		\tikzstyle{xn0}=[colornode=yellow,scale=0.65,thick]

		\tikzstyle{ynA}=[scale=1.75*\elemr,fill=none,rectangle,draw,thick]
		\tikzstyle{ynB}=[scale=1.45*\elemr,fill=none,circle,draw,thick]
		\tikzstyle{ynC}=[scale=1.25*\elemr,fill=none,diamond,draw,thick]

		\newcommand{\LeGridNodes}[3]{
			\foreach \x in {1,2,3}{
				\foreach \y/\lab in {1/xnA,2/xnB,3/xnC}{
					\node (n#1x\x\y) at (#2*\xr+\x*\xr,#3*\yr+\y*\yr)[\lab]{};
				}
			}
		}
		\newcommand{\LeGridAllEdges}[1]{
			\foreach \y in {1,2,3}{
				\foreach \x in {1,2}{
					\foreach \yx in {1,2,3}{
						\pgfmathsetmacro\xt{int(\x+1)}
						\draw[xei] (n#1x\x\y) to (n#1x\xt\yx);
					}
				}
			}
		}
		\newcommand{\LeGrid}[3]{
			\LeGridNodes{#1}{#2}{#3}
			\LeGridAllEdges{#1}
		}

		\newcommand{\hiliLeGrid}[6]{
			\ifnum\pdfstrcmp{#4}{0}=1
				\foreach \yt in {#4}{
					\node at (n#2x1\yt)[#3]{};
				}
			\fi
			\ifnum\pdfstrcmp{#5}{0}=1 
				\foreach \yt in {#5}{
					\node at (n#2x2\yt)[#3]{};
					\foreach \y in {1,2,3}{
						\draw[xe] (n#1x1\y) to (n#2x2\yt);
					}
				}
			\fi
			\ifnum\pdfstrcmp{#6}{0}=1
				\foreach \yt in {#6}{
					\node at (n#2x3\yt)[#3]{};
					\foreach \y in {1,2,3}{
						\draw[xe] (n#1x2\y) to (n#2x3\yt);
					}
				}
			\fi
		}

		\newcommand{\LeGridToNode}[2]{
			\foreach \y in {1,2,3}{
				\draw[xei] (n#1x3\y) to (#2);
			}
		}
		\newcommand{\NodeToLeGrid}[4]{
			\foreach \y in {1,2,3}{
				\draw[xei] (#1) to (n#2x1\y);
			}
			\ifnum\pdfstrcmp{#4}{0}=1
				\foreach \y in {#4}{
					\draw[xe] (#1) to (n#3x1\y);
				}
			\fi
		}

		\def\ylab{5.75}
		\node (A1) at (0,\ylab*\yr)[ynA]{};
		\node at (A1.east)[xshift=-.3ex,anchor=west,font=\small]{$=\{(1,2),(2,2)\}$};

		\node (A2) at (3.9*\xr,\ylab*\yr)[ynB]{};
		\node at (A2.east)[xshift=-.3ex,anchor=west,font=\small]{$=\{(1,1),(2,2),(2,3),(3,3)\}$};

		\node (A3) at (10.2*\xr,\ylab*\yr)[ynC]{};
		\node at (A3.east)[xshift=-.3ex,anchor=west,font=\small]{$=\{(2,1),(3,1),(3,2)\}$};

		\node (u1) at (0*\xr,2*\yr)[xn0]{};
		\LeGrid{Au}{0}{2}
		\LeGrid{Al}{0}{-2}
		\hiliLeGrid{Al}{Au}{ynA}{2}{2}{0}
		\NodeToLeGrid{u1}{Al}{Au}{2}
		\node (u2) at (4*\xr,2*\yr)[xn0]{};
		\LeGridToNode{Au}{u2}
		\LeGrid{Bu}{4}{2}
		\LeGrid{Bl}{4}{-2}
		\hiliLeGrid{Bl}{Bu}{ynB}{1}{2,3}{3}
		\NodeToLeGrid{u2}{Bl}{Bu}{1}
		\node (u3) at (8*\xr,2*\yr)[xn0]{};
		\LeGridToNode{Bu}{u3}
		\node (unix) at (9*\xr,2*\yr)[]{$\cdots$};
		\node (um) at (10*\xr,2*\yr)[xn0]{};
		\LeGrid{Cu}{10}{2}
		\LeGrid{Cl}{10}{-2}
		\hiliLeGrid{Cl}{Cu}{ynC}{0}{1}{1,2}
		\NodeToLeGrid{um}{Cl}{Cu}{0}
		\node (umpo) at (14*\xr,2*\yr)[xn0]{};
		\LeGridToNode{Cu}{umpo}

		\node at (u1)   [label=-90:{$s$}] {};
		\node at (u2)   [label=-90:{$u_2$}] {};
		\node at (u3)   [label=-90:{$u_3$}] {};
		\node at (um)   [label=-90:{$u_m$}] {};
		\node at (umpo) [label=-90:{$t$}] {};

		\begin{scope}[on background layer]
			\tikzstyle{lpath}=[line width=9pt,color=black!10,line cap=round,line join=round]
			\draw[lpath]
				(u1.center) 
				to (nAux12.center) to (nAux21.center) to (nAux33.center)
				to (u2.center)
				to (nBlx12.center) to (nBlx21.center) to (nBux33.center)
				to (u3.center);
			\draw[lpath]
				(um.center) 
				to (nClx12.center) to (nCux21.center) to (nCux33.center)
				to (umpo.center);
		\end{scope}
		
		\node at (nAux21)[label=-90:$w^{2,1}_1$]{};
		\node at (nBux33)[label=0:$w^{3,3}_2$]{}; 
		\node at (nBlx12)[label=180:$v^{1,2}_2$]{}; 
	\end{tikzpicture}
	\caption[Illustrative example for~\cref{constr:r-eth}.]{
		Example for~\cref{constr:r-eth} on a $3\times 3$-grid with
		$\FFF = \{\tikz{\node[scale=0.9,rectangle,draw,semithick] {};}, \tikz{\node[scale=0.66,circle,draw,semithick] {};}, \dots, \tikz{\node[scale=0.5,diamond,draw,semithick] {};}\}$.
		Gray (thin) arcs exist independently of the current subset~$F_i \in \FFF$.
		Black arcs point to elements in~$F_i$ and are the only way to reach the top copy.
		The highlighted path selects the hitting set~$\{(1,2), (2,1), (3,3)\}$,
		thereby visiting, i.a., 
		$w^{2,1}_1$, $v^{1,2}_2$, and $w^{3,3}_2$.
		The path visits one black arc for each~$F_i \in \FFF$.
	}
	\label{fig:r-eth}
\end{figure}
For each set in~$\FFF$,
we create a pair of copies 
(lower and upper) 
of our universe, which we will be able to traverse column by column.
Each row receives a color, 
and the subpath length is chosen such that we always have to visit the colors in the same order for each set in~$\FFF$.
Hence, 
for each~$F \in \FFF$, 
we must pick the same permutation.
Now, 
we always start ``left'' of the two copies for~$F$,
and we can always go to the lower copy,
but in order to get to the next set in~$\FFF$, 
we need to get to the upper copy;
This is only possible if one element from our permutation is in~$F$.
Hence, 
there is an \rppref{} \stwalk{} 
(indeed, by construction, it will always be a path) 
if and only if there is a hitting permutation for~$\FFF$.

We now describe the construction formally for \fw{}.
Note that the construction for \fp{} is identical.

\begin{construction}
	\label{constr:r-eth}
	Given an instance~$I = (\oneto{k}\times\oneto{k}, \FFF)$ of \phsTsc{}
	with~$\FFF = \{F_1, \dots, F_m\}$,
	we construct an instance~$I' = (G, \col, s, t, r, \ell)$ of \fw{} 
	with~$C \coloneqq \oneto{k+1}$, 
	$r \coloneqq k$, 
	and
	$\ell \coloneqq m (k+1)$ as follows
	(see \cref{fig:r-eth}).
	We add for each~$q \in \oneto{m+1}$ a vertex~$u_q$
	and set~$s \coloneqq u_1$ and~$t \coloneqq u_{k+1}$.
	Then, for each~$q \in \oneto{m}$, we do the following.
	For each~$(i, j) \in \oneto{k}\times\oneto{k}$, we add the vertices~$v^{i,j}_q$ and~$w^{i,j}_q$ with color~$j$ to~$G$.
	For each~$j \in \oneto{k}$, 
	we add the arcs~$(u_q,v^{1,j}_q)$ and~$(w^{k,j}_q, u_{q+1})$ to~$G$.
	Moreover, 
	for each~$j \in \oneto{k}$ with~$(1,j) \in F_q$, 
	we add the arc~$(u_q, w^{1,j}_q)$ to~$G$.
	For each~$i \in \xtoy{2}{k}$ and each~$j \in \oneto{k}$ with~$(i,j) \in F_q$, 
	we add the arcs~$(v^{i-1,j'}_q, w^{i,j}_q)$ for each~$j' \in \oneto{k}$.
	Finally, 
	for each~$q\in\oneto{m}$,
	for each~$j\in\oneto{k}$,
	and for each~$i\in\oneto{k-1}$,
	add the arcs~$(v_q^{i,j},v_q^{i+1,j})$ and $(w_q^{i,j},w_q^{i+1,j})$.
\end{construction}

We now prove the deciding property of any \rppref{} \stwalk{} in our constructed instance.

\newcommand{\thmpermutationwalk}{%
	\label{thm:permutation-walk}
	If graph~$G$ from an instance $(G,c,s,t,r,\ell)$ obtained via~\cref{constr:r-eth} contains an \rppref{} \stwalk{}~$W$ of length~$\ell$, 
	then~$W$ is a path
	and there is a bijection~$\varphi \colon \oneto{k} \to \oneto{k}$
	such that for each~$q \in \oneto{m}$ and~$i \in \oneto{k}$, 
	$W$ visits either~$v^{i, \varphi(i)}_q$ or~$w^{i, \varphi(i)}_q$.
}
\ifshort
\begin{observation}[\appref{thm:permutation-walk}]
	\thmpermutationwalk
\end{observation}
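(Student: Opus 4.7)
The plan is to exploit the highly rigid structure of the construction to pin down the walk segment by segment. First I would observe that the constructed digraph is a DAG: every arc strictly increases the tuple $(q, i, \text{grid})$ with $v < w$, so no directed cycle exists. Consequently $W$ is automatically a path, and any walk from $s = u_1$ to $t = u_{m+1}$ must visit $u_1, u_2, \dots, u_{m+1}$ in this order.

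Second, I would establish the segment structure. From $u_q$, all outgoing arcs lead into column $1$ of segment $q$; from any vertex in segment $q$, the only way to leave is via an arc from some $w^{k,j}_q$ to $u_{q+1}$; and within each segment the column index strictly increases along every arc. Hence the portion of $W$ from $u_q$ to $u_{q+1}$ has length exactly $k+1$ and visits exactly one vertex from each column $i \in \oneto{k}$. Let $j^{(q)}_i \in \oneto{k}$ denote the row (equivalently, color) of the vertex visited in column $i$ of segment $q$. Because $W$ has total length $\ell = m(k+1)$, this accounts for every vertex of $W$.

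Third, I would use $r$-local rainbowness (with $r = k$) to force the $k$-tuples $(j^{(q)}_1, \dots, j^{(q)}_k)$ to be a single permutation $\varphi$. Every $u_q$ has color $k+1$, and every window of $k+1$ consecutive vertices of $W$ must carry all $k+1$ distinct colors. Applying this to the window that starts at column $1$ of segment $q$ and ends at $u_{q+1}$ gives $\{j^{(q)}_1, \dots, j^{(q)}_k\} = \oneto{k}$, so each segment's row sequence is a permutation. Applying the same constraint to the window that starts at column $i \ge 2$ of segment $q$, passes through $u_{q+1}$, and ends at column $i-1$ of segment $q+1$, yields
\[
  \{j^{(q)}_i, \dots, j^{(q)}_k\} \cap \{j^{(q+1)}_1, \dots, j^{(q+1)}_{i-1}\} = \emptyset.
\]
Combined with the segment-wise permutation property, this forces $\{j^{(q+1)}_1, \dots, j^{(q+1)}_{i-1}\} = \{j^{(q)}_1, \dots, j^{(q)}_{i-1}\}$ for every $i \in \oneto{k+1}$. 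A straightforward induction on $i$ then gives $j^{(q+1)}_i = j^{(q)}_i$ for all $i$, so a single bijection $\varphi(i) \coloneqq j^{(1)}_i$ describes the row visited at column $i$ in every segment.

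The main obstacle is the careful window bookkeeping in the third step: one must verify that the stated $(k+1)$-windows always fit inside $W$ (hence the side conditions for $i = 1$ and $i = k$, and the boundary segments adjacent to $s$ and $t$), and then convert set equalities into coordinate-wise equalities. Once these windows are set up correctly, the inductive collapse $j^{(q+1)}_i = j^{(q)}_i$ is immediate, and together with the DAG observation this yields both that $W$ is a path and that the required $\varphi$ exists.
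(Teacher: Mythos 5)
Your proof is correct and follows essentially the same route as the paper's: both establish that $W$ is a path, that each segment between consecutive $u_q$'s is traversed column by column, and that the $(k+1)$-vertex rainbow windows (with exactly $k+1$ colors available and $r=k$) force the row pattern to repeat identically across all segments, yielding the bijection $\varphi$. The only cosmetic differences are that the paper deduces pathness from $\dist(s,t)=\ell$ rather than from acyclicity, and that it compresses your explicit window-and-induction bookkeeping into the assertion that the colors must always be traversed in the same order.
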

\else
\begin{observation}
	\thmpermutationwalk
\end{observation}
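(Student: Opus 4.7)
\emph{Proof plan.} The plan is to exploit the tight length equality $\ell = m(k+1)$ together with the acyclic, forward-only arc structure of $G$. First I would observe that every arc in $G$ either goes from $u_q$ into column $1$ of gadget $q$, from an upper column-$k$ vertex of gadget $q$ to $u_{q+1}$, or from column $i$ to column $i+1$ within a single gadget. In particular, $G$ is a layered DAG, so any \stwalk{} must visit $u_1, u_2, \dots, u_{m+1}$ in order and, between consecutive $u_q$'s, must progress from column $1$ to column $k$. This forces at least $k+1$ arcs per segment (one to enter column $1$, $k-1$ to advance through the columns, and one to exit from a column-$k$ upper vertex). Summing over the $m$ segments gives a walk-length lower bound of $m(k+1) = \ell$; since $W$ attains this bound, each segment is traversed using exactly $k+1$ arcs, so $W$ visits precisely one vertex $x^{i}_q \in \{v^{i,j^q_i}_q, w^{i,j^q_i}_q\}$ of color $j^q_i \in \oneto{k}$ in each column $i \in \oneto{k}$.

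Next I would analyze the color sequence. Since $u_q$ is the unique vertex carrying the ``extra'' color $k+1$ (all column vertices use colors in $\oneto{k}$), the color sequence of $W$ reads
\[
  (k+1,\, j^1_1, \dots, j^1_k,\, k+1,\, j^2_1, \dots, j^2_k,\, k+1,\, \dots,\, k+1,\, j^m_1, \dots, j^m_k,\, k+1).
\]
As $W$ is \rppref{} with $r = k$, every $k+1$ consecutive entries must be pairwise distinct. Applying this to the window of length $k+1$ starting at $u_q$ immediately yields $\{j^q_1, \dots, j^q_k\} = \oneto{k}$, so each segment realizes a permutation of $\oneto{k}$. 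For the window of length $k+1$ starting at $j^q_i$ (with $i \ge 2$), which spans $u_{q+1}$ and continues into segment $q+1$, distinctness gives $\{j^q_i, \dots, j^q_k\} \cap \{j^{q+1}_1, \dots, j^{q+1}_{i-1}\} = \emptyset$; since both sets sit inside the permutation $\oneto{k}$ and have complementary sizes $k-i+1$ and $i-1$, this forces $\{j^{q+1}_1, \dots, j^{q+1}_{i-1}\} = \{j^q_1, \dots, j^q_{i-1}\}$. A straightforward induction on $i$ then yields $j^{q+1}_i = j^q_i$ for every $i$, so a single permutation $\varphi \colon \oneto{k} \to \oneto{k}$ with $\varphi(i) = j^q_i$ works uniformly for all $q$.

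Finally, to see that $W$ is a path, I would note that the $u_q$ are pairwise distinct, vertices from different gadgets have disjoint index sets (the subscript~$q$ separates them), and within a single gadget the selected vertices $x^{i}_q$ have pairwise distinct column indices $i$. Hence no vertex is visited twice. I do not expect any serious obstacle: the conceptual core is the counting argument of the first step, where the tight equality $\ell = m(k+1)$ combined with the layered DAG structure of $G$ pins the walk down to ``exactly one vertex per column''; once this is established, the \rppref{} constraint essentially forces the permutation structure mechanically.
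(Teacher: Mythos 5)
Your proposal is correct and takes essentially the same route as the paper: the layered structure of the construction forces the walk to pass through $u_1,\dots,u_{m+1}$ and exactly one vertex per column of each gadget (hence it is a path), and the size-$(k+1)$ rainbow windows then force every gadget to realize the same permutation of the $k$ colors. Your explicit prefix-set induction across gadget boundaries is just a more carefully spelled-out version of the paper's brief claim that the colors must always be traversed in the same order.
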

\fi

\appendixproof{thm:permutation-walk}
{
\begin{proof}
	Observe that~$\dist(s, t) = \ell$, hence any \stwalk{} is a shortest walk and thus a path.

	As there are~$k+1$ colors and~$r=k$, 
	the walk~$W$ must always traverse the colors in the same order,
	otherwise there exists a non-colorful subpath of length~$r$.
	As the only vertices with color~$k+1$ are~$u_q$, $q \in \oneto{m}$, 
	we have by construction that~$W$ will visit exactly one vertex out of~$\bigcup_{j \in \oneto{k}} \{v^{i,j}_q, w^{i,j}_q\}$ for each~$q \in \oneto{m}$ and~$i \in \oneto{k}$.
	Also, 
	for each~$q \in \oneto{m}$, 
	the path will visit the vertex in column~$i$ before visiting the vertex in column~$i+1$, for each~$i \in \oneto{k-1}$.
	Suppose now that there is no bijection as claimed.
	Then there exists a~$q \in \oneto{m-1}$ and~$i \in \oneto{k}$ such that~$W$ visits one of~$v^{i,j}_q$ and~$w^{i,j}_q$ 
	(call this vertex~$x$) 
	and one of~$v^{i,j'}_{q+1}$ and~$w^{i,j'}_{q+1}$ 
	(call this vertex~$y$), 
	for distinct~$j, j' \in \oneto{k}$.
	As the vertices between~$u_q$ and~$u_{q+1}$ 
	(and between~$u_{q+1}$ and~$u_{q+2}$) 
	must be visited column by column, 
	the subpath from~$x$ to~$y$ is of length~$r$.
	But then, 
	as~$\col(x) = \col(y)$, 
	$W$ is not \rppref{} --- a contradiction.
\end{proof}
}

With this property at hand, proving the theorem of this section is straightforward.
\appendixproof{thm:r-eth}
{
\begin{proof}[Proof of \cref{thm:r-eth}]
	We provide a polynomial-time reduction from \phsTsc{}.
	Let $I' = (G, \col, s, t, r, \ell)$ be the instance obtained via 
	\cref{constr:r-eth} on input instance~$I = (\oneto{k}\times\oneto{k}, \FFF)$.
	We claim that~$I$ is a \yes-instance if and only if~$I'$ is a \yes-instance.

	Suppose first that~$I$ is a \yes-instance and let~$\varphi \colon \oneto{k} \to \oneto{k}$ be a hitting permutation.
	For each~$q \in \oneto{m}$ let~$i_q \coloneqq \min \{i \in \oneto{k} \mid (i, \varphi(i)) \in F_q\}$.
	We claim that
	\[
		W \coloneqq \bigcirc_{q=1}^m \left( (u_q) \circ \left( \bigcirc_{i=1}^{i_q-1} (v^{i,\varphi(i)}_q) \circ \bigcirc_{i=i_q}^k (w^{i, \varphi(i)}_q) \right) \right) \circ (u_{m+1})
	\]
	is an \rppref{} \stwalk{} of length~$\ell$.
	As~$\ell = \dist(s, t)$, $W$~a shortest \stwalk{}, and thus a shortest \stpath{}.
	Clearly,
	$W$ has length~$\ell$.
	Further, $W$ visits~$u_q$ with color~$k+1$ before visiting~$k$ vertices, of which the~$i$-th vertex always has color~$\varphi(i)$; thus it is \rppref{}.
	The crucial part in proving that~$W$ is an \stwalk{} is to show that 
	the arc to~$w^{i_q,\varphi(i_q)}_q$ from its predecessor exists for each~$q \in \oneto{m}$.
	Clearly,
	the remaining arcs of~$W$ do exist.
	So suppose that there is a~$q \in \oneto{m}$ for which there is no arc to~$w^{i_q,\varphi(i_q)}_q$ from its predecessor.
	As the predecessor is~$v^{i_q-1, \varphi(i_q-1)}_q$ 
	if~$i_q > 1$ and~$u_q$ otherwise,
	the arc exists if and only if~$(i_q, \varphi(i_q)) \in F_q$ --- a contradiction to the choice of~$i_q$.
	Thus, 
	$W$ is an \stwalk{} and $I'$ is a \yes-instance.

	Suppose next that~$I'$ is a \yes-instance and let~$W$ be a solution with the properties described in \cref{thm:permutation-walk}.
	Note that this implies that~$W$ is a path.
	We claim that the corresponding bijection~$\varphi$ is a hitting permutation.
	Suppose not, that is, 
	there exists a~$q \in \oneto{m}$ such that~$(i, \varphi(i)) \notin F_q$ for every~$i\in\oneto{k}$.
	Since~$W$ is an \stwalk{},
	by construction $W$ visits~$w^{i, \varphi(i)}_q$ for at least one~$i\in\oneto{k}$.
	Let~$i_q \in \oneto{k}$ be the smallest number such that~$W$ visits~$w^{i_q, \varphi(i_q)}_q$.
	As the predecessor of~$w^{i_q, \varphi(i_q)}_q$ is~$v^{i_q-1, \varphi(i_q-1)}_q$ if~$i_q > 1$ and~$u_q$ otherwise, 
	we have by construction that~$(i_q, \varphi(i_q)) \in F_q$ --- a contradiction.
	Thus, $I$ is a \yes-instance.

	Towards the running time lower bound,
	note that~$n \in \bigO(k^2 \cdot \abs{\FFF})$, 
	$r=k$, 
	and the construction can be computed in~$(k+\abs{\FFF})^{\bigO(1)}$ time.
	Hence, 
	any algorithm for \fw{} running in~$2^{o(r \log r)} \cdot n^{\bigO(1)}$ time
	gives an algorithm for \phsTsc{} running in time~$2^{o(k \log k)} \cdot (k+\abs{\FFF})^{\bigO(1)}$, 
	which breaks the ETH.
\end{proof}
}

Remarkably, 
due to \cref{thm:permutation-walk}, 
the ETH lower bound holds even if one asks whether there exists an \rppref{} \stwalk{} of arbitrary length.
This complements the fixed-parameter tractability of \cref{cor:dft-fpt-r}.

\begin{corollary}
	\label{thm:eth-anylength}
	Unless the ETH breaks, 
	there is no~$2^{o(r \log r)} \cdot n^{\bigO(1)}$-time algorithm for
	the problem of deciding whether there is an \rppref{} \stwalk{} of arbitrary length in a given vertex-colored acyclic digraph with $n$ vertices.
\end{corollary}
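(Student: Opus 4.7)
The plan is to invoke the very construction already developed for \cref{thm:r-eth}, namely \cref{constr:r-eth}, and observe that in the graph~$G$ produced there, every \stwalk{} has length exactly~$\ell$. This immediately reduces the ``arbitrary length'' question to the ``length at most~$\ell$'' question already handled by \cref{thm:r-eth}.

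First I would verify acyclicity and the layered structure of~$G$. By inspection of \cref{constr:r-eth}, every arc goes either from~$u_q$ or from column~$i$ of the~$q$\nobreakdash-th gadget to column~$i+1$ of the same gadget (within the~$v$\nobreakdash-copy, the~$w$\nobreakdash-copy, or crossing from~$v$ to~$w$), or from column~$k$ of the~$q$\nobreakdash-th gadget to~$u_{q+1}$. Thus the vertices admit a topological order in which~$u_q$ precedes all grid vertices of index~$q$, which in turn precede~$u_{q+1}$, and within a gadget column~$i$ precedes column~$i+1$. In particular, $G$ is a DAG, so any \stwalk{} is a path. Moreover, any \stpath{} must pass through~$u_1, u_2, \dots, u_{m+1}$ in order, and between consecutive~$u_q$ and~$u_{q+1}$ it must traverse exactly one vertex per column~$i \in \oneto{k}$. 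Hence every \stwalk{} has length precisely~$m(k+1) = \ell$.

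Consequently, for instances produced by \cref{constr:r-eth}, ``$G$ contains an \rppref{} \stwalk{} of arbitrary length'' and ``$G$ contains an \rppref{} \stwalk{} of length~$\ell$'' are equivalent statements. Combining this with the correctness of the reduction (\cref{thm:r-eth}), a~$2^{o(r \log r)} \cdot n^{\bigO(1)}$\nobreakdash-time algorithm for the arbitrary\nobreakdash-length version would solve \phsTsc{} in~$2^{o(k \log k)} \cdot (k + \abs{\FFF})^{\bigO(1)}$ time, contradicting the ETH via the lower bound of \citet{lokshtanov2018slightly}.

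I do not expect any real obstacle here: the work is entirely absorbed into the observation that \cref{constr:r-eth} yields a DAG in which every \stwalk{} has the fixed length~$\ell$. The only thing to be careful about is making explicit that acyclicity forces walks to be paths, so that relaxing the length constraint genuinely adds no solutions beyond those already counted by \cref{thm:r-eth}.
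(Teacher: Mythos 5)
Your proposal is correct and matches the paper's own (terse) justification: the paper derives the corollary from \cref{thm:permutation-walk} together with the layered, acyclic structure of \cref{constr:r-eth}, which forces every \stwalk{} to be a path of length exactly~$\ell$, so the arbitrary-length and length-$\ell$ questions coincide. You merely make explicit the DAG/layering observation that the paper leaves implicit, so no substantive difference.
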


Finally, 
as by \cref{thm:permutation-walk} every \rppref{} \stwalk{} in the constructed instance is a shortest \stpath{}, 
we can add to every arc~$(u,v)$ its antiparallel arc~$(v,u)$.
The resulting graph thus is symmetric.

\begin{corollary}
	\label{thm:eth-dags}
	Even if~$\ell = \dist(s, t)$ and on symmetric digraphs,
	both \fw{} and \fp{}
	are \NP-hard
	and, 
	unless the ETH breaks,
	cannot be solved in~$2^{o(r \log r)} \cdot n^{\bigO(1)}$-time on~$n$-vertex digraphs.
\end{corollary}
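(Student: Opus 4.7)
The plan is to reuse \cref{constr:r-eth} and argue that symmetrizing the resulting digraph is harmless. Given an instance~$I$ of \phsTsc{}, I would first apply \cref{constr:r-eth} to obtain the acyclic instance~$(G, \col, s, t, r, \ell)$, and then let~$G'$ be the symmetrization of~$G$ (i.e., add every antiparallel arc~$(v,u)$ for each arc~$(u,v)$ of~$G$); the coloring, terminals, and parameters stay the same. Set~$I' \coloneqq (G', \col, s, t, r, \ell)$. Since~$G'$ has the same vertex set and at most twice as many arcs as~$G$, the reduction is still polynomial and preserves~$r = k$ as well as~$n \in \bigO(k^2 \cdot \abs{\FFF})$.

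For the equivalence~$I \in \yes \iff I' \in \yes$, one direction is immediate: every \rppref{} \stwalk{} of length~$\ell$ in~$G$ is also one in~$G'$, so by \cref{thm:r-eth} the forward direction carries over. For the converse, the key observation is that \cref{constr:r-eth} yields a \emph{layered} digraph: assigning each~$u_q$ the label~$(q-1)(k+1)$ and each~$v^{i,j}_q$, $w^{i,j}_q$ the label~$(q-1)(k+1)+i$, every arc of~$G$ increases this label by exactly one; in particular, $s$~sits at layer~$0$ and~$t$ at layer~$\ell$. Thus, if~$W$ is any \stwalk{} in~$G'$ of length~$q$ that uses~$a$ original arcs and~$b$ antiparallel arcs, then~$q = a + b$ and~$a - b = \ell$, so~$q \ge \ell$ with equality forcing~$b = 0$. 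Consequently,~$\dist_{G'}(s,t) = \ell$, and every \stwalk{} in~$G'$ of length exactly~$\ell$ uses only original arcs, hence is a walk in~$G$. If this walk is additionally \rppref{}, \cref{thm:permutation-walk} yields a hitting permutation for~$I$.

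The remainder of the proof is bookkeeping: the \NP-hardness and the~$2^{o(r \log r)} \cdot n^{\bigO(1)}$ ETH lower bound follow verbatim as in \cref{thm:r-eth} because~$r = k$ and~$n$ are polynomial in~$k + \abs{\FFF}$. Since every \rppref{} \stwalk{} of length~$\ell = \dist_{G'}(s,t)$ is a shortest \stpath{}, the hardness transfers simultaneously to \fw{} and \fp{}. The only genuinely new ingredient is the layer argument above, and I expect this to be the main (and essentially only) obstacle; it is nevertheless immediate from the explicit construction and requires no additional machinery.
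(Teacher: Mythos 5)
Your proposal is correct and follows essentially the same route as the paper: reuse \cref{constr:r-eth}, add all antiparallel arcs, and argue via \cref{thm:permutation-walk} that solutions are unaffected. Your explicit layering argument (every original arc advances the layer by one, so any \stwalk{} of length at most~$\ell$ in the symmetrization uses only original arcs and $\dist(s,t)$ stays~$\ell$) is exactly the justification the paper leaves implicit in its one-line remark preceding the corollary, so the proof checks out.
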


\section{Paths}
\label{ft:sec:paths}
\appendixsection{ft:sec:paths}

In this section, we study the parameterized complexity of \fp{} with respect to the locality parameter~$r$ and the detour length~$k \coloneqq \ell - \dist(s, t)$.

\subsection{NP-Hardness for Constant Locality Values}
We now provide a dichotomy for \fp{} parameterized by the locality parameter~$r$.
Obviously, if $r = 0$ then any \stpath{} is a solution.
We will now show that the problem remains efficiently solvable when~$r \le 2$, but prove \NP-hardness for all values~$r \ge 3$.

Clearly, if~$r > 0$, we can assume that there is no arc~$(u, v)$ with~$\col(u) = \col(v)$ in our digraph.
Thus, the task of finding a $1$-\pref{} \stpath{} (or \stwalk{}) reduces to finding any \stpath{}.
\begin{observation}
	\label{thm:r=1}
	Finding a shortest $1$-\pref{} \stwalk{} or \stpath{} is linear-time solvable.
\end{observation}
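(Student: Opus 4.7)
The plan is to reduce the problem to ordinary shortest-path computation via a simple preprocessing step on arcs, as already foreshadowed in the paragraph immediately preceding the statement. First, I would formally verify the equivalence: a walk (or path) $W = (v_0, \dots, v_q)$ is $1$-\pref{} if and only if $\col(v_{i-1}) \neq \col(v_i)$ for every $i \in \oneto{q}$. Hence, no $1$-\pref{} walk can ever traverse an arc $(u,v)$ with $\col(u) = \col(v)$, while conversely \emph{every} walk in the subgraph from which all such arcs have been deleted is automatically $1$-\pref{}.

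Accordingly, I would let $G'$ denote the digraph obtained from $G$ by removing the arc set $A' \coloneqq \{(u,v) \in A(G) \mid \col(u) = \col(v)\}$. Computing $G'$ takes $\bigO(n+m)$ time by a single scan over the arcs. Then, since for $r = 1$ a $1$-\pref{} \stwalk{} exists if and only if a $1$-\pref{} \stpath{} exists (any repeated vertex contributes no obstruction to the local constraint, so shortest such walks are paths anyway), and since both reduce in $G'$ to the problem of finding any shortest \stwalk{}, it suffices to run breadth-first search from $s$ in $G'$. BFS finds a shortest \stpath{} in $\bigO(n+m)$ time, and this path is $1$-\pref{} in $G$ by construction.

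Combining the preprocessing with the BFS yields total running time $\bigO(n+m)$, which is linear in the input size. There is essentially no obstacle here: correctness follows from the direct characterization of $1$-\prefness{} as properness of the color sequence under consecutive pairs, and efficiency comes for free from standard BFS. The only minor point worth stating explicitly in the write-up is the equivalence between shortest $1$-\pref{} walks and shortest $1$-\pref{} paths, which holds because a shortest \stwalk{} in any digraph is already a path.
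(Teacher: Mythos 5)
Your proposal is correct and matches the paper's approach: the paper likewise observes that one may delete (or assume absent) all arcs~$(u,v)$ with~$\col(u)=\col(v)$, after which any \stwalk{} is $1$-\pref{} and a single BFS suffices. The only detail you add explicitly --- that shortest $1$-\pref{} walks and paths coincide because shortest walks are paths --- is exactly the intended justification, so there is nothing to fix.
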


As soon as~$r \ge 2$, the problem becomes much harder.

\newcommand{\thmufp}{%
	\label{thm:ufp}
	\fp{} is \NP-hard for any fixed value of~$r \ge 2$.
}
\ifshort
\begin{theorem}[\appref{thm:ufp}]
	\thmufp
\end{theorem}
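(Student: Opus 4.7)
The plan is to establish NP-hardness by a polynomial-time reduction from a standard NP-complete problem such as $(s,t)$-Hamiltonian Path on directed graphs. I would handle the base case $r = 2$ directly and then lift the hardness to all larger $r$ via a padding argument.

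For the base case $r = 2$: given a directed graph $H$ with terminals $s'$ and $t'$, construct a vertex-colored digraph $G$ by replacing each vertex $v_i \in V(H)$ by a small vertex gadget carrying its own color, and each arc of $H$ by a connector between the corresponding gadgets. The coloring and the target length $\ell$ are designed so that any 2-\pref{} \stpath{} of length at most $\ell$ in $G$ must enter each vertex gadget exactly once in a consistent order. The 2-\prefness{} constraint, while only forcing distinctness within any three consecutive vertices, together with carefully placed ``barrier'' vertices, can be leveraged to forbid revisiting an already processed gadget. The resulting visit order then directly encodes a Hamiltonian \abpath{s'}{t'} in $H$.

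To extend to $r \ge 3$: take the hardness graph built for $r = 2$ and subdivide each arc with a path of $r - 2$ fresh intermediate vertices, each assigned a new unique color not appearing elsewhere, adjusting the target length accordingly. In the forward direction, any 2-\pref{} \stpath{} in the base graph lifts to an \rppref{} \stpath{} in the subdivided graph, since every window of $r+1$ consecutive vertices contains at most two original vertices (with distinct colors by the 2-\prefness{} constraint) plus intermediates with pairwise distinct fresh colors. For the converse, one has to check that any \rppref{} \stpath{} in the subdivided graph projects back to a 2-\pref{} path in the base; because windows of size $r+1$ may span too few original vertices, this may require augmenting the padding so that some intermediates carry copies of original colors in a pattern that propagates the distinctness constraint across three consecutive originals.

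The main obstacle is the gadget design for $r = 2$: because the locality constraint only restricts three consecutive vertices, it is subtle to use it to enforce a global combinatorial property like Hamiltonicity, and the construction must exploit both the graph structure and the coloring in tandem. A secondary difficulty lies in the reverse direction of the padding argument, where a naive subdivision may weaken the constraint; if this obstacle proves hard to overcome with padding alone, the alternative is to perform a direct reduction for each fixed $r \ge 2$ with gadgets tailored to windows of size $r+1$.
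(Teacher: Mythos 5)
There is a genuine gap: the heart of your argument --- the gadget construction for $r=2$ --- is never given, and you yourself flag it as the ``main obstacle.'' The difficulty is not incidental. Your plan asks the $2$-\prefness{} constraint (together with unspecified ``barrier'' vertices) to enforce a \emph{global} property, namely that an \stpath{} of length at most~$\ell$ visits every vertex gadget of an arbitrary digraph exactly once. But the constraint is purely local (distinct colors among any three consecutive vertices), and the problem only asks for a path of length \emph{at most}~$\ell$; nothing in a local coloring condition can force a path to take a long route through all gadgets of a graph whose topology you do not control. So a reduction from \HamPath{} in the form you sketch has no identified mechanism to work, and the secondary step (padding to lift $r=2$ to $r\ge 3$) is also left open in its reverse direction, which you acknowledge may fail under naive subdivision.

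The paper's proof avoids exactly this trap by reducing from the \prob{3-Sat} variant in which each variable occurs twice positively and twice negatively. There the \emph{graph structure} (a chain of variable gadgets followed by a chain of clause gadgets between~$s$ and~$t$) forces any \stpath{} to traverse all gadgets in order, so nothing global needs to be extracted from the coloring; the colors (a periodic pattern on $\{1,2,3\}$ plus one special color at the junction) serve only to forbid illegal ``turns,'' while the actual hardness is carried by the self-avoidance of a path: each clause's literal paths share (are identified with) vertices of the corresponding variable branch, so a clause can be crossed only via a literal whose branch was \emph{not} used in the variable gadget, i.e., a true literal. For $r>2$ the same construction is stretched (longer connector paths, extra colors), which sidesteps the delicate back-projection your padding argument would require. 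To repair your proposal you would need either to supply a concrete $r=2$ gadget with a proof that bounded-length locally rainbow paths must be Hamiltonian-like --- which appears out of reach for a local constraint --- or to switch to a source problem and layout where the traversal order is dictated by the graph itself, as in the paper.
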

\else
\begin{theorem}
	\thmufp
\end{theorem}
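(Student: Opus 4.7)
The plan is to establish NP-hardness by a polynomial-time reduction from a classical NP-hard simple-path problem — most naturally, Directed \HamPath{} (or equivalently 3-SAT via a path-selection encoding). I would construct a vertex-colored digraph $G$ whose underlying topology forces any $s$-$t$ path of the prescribed length $\ell$ to traverse a predetermined skeleton of "rank" gadgets in order, while offering, at each rank, a small number of local alternatives encoding a choice (e.g.\ which outgoing edge of the Hamiltonian-path instance is used).  The coloring is then designed so that the $r$-local rainbow condition is satisfied if and only if the sequence of local choices is globally consistent (i.e.\ yields a Hamiltonian path / satisfying assignment).

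First I would handle the base case $r = 2$ using only three colors, say $\{R,G,B\}$.  Each gadget is a tiny colored block internally laid out so that every 3-window of vertices inside it has pairwise distinct colors, but the block terminates on a specific pair of colors depending on the choice made inside it.  Adjacent gadgets are glued through a shared ``seam'' vertex whose color, together with the two colors preceding and following it, constitutes a 3-window; if the two gadgets are joined along an incompatible pair of choices, this seam-window repeats a color and the path is not 2-locally rainbow.  Choosing $\ell$ equal to the length of the rank-by-rank skeleton ensures that no detour is possible and hence that a valid $s$-$t$ path corresponds exactly to a Hamiltonian path / satisfying assignment.

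For arbitrary fixed $r \ge 3$, I would lift the construction by expanding each gadget with a short chain of $r-2$ fresh ``spacer'' vertices carrying distinct auxiliary colors that appear nowhere else.  Internally, any $(r+1)$-window then contains only spacer colors plus at most the same three seam colors as in the $r=2$ construction; the spacers are automatically all distinct, so the $r$-local rainbow condition on $G$ becomes logically equivalent to the $2$-local rainbow condition on the base construction.  This preserves both soundness and completeness and yields NP-hardness uniformly for every fixed $r \ge 2$.

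The main obstacle I expect is the tension between the \emph{purely local} color constraint and the \emph{global} Hamiltonicity-type condition: unlike in the usual rainbow-path reduction, where distinctness of colors along the whole path directly forbids revisits, here the $(r+1)$-window can only rule out a repetition within bounded distance.  Consequently, global consistency must be pushed into the graph structure — the gadgets must be shaped so that each original vertex has a unique way of being entered and exited, ruling out non-simple traversals by topology rather than by coloring — and care must be taken that the seam-windows between consecutive gadgets are the \emph{only} places where incompatible choices can manifest as a color clash.
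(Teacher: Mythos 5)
Your plan hinges on the claim that the coloring can be designed so that the $r$-\pref{} condition holds if and only if the sequence of local choices is \emph{globally} consistent (a Hamiltonian path, or a satisfying assignment). For any fixed $r$ this cannot work as stated: the \pprefness{} constraint inspects only windows of $r+1$ consecutive vertices, so it can only couple choices made within distance $r$ of each other along the path. In a reduction from \HamPath{} (or from 3-SAT), the constraints that matter are inherently non-local --- ``the original vertex used at rank $i$ must differ from the one used at rank $j$'' for arbitrarily distant $i,j$, or ``the truth value chosen for $x_i$ early in the path must agree with the literal used to satisfy a clause much later.'' Such constraints can never surface as a color clash at a seam between \emph{consecutive} gadgets, so your seam-window mechanism, by itself, certifies any sequence of pairwise-locally-consistent choices and the reduction is not sound. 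You notice this tension in your last paragraph, but the remedy you gesture at (``unique way of being entered and exited'', ``topology rather than coloring'') is exactly the part that is missing, and it is the heart of the proof, not a detail to be filled in later.

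The mechanism the paper uses to bridge this gap is vertex sharing between distant gadgets: the coloring is used \emph{only} to force every solution to run through all variable gadgets and then all clause gadgets along the prescribed skeleton, with no shortcuts and no wrong turns; the global consistency is carried entirely by the simple-path requirement. Concretely, each literal path inside a clause gadget is not merely color-coordinated with the corresponding variable gadget --- one of its internal vertices is \emph{identified} with a vertex on the appropriate truth-value branch of that variable gadget, so a clause can be ``paid for'' by a literal only if the path through the variable gadget took the opposite branch and left that shared vertex unvisited. Without such an identification (or an equivalent device letting far-apart portions of the path compete for the same vertices), your construction has no way to transmit global information, and in the \HamPath{} version in particular there is no evident gadget that prevents the same original vertex from being ``used'' at two different ranks. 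Your spacer-color lifting from $r=2$ to general $r\ge 3$ and the choice of $\ell$ equal to the skeleton length are fine and closely mirror the paper, but they do not touch this central gap.
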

\fi

We provide a polynomial-time reduction from \prob{3-Sat},
where given a Boolean formula~$\phi$ in conjunctive normal form such that each clause contains exactly three literals (3-CNF),
the question is whether there exists a truth assignment to the variables for which~$\phi$ evaluates to true.
The problem is known to be \NP-hard, even if each variable appears exactly twice positive and twice negative in the given formula \cite[Theorem 1]{berman2003hardness}.
We provide our construction for~$r=2$ and describe afterwards how it can be adapted to the case when~$r > 2$.

In a nutshell, our construction works as follows.
Our path first needs to go through the variable gadgets, in which there are two branches (for true and false) for each variable.
Afterwards, it needs to go through the clause gadgets, in which there is a branch for each literal.
Each branch visits a vertex of the corresponding variable gadget.
As we are looking for a path, this vertex must be on the branch that was not yet visited by our path.
Finally, the colors in the graph are chosen such that taking any forbidden turn (e.g., from a variable gadget directly into a clause gadget) would breach the \pprefness{} constraint.

\begin{construction}
	\label{constr:ufp}
	Let~$\phi$ be a Boolean formula in 3-CNF in which every variable appears exactly twice positive and twice negative.
	Let~$x_i$, $i \in \oneto{n}$, be the variables
	and let~$c_j$, $j \in \oneto{m}$ be the clauses of~$\phi$.
	We construct an instance~$I' = (G, \col, s, t, r, \ell)$ of \fp{} with~$r=2$, $\ell = 6n + 3m + 2$ and color set~$C \coloneqq \{1, \dots, 4\}$ as follows.

	For each $i \in \oneto{n}$, 
	we build a \emph{variable gadget} as shown in \cref{fig:3sat:1}(a).
	\begin{figure}[t]
		\centering
		\begin{tikzpicture}[xscale=1.125,yscale=1.05]
			\begin{scope}
					\foreach \l/\ll/\lc[count=\i from 0] in {
						$v_i$/$1$/blue,
						$v_i^1$/$2$/green,
						$v_i^2$/$3$/violet,
						{}/$1$/blue,
						$\bar{v_i}^2$/$3$/violet,
						$\bar{v_i}^1$/$2$/green%
					} {
						\node (v\i) at (90+\i*60 : 1)[smallcolornode=\lc, label={90+\i*60}:\l, label={-90+\i*60}:\ll] {};
					}
					\node at (v3)[label={[yshift=-3pt]0:{$y_i$}}]{};
					\foreach \x in {0,1,2}{
						\pgfmathtruncatemacro\y{\x+1}
						\draw[arc] (v\x) to (v\y);
					}
					\foreach \x in {5,4}{
						\pgfmathtruncatemacro\y{\x-1}
						\draw[arc] (v\x) to (v\y);
					}
					\draw[arc] (v0) to (v5);
				\node (v6) at ($(v3)+(0,-0.6)$)[smallcolornode=green, label=left:$2$] {};
				\node (v7) at ($(v6)+(0,-0.6)$)[smallcolornode=violet, label=left:$3$, label=right:$v_i'$] {};
				\draw[arc] (v3) to (v6);
				\draw[arc] (v6) to (v7);
				\node at (-1.3,1.4) {(a)};
			\end{scope}
			\begin{scope}[xshift=3.35cm]
			\def\yr{0.9}
			\def\ysh{1.1}
			\def\xsh{1}
			\node (v0) at (0, 0+\ysh*\yr) [smallcolornode=blue, label=right:$1$, label=above:$w_j$]{};
			\node (v1) at	(0, -3*\yr+\ysh*\yr) [smallcolornode=blue, label=right:$1$, label=below:$w_j'$]{};
			\foreach \y in {0,1}{
				\foreach \x in {0, 1, 2} {
					\ifnum\y=0
						\node (v\x\y) at (\x*\xsh-\xsh,-\y*\yr-1*\yr+\ysh*\yr)[smallcolornode=violet, label=left:$3$] {};
					\fi
					\ifnum\y=1
						\node (v\x\y) at (\x*\xsh-\xsh,-\y*\yr-1*\yr+\ysh*\yr)[smallcolornode=green, label=left:$2$] {};
					\fi
				}
			}
			\foreach \x in {0, 1, 2} {
				\draw[arc] (v0) to (v\x0);
				\draw[arc] (v\x0) to (v\x1);
				\draw[arc] (v\x1) to (v1);
			}
			\node at (-1.3,1.4) {(b)};
			\end{scope}
			\begin{scope}[xscale=0.9,yscale=0.85,xshift=6.35cm,yshift=0.25cm]
			 \begin{scope}
					\begin{scope}
							\foreach \l/\ll/\lc/\dr[count=\i from 0] in {
								/$1$/blue/1,
								$v_i^1$/$2$/green/0,
								$v_i^2$/$3$/violet/0,
								{}/$1$/blue/1,
								$\bar{v_i}^2$/$3$/violet/1,
								$\bar{v_i}^1$/$2$/green/1%
							} {
								\ifnum\dr=1 
									\node (v\i) at (90+\i*60 : 1)[smallcolornode=\lc, label={90+\i*60}:\l, label={-90+\i*60}:\ll] {};
								\fi
							}
							\node at (v0)[label=135:$v_i$]{};
							\foreach \x in {5,4}{
								\pgfmathtruncatemacro\y{\x-1}
								\draw[arc] (v\x) to (v\y);
							}
							\draw[arc] (v0) to (v5);
							\draw[arc,dashed,color=gray] ($(v0)+(0,0.5)$) to (v0);
							\draw[arc,dashed,color=gray] (v0) to ($(v0)+(-0.5,-0.25)$);
							\draw[arc,dashed,color=gray] ($(v3)+(-0.5,0.25)$) to (v3);
							\draw[arc,dashed,color=gray] (v3) to ($(v3)+(0,-0.5)$);
						\node at (-1.,1.4) {(c)};
					\end{scope}
					\begin{scope}[xshift=2.25cm,yshift=-1cm]
					\def\yr{0.9}
					\def\ysh{1.1}
					\def\xsh{1}
					\node (v0) at (0, 0+\ysh*\yr) [smallcolornode=blue, label=right:$1$, label=165:$w_j$]{};
					\node (v1) at	(0, -3*\yr+\ysh*\yr) [smallcolornode=blue, label=right:$1$, label=-165:$w_j'$]{};
					\foreach \y in {0,1}{
						\foreach \x in {0, 1, 2} {
							\ifnum\y=0
								\ifnum\x>0 
									\node (v\x\y) at (\x*\xsh-\xsh,-\y*\yr-1*\yr+\ysh*\yr)[smallcolornode=violet, label=left:$3$] {};
								\fi
							\fi
							\ifnum\y=1
								\node (v\x\y) at (\x*\xsh-\xsh,-\y*\yr-1*\yr+\ysh*\yr)[smallcolornode=green, label=left:$2$] {};
							\fi
						}
					}
					\foreach \x in {0, 1, 2} {
						\ifnum\x>0 
							\draw[arc] (v0) to (v\x0);
							\draw[arc] (v\x0) to (v\x1);
						\fi
						\draw[arc] (v\x1) to (v1);
						\draw[arc,dashed,color=gray] ($(v0)+(\x*0.5-0.5,0.75)$) to (v0);
						\draw[arc,dashed,color=gray] (v1) to ($(v1)+(\x*0.5-0.5,-0.75)$);
					}
					\draw[arc] (v0) to (v4);
					\draw[arc] (v4) to (v01);
					\end{scope}
				\end{scope}
				\begin{scope}[xshift=4.7cm]
					\def\yr{1}
					\node (vn') at (0,0.5)[smallcolornode=violet, label=right:$3$, label=left:$v_n'$] {};
					\node (v0) at ($(vn')+(0,-\yr)$)[smallcolornode=yellow, label=right:$4$] {};
					\node (v1) at ($(v0)+(0,-\yr)$)[smallcolornode=green, label=right:$2$] {};
					\node (w1) at ($(v1)+(0,-\yr)$)[smallcolornode=blue, label=right:$1$, label=left:$w_1$]  {};
					\draw[arc,dashed,color=gray] ($(vn')+(0,0.75)$) to (vn');
					\draw[arc] (vn') to (v0);
					\draw[arc] (v0) to (v1);
					\draw[arc] (v1) to (w1);
					\foreach \x in {0, 1, 2} {
						\draw[arc,dashed,color=gray] (w1) to ($(w1)+(\x*0.5-0.5,-0.75)$);
					}
					\node at (-0.75,1.4){(d)};
				\end{scope}
			\end{scope}

		\end{tikzpicture}
		\caption{(a) The variable gadget and (b) the clause gadget in \cref{constr:ufp}.
		(c) An example showing how a literal path corresponding to literal $\bar x_i$ in clause $c_j$ is attached to the variable gadget at~$\bar v_i^1$.
		(d) The connection between the last variable gadget and the first clause gadget.
		}
		\label{fig:3sat:1}
	\end{figure}
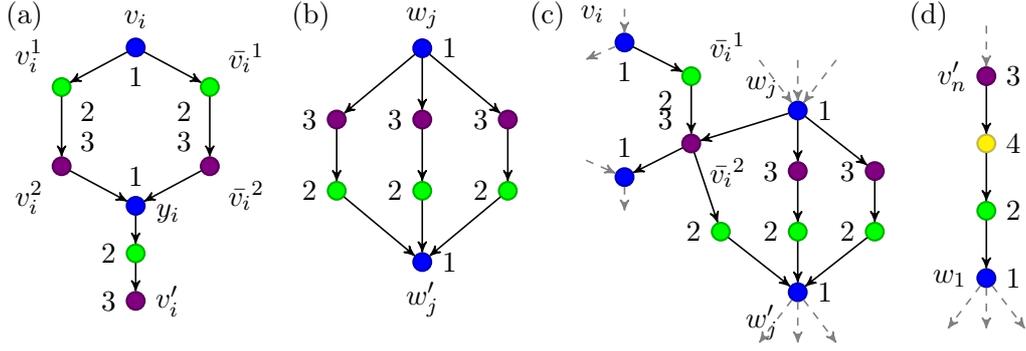
	It contains vertices~$v_i$, $v'_i$, and $y_i$ with colors~$c(v_i)=c(y_i)=1$ and~$c(v_i')=3$.
	Vertices~$v_i$ and~$y_i$ are connected by two parallel length-2 paths~$(v_i,v_i^1,v_i^2,y_i)$ and~$(v_i,\bar v_i^1,\bar v_i^2,y_i)$
	with colors~$c(v_i^1)=c(\bar v_i^1)=2$ and~$c(v_i^2)=c(\bar v_i^2)=3$.
	Vertices~$y_i$ and~$v'_i$ are connected by a length-2 path where the inner vertex has color ``$2$''.
	We add for each~$i \in \oneto{n-1}$ an arc from~$v'_i$ to~$v_{i+1}$.

	Next, 
	for each~$j \in \oneto{m}$,
	we build a \emph{clause gadget} as shown in \cref{fig:3sat:1}(b), 
	which consists of two vertices~$w_j, w'_j$ each with color ``$1$'' 
	connected by three parallel length-2 paths (called \emph{literal paths}), 
	each with internal vertex-color sequence~$(3, 2)$
	and corresponding to one literal.
	For each literal~$\lambda$ of~$c_j$, 
	we pick either~$v_i^1$ or~$v_i^2$ if~$\lambda = x_i$, 
	and either~$\bar v_i^1$ or~$\bar v_i^2$ if~$\lambda = \bar x_i$.
	Since each literal appears at most twice,
	every vertex can be picked at most once.
	We identify the picked vertex with the equally-colored vertices of the literal path of~$\lambda$
	(see \cref{fig:3sat:1}(c) for an example).
	We call the picked vertex the \emph{attached vertex} in the literal path.
	Afterwards, 
	we identify the vertices~$w'_j$ and~$w_{j+1}$ for each~$j \in \oneto{m-1}$.
	We connect~$v'_n$ to~$w_1$ via a length-3 path 
	where the vertex adjacent to~$v'_n$ has color ``$4$''  and the vertex adjacent to~$w_1$ has color ``$2$''
	(see \cref{fig:3sat:1}(d)).
	Finally, we set~$s \coloneqq v_1$ and $t \coloneqq w'_m$.
\end{construction}

\ifshort{}\else{}
Now, the \NP-hardness of \fp{} for~$r \ge 2$ can be proven as follows.
\fi{}

\appendixproof{thm:ufp}{
\begin{proof}[Proof of \cref{thm:ufp}]
	We provide a polynomial-time reduction from the \NP-hard variant of \prob{3-Sat}, in which the Boolean formula~$\phi$ contains each variable exactly twice positive and twice negative \cite{darmann2021simplified}.
	Given such a formula~$\phi$, we use \cref{constr:ufp} to obtain an instance~$I' = (G, \col, s, t, r, \ell)$ of \fp{}.
	Clearly, \cref{constr:ufp} runs in polynomial time.
	It remains to prove that~$\phi$ is satisfiable if and only if~$I'$ is a \yes-instance.

	To this end, assume first that there is some assignment satisfying $\phi$.
	Define~$\alpha \colon \{ x_i, \bar x_i \mid i \in \oneto{n} \} \to \{\true, \false\}$ such that~$\alpha(x_i) = \true$ and~$\alpha(\bar x_i) = \false$ if the satisfying assignment sets~$x_i$ to $\true$
	and~$\alpha(x_i) = \false$ and~$\alpha(\bar x_i) = \true$ otherwise.
	We construct an \stpath{} as follows.
	Starting at~$s = v_1$, for each variable gadget, whenever
	$\alpha(x_i) = \true$, then we go from~$v_i$ to~$v'_i$ via $\bar v_i^1$ and~$\bar v_i^2$.
	Conversely, if 
	$\alpha(x_i) = \false$, then we go via $v_i^1$ and~$v_i^2$.
	Reaching~$v'_n$ in this fashion, we continue to~$w_1$.
	Now, for each clause~$c_j$, we pick some literal~$\lambda$ of~$c_j$ with~$\alpha(\lambda) = \true$ and traverse the corresponding literal path to reach~$w'_j = w_{j+1}$.
	Observe that we do not visit any vertex more than once, due to our choice of path within the variable gadgets.
	It is clear that in this way we can reach~$w'_m$ and thus~$t$, and that the path has length~$6n+3m+2$.
	It is also easy to verify that the resulting path~$P$ is $3$-\pref{}:
	Up until~$v'_n$, the vertices of~$P$ are always colored~$1,2,3,1,2,3,\dots$.
	After traversing the unique vertex of color ``$4$'', the remaining vertices of~$P$ are colored according to the sequence~$2,1,3,2,1,3,\dots$.

	For the converse direction, let~$P$ be any $3$-\pref{} path from~$s$ to~$t$.
	It is not difficult to verify that the path must first traverse all variable gadgets in order and then all clause gadgets in order and has length~$6n+3m+2$.
	Define the variable assignment~$\alpha$ by setting~$\alpha(x_i) = \true$ if and only if~$P$ visits~$\bar v_i^1$ (and then also~$\bar v_i^2$) on its way from~$v_i$ to~$v'_i$.
	We claim that each clause~$c_j$, $j \in \oneto{m}$ is then satisfied by the literal~$\lambda$ of~$c_j$ whose literal path is traversed by~$P$.
	Say without loss of generality~$\lambda = x_i$.
	If we had~$\alpha(x_i) = \false$, then~$P$ would have traversed~$v_i^1$ and~$v_i^2$ on its way from~$v_i$ to~$v'_i$, according to our choice of~$\alpha$.
	But since the literal path of~$\lambda$ again uses either~$v_i^1$ or~$v_i^2$, $P$~cannot be a path --- a contradiction.

	To see that the problem is \NP-hard for any~$r > 2$, one can change the color of all~$y_i$ to~``$4$''
	and change the paths from~$y_i$ to~$v_i'$ to have length~$r-3$, 
	with the vertices being colored $4, 5, \dots, r$.
	Analogously, the literal paths in the clause gadgets are made longer, the vertices being colored according to the sequence~$1,r,r-1,\dots,3,2,1$.
	Finally, the unique vertex that received color~$4$ in the original construction must be recolored to color~$r+1$.
\end{proof}
}
\newcommand{\pecp}{\prob{Properly Edge-Colored Path}}
We close this section by remarking that, on symmetric digraphs, finding a shortest $2$-\pref{} path becomes efficiently solvable.
(The case $r \geq 3$ remains \NP-hard by a reduction similar to the one above.)
The idea here is to transform the vertex coloring into an edge coloring (i.e., every symmetric arc is assigned one color).
We say that a walk is \emph{properly colored} (with respect to some edge coloring) if no two consecutive symmetric arcs share the same color.

\newcommand{\thmedgecoloringlemma}{%
	\label{thm:edge-coloring-lemma}
	Let $G$ be a symmetric digraph, $\col$ a vertex coloring, and $W$ an \stwalk{}.
	Assume that no two adjacent vertices have the same color.
	Then,  $W$~is $2$-\pref{} if and only if it is properly colored with respect to the edge coloring
	$\col'((u, v)) := \{\col(u), \col(v)\}$.
}
\ifshort
\begin{lemma}[\appref{thm:edge-coloring-lemma}]
	\thmedgecoloringlemma
\end{lemma}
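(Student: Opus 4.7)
The proof plan is direct and reduces to a per-triple verification along $W$. Write $W = (v_0, v_1, \dots, v_q)$. By definition, $W$ is $2$-\pref{} precisely when, for every $i \in \nullto{q-2}$, the three colors $\col(v_i)$, $\col(v_{i+1})$, $\col(v_{i+2})$ are pairwise distinct; and $W$ is properly colored with respect to $\col'$ precisely when, for every such $i$,
\[
\{\col(v_i), \col(v_{i+1})\} \neq \{\col(v_{i+1}), \col(v_{i+2})\}.
\]
So it suffices to show that, under the assumption that adjacent vertices carry different colors, these two ``per-triple'' conditions are equivalent.

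Fix $i \in \nullto{q-2}$ and abbreviate $a = \col(v_i)$, $b = \col(v_{i+1})$, $c = \col(v_{i+2})$. The hypothesis on adjacent vertices gives $a \neq b$ and $b \neq c$ for free, so pairwise distinctness of $\{a,b,c\}$ is equivalent to the single condition $a \neq c$. On the other hand, the two-element sets $\{a,b\}$ and $\{b,c\}$ share the element $b$, so they coincide if and only if their remaining elements coincide, i.e., if and only if $a = c$. Hence $\{a,b\} \neq \{b,c\}$ is also equivalent to $a \neq c$, matching the $2$-\prefness{} condition at index $i$.

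Taking the conjunction over all $i \in \nullto{q-2}$ finishes both directions simultaneously. I do not expect any obstacle here beyond making sure to invoke the assumption that no two adjacent vertices share a color (this is what rules out the degenerate case $a = b$ or $b = c$, in which $\{a,b\} = \{b,c\}$ could hold without contradicting $2$-\prefness{}, and conversely in which $2$-\prefness{} would fail trivially). This assumption is benign in our setting because an \rppref{} walk with $r \ge 1$ never uses such an arc, so for the intended application to shortest $2$-\pref{} paths on symmetric digraphs one first deletes every arc between equally colored endpoints and then applies the lemma to reduce to \pecp{}.
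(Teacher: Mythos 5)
Your proof is correct and follows essentially the same route as the paper: the paper's one-line argument for the converse is exactly your observation that, with adjacent colors distinct, both conditions reduce per triple to $\col(v_i) \neq \col(v_{i+2})$. The explicit use of the no-equal-adjacent-colors assumption is precisely the hypothesis the lemma states, so nothing further is needed.
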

\else
\begin{lemma}
	\thmedgecoloringlemma
\end{lemma}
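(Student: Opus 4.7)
The plan is to reduce the equivalence to a purely local statement on each triple of consecutive vertices, and then verify it by a one-line case analysis that crucially uses the adjacency hypothesis.

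First I would unfold both notions into position-wise conditions. Writing $W = (v_0, v_1, \dots, v_q)$, the walk is $2$-\pref{} iff for every $i \in \nullto{q-2}$ the three colors $\col(v_i), \col(v_{i+1}), \col(v_{i+2})$ are pairwise distinct, while $W$ is properly colored with respect to $\col'$ iff for every such $i$ the two edge colors $\{\col(v_i), \col(v_{i+1})\}$ and $\{\col(v_{i+1}), \col(v_{i+2})\}$ differ. Hence it suffices to prove, for a fixed $i$ and under the hypothesis that neither $\col(v_i) = \col(v_{i+1})$ nor $\col(v_{i+1}) = \col(v_{i+2})$, that these two local conditions are equivalent.

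Second, I would carry out the short case analysis. By the adjacency hypothesis, both $\{\col(v_i), \col(v_{i+1})\}$ and $\{\col(v_{i+1}), \col(v_{i+2})\}$ are two-element sets containing the common element $\col(v_{i+1})$. Such two-element sets coincide iff their second elements agree, i.e., iff $\col(v_i) = \col(v_{i+2})$. Thus $W$ fails to be properly colored at position $i$ iff $\col(v_i) = \col(v_{i+2})$, which (again invoking the adjacency hypothesis, so that the only way for the triple $v_i, v_{i+1}, v_{i+2}$ to fail pairwise-distinctness is $\col(v_i) = \col(v_{i+2})$) is precisely when $W$ fails to be $2$-\pref{} at position $i$. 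Quantifying over $i$ then yields the claimed equivalence.

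This is essentially a bookkeeping argument, so there is no real obstacle; the only point deserving attention is to make explicit use of the no-adjacent-equal-color hypothesis. Without it, one of the edge-color sets could collapse to a singleton and the two characterizations would diverge (e.g., $\col(v_i) = \col(v_{i+1}) \neq \col(v_{i+2})$ would be $\col'$-properly colored but not $2$-\pref{}). I would therefore state this reliance explicitly when deriving the set equality.
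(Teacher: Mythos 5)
Your proposal is correct and follows essentially the same route as the paper's proof: both reduce the equivalence to a local comparison of the two edge-color sets $\{\col(v_i),\col(v_{i+1})\}$ and $\{\col(v_{i+1}),\col(v_{i+2})\}$ at each position, using the no-adjacent-equal-color hypothesis so that equality of these sets is equivalent to $\col(v_i)=\col(v_{i+2})$. Your write-up merely makes explicit the role of that hypothesis, which the paper's terser argument uses implicitly.
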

\fi
\appendixproof{thm:edge-coloring-lemma}{
\begin{proof}
	The forward direction is immediate.
	Conversely, if $W$ is properly colored with respect to~$\col'$,
	and $(u, v)$, $(v, w)$ are any two consecutive edges of~$W$,
	then we must have $\{\col(u), \col(v)\} \neq \{\col(v), \col(w)\}$
	and therefore $\col(u) \neq \col(w)$.
	Thus, $W$~is then $2$-\pref{}.
\end{proof}
}

As a properly edge-colored \stpath{} can be found in linear time in symmetric digraphs \cite[Cor.~10]{szeider2003transitions}, 
we obtain the following.

\begin{observation}
	\label{thm:r=2}
	Finding a shortest $2$-\pref{} \stwalk{} in a symmetric digraph is solvable in linear time.
\end{observation}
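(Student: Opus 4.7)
The plan is as follows. First, I would preprocess the digraph by deleting every arc $(u,v)$ with $\col(u) = \col(v)$; such an arc cannot appear in any $2$-\pref{} walk, since traversing it would produce two consecutive equally colored vertices. After this preprocessing, no two adjacent vertices share a color, so the hypothesis of \cref{thm:edge-coloring-lemma} is met. Applying that lemma, an \stwalk{} in $G$ is $2$-\pref{} under $\col$ if and only if it is properly edge-colored under the induced edge coloring $\col'((u,v)) := \{\col(u), \col(v)\}$. Hence the task reduces to finding a shortest properly edge-colored \stwalk{} in a symmetric edge-colored digraph.

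Second, I would invoke the linear-time algorithm of \citet{szeider2003transitions} (Cor.~10), which produces a properly edge-colored \stpath{} in symmetric edge-colored digraphs. To lift this from ``some properly edge-colored \stpath{}'' to a \emph{shortest} properly edge-colored \stwalk{}, I would use a BFS on the natural state-augmented graph whose states are pairs (vertex, color of last traversed arc) with $\bot$ for the initial state, and whose transitions $(v, C) \to (w, C')$ exist whenever $(v,w) \in A(G)$, $\col'((v,w)) = C'$, and $C' \neq C$. Reachability of any state $(t, \cdot)$ then gives the answer, and the length of the shortest \stwalk{} in the auxiliary graph equals the length of the shortest $2$-\pref{} \stwalk{} in $G$.

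The main obstacle is ensuring linear running time: the naive auxiliary graph may have up to $\sum_v d(v)^2$ transitions, which is superlinear. To keep things in $O(n+m)$ time, one must exploit the structure of the alternation condition --- essentially running an ``implicit BFS'' that, when processing a state $(v, C)$, groups incident arcs by their color and forwards to all colors different from $C$ in amortized linear total work. This is precisely the style of argument underlying Szeider's linear-time result, and adapting it from reachability to shortest BFS-distances is routine; once this is established, combining it with the reduction above yields the claimed linear-time algorithm.
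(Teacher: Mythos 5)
Your first step (deleting monochromatic arcs and invoking \cref{thm:edge-coloring-lemma}) is exactly the paper's reduction: the observation is stated right after that lemma, and the paper then simply cites \citet{szeider2003transitions} (Cor.~10), which finds a properly edge-colored \stpath{} in linear time, and stops there. Where you differ is the algorithmic backend: instead of leaning on that citation, you re-derive the result by a BFS on the state space of pairs (vertex, color of last traversed arc), which has the advantage of addressing the \emph{shortest walk} formulation head-on (Szeider's corollary literally concerns paths, and a shortest properly colored walk need not be a path, so your route is arguably the more self-contained and more faithful one). The one place your argument stays informal is the linear-time claim for the implicit BFS: the danger is not only the $\sum_v \deg(v)^2$ transition count but also that many states $(v,C_1),(v,C_2),\dots$ for the same $v$ could each scan $\deg(v)$ arcs. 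The standard fix, which your ``amortized grouping'' gestures at but does not state, is that per vertex only the first two dequeued states with \emph{distinct} last-colors ever need to be expanded: for any outgoing arc color $c'$, at least one of these two stored arrivals is compatible with $c'$ and has distance no larger than any later arrival, so all subsequent states at $v$ can be discarded. With that observation the total work is $O(n+m)$ and your proof goes through; without it, ``routine'' is doing real work in the argument.
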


\subsection{Fixed-Parameter Tractability with Detour Length}

We now prove our problem to be fixed-parameter tractable with respect to~$r+k$
where $k$ denotes the length of a \emph{detour} the path may take
(i.e., the desired length~$\ell$ is~$\dist(s, t) + k$).
\ifshort{}\else{}
To this end, we first prove that \fp{} can be solved in~$r^{\bigO(r+\ell)} \cdot \ell \cdot m$ time, where~$m$ is the number of arcs in the input digraph.
We will use this algorithm as a subroutine when proving the fixed-parameter tractability with~$r+k$ subsequently.
\fi{}

\ifshort{}\else{}
\subsubsection{Fixed-parameter tractability by combined locality and detour length}
\fi{}

\ifshort{}
Let us first exclude some degenerate cases.
\else{}
With \cref{thm:dfp-fpt-l} at hand, we can prove the main result of this section, namely that finding an \rppref{} \stpath{} of length~$\ell$ is fixed-parameter tractable when parameterized by~$r + k$,
where~$k \coloneqq \ell - \dist(s, t)$ is the \emph{detour length} of the path.
To this end, let us first exclude some degenerate cases.
\fi{}
If~$k < 0$, then~$\ell < \dist(s, t)$, and we have a trivial \no-instance at hand.
If~$k=0$, then any solution must be a shortest path.
As any shortest walk is also a shortest path, we can use our algorithm for \fw{}, see \cref{thm:dft-fpt-r}.
Finally, we may assume that each of the~$n$ vertices in~$G$ reaches~$t$.
In all, we have that~$\dist(s, t) < \ell < n$ and thus~$0 < k < n-\dist(s, t)$.

\newcommand{\thmdfpabove}{%
	\label{thm:dfp-above}
	\fp{} can be solved in $r^{\bigO(r+k)} \cdot \ell n^2 m$ time,
	where~$n$ and $m$ are the number of vertices and arcs of the input digraph,
	and~$\ell$ is the length and~$k$ is the detour length of the desired path.
}
\ifshort
\begin{theorem}[\appref{thm:dfp-above}]
	\thmdfpabove
\end{theorem}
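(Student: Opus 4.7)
My plan builds on the walk algorithm of \cref{thm:dft-fpt-r} via a structural analysis of distance layers. I first compute $d_s(v) \coloneqq \dist(s, v)$ for every $v$ via BFS. Call an arc $(u, v) \in A(G)$ \emph{forward} if $d_s(v) = d_s(u) + 1$ and \emph{non-forward} otherwise. Since every forward arc raises $d_s$ by exactly one whereas a non-forward arc cannot raise it at all, any $s$-$t$ path of length $\ell = d + k$ (with $d \coloneqq \dist(s, t)$) must use at least $d$ forward arcs and hence at most $k$ non-forward arcs.

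The crucial revisit observation is that any two vertices in distinct layers $L_i \coloneqq \{v : d_s(v) = i\}$ are necessarily different, so a candidate walk fails to be a path only by revisiting a vertex within some single layer. A simple counting argument shows that at most $k$ layers can be visited more than once and the total number of ``extra'' vertex visits across all such layers is at most $k$. Building on this, I extend the walk DP of \cref{thm:dft-fpt-r} by augmenting each cell with (i) a counter $j \in \nullto{k}$ for non-forward arcs used so far and (ii) for each multi-visit layer encountered, an unordered representative family of the already-visited vertices in that layer (a set of size at most $k+1$). The per-layer bookkeeping uses the set-family machinery of \cite{fomin2016representative} (\cref{thm:unordered-rep}) against future single-vertex extensions, so each such representative family has size at most $\binom{\bigO(k)}{k} \leq r^{\bigO(k)}$. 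Combined with the $r^{\bigO(r)}$ ordered-representative for color sequences, the overall state count per (vertex, position) cell is $r^{\bigO(r+k)}$.

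The main obstacle is the coordinated maintenance of the color-sequence representative and the per-layer visited-vertex representatives under DP transitions: one step adds a new vertex to the current layer and demands both an ordered-representative update (for the last $r$ colors, as in \cref{alg:fpt-r}) and an unordered-representative update (for the affected layer's visited-vertex subset). I would address this by taking products of the ordered-color-representative family (size $r^{\bigO(r)}$) with the per-layer unordered-representative families (combined size $r^{\bigO(k)}$), giving $r^{\bigO(r+k)}$ combined configurations per cell; correctness follows from the transitivity property (\cref{thm:transitive}) combined with the representative-preservation of the underlying \cite{fomin2016representative} routine. Multiplying this by the outer DP dimensions $\bigO(n \cdot \ell)$ and a per-transition cost of $\bigO(n \cdot m)$ for arc lookups and representative-family updates yields the claimed $r^{\bigO(r+k)} \cdot \ell n^2 m$ running time; the correctness at termination is just the statement that a configuration reaches $t$ at position $\ell$ with a color sequence realizable by an actual path, which follows by the standard induction on path position as in \cref{lem:fpt-r-repr}.
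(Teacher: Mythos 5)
Your route is genuinely different from the paper's (which decomposes the sought path at \emph{distance separators} with respect to distance to~$t$, shows via \cref{thm:dist-sep} that consecutive separators are at most $2k+1$ steps apart, observes that the segments between them lie in vertex-disjoint bands $B_{u,v}$ so that path-ness only has to be enforced inside short segments, and solves each segment by an FPT-in-length subroutine before stitching with ordered representatives), but as written it has two genuine gaps. The first is the step ``taking products of the ordered-color-representative family with the per-layer unordered-representative families.'' The color suffix and the visited-vertex sets of a walk are \emph{correlated}: they come from the same walk. A product of separately computed representative families represents the product family, not the original correlated family --- after pruning you may only be left with a color suffix from one walk paired with a visited set from another, and no actual walk realizes that combination, so the inductive invariant of \cref{lem:fpt-r-repr} breaks. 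Transitivity (\cref{thm:transitive}) does not repair this; what is needed is a \emph{joint} representative in which each stored object couples the vertex set with the encoded color suffix (this is exactly why the paper's \cref{alg:fpt-l} stores sets $X \uplus \pi(\sigma)$ over the combined universe $V(G) \cup (\colors \times \oneto{r})$ and computes a single unordered representative of that family).

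The second gap is the bookkeeping of ``multi-visit layers.'' At the moment the DP first enters a layer it cannot know whether the layer will be revisited later, so ``for each multi-visit layer encountered, a representative family of the already-visited vertices in that layer'' is not a well-defined DP state; you would have to remember the visited vertices of \emph{every} recently visited layer and argue when they may be discarded, otherwise the remembered set grows to size $\Theta(\ell)$ and the state space is no longer $r^{\bigO(r+k)}$. A discarding rule can be justified, but it needs a bound on the total \emph{backward displacement} (at most $k$ layers over the whole path), not merely on the number of non-forward arcs: a single non-forward arc may jump back arbitrarily many layers, so counting arcs alone does not confine revisits to a window of $\bigO(k)$ layers around the current one. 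Your proposal states neither the displacement bound nor the windowing/discarding mechanism, and both are essential to make the augmented walk DP sound; the paper sidesteps this entirely because its distance bands make segments between separators automatically vertex-disjoint, confining all vertex-disjointness checks to segments of length at most $2k+1$.
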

\else
\begin{theorem}
	\thmdfpabove
\end{theorem}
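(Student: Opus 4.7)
The plan is to extend the representative-family DP of \cref{alg:fpt-r} in two complementary ways: first, by augmenting the state with a ``detour budget'' $j$, and second, by enumerating two auxiliary pivot vertices that decompose the sought path into shorter pieces in which walks necessarily coincide with paths. I begin by precomputing BFS distances $d_s(v) = \dist(s, v)$ and $d_t(v) = \dist(v, t)$ for all $v \in V(G)$ in $O(m)$ time.

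Next, for each vertex $v$ and each $j \in \{0, 1, \ldots, k\}$, I aim to compute an ordered representative family $\hat{W}^j_v \orepr^r W^j_v$, where $W^j_v$ collects the color endings of \rppref{} $s$-$v$ paths of length exactly $d_s(v) + j$. Transitions from $(u, j')$ via an arc $(u,v)\in A(G)$ land in $(v, j' + d_s(u) + 1 - d_s(v))$; since $d_s(v) \leq d_s(u) + 1$ by the BFS property, the detour budget is non-decreasing along every arc, so I process the DP states in non-decreasing order of the total length $d_s(v) + j$. There are $O(nk)$ such states, each carrying a representative family of size $r^{O(r)}$ by \cref{thm:ordered-rep}.

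The main obstacle is forcing the computed objects to be simple paths rather than arbitrary walks. To handle this, I exploit the observation preceding the theorem: any \rppref{} \stwalk{} of length at most $\dist(s,t) + r$ is automatically a simple path. When $k \leq r$, this fact combined with \cref{thm:dft-fpt-r} already gives the claim with room to spare. When $k > r$, I enumerate a pivot pair $(a, b) \in V(G) \times V(G)$---contributing the $n^2$ factor---that splits the sought path into a shortest $s$-$a$ prefix, an $a$-$b$ middle segment of length at most $\dist(a, b) + r$, and a shortest $b$-$t$ suffix. Each piece then falls under the ``short walk is a path'' regime, and vertex-disjointness across the three pieces is enforced by BFS-layer constraints (the prefix stays in layers $\{0, \ldots, d_s(a)\}$, the suffix in $t$-layers $\{0, \ldots, d_t(b)\}$, and the middle bridges them). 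For each pivot, the augmented DP is run on the three subinstances and the resulting color endings are glued via the representative-families framework, yielding the $r^{O(r+k)}$ parameter dependence.

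The hard part will be rigorously verifying that every \rppref{} $s$-$t$ path of length $\dist(s,t) + k$ admits such a pivot decomposition with vertex-disjoint segments and with matching local rainbowness at the two gluing boundaries. This will demand a careful structural argument, intuitively choosing $(a, b)$ as the first and last vertices at which the path detaches from and rejoins the shortest-path BFS skeleton, together with a counting argument bounding the ``detour mass'' between them by $k$. Once this decomposition lemma is in place, multiplying the per-piece DP cost of $r^{O(r+k)} \cdot \ell m$ by the $n^2$ pivot enumeration yields the claimed total running time of $r^{O(r+k)} \cdot \ell n^2 m$.
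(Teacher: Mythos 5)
Your reduction of the case $k \le r$ to \cref{thm:dft-fpt-r} via the ``short walk is a path'' observation is fine, but the case $k > r$ --- which is the heart of the theorem --- rests on a decomposition claim that is false as stated and is not repaired by the sketched ``first detachment / last rejoin'' argument. A path of detour length~$k$ can leave and rejoin shortest-path structure many times, spread over its entire length (e.g.\ a detour of~$1$ near~$s$ and a detour of~$k-1$ near~$t$). Then your pivots $a$ and $b$ are near the two ends, the middle \abpath{a}{b} segment is almost the whole path, and its detour relative to $\dist(a,b)$ is up to~$k > r$; so it does \emph{not} fall under the ``walk of length $\le \dist + r$ is a path'' regime, and you give no mechanism to certify that this long middle segment is simple. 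Tracking its vertex set by brute force would cost time exponential in its length (which can be $\Theta(n)$), not $r^{\bigO(r+k)}$. Moreover, the asserted vertex-disjointness of the three pieces ``by BFS-layer constraints'' does not hold: the middle segment has slack~$k$ and can re-enter the distance layers occupied by the prefix and suffix within windows of width up to~$k$ around the pivots, and since your DP only remembers color endings (ordered representatives), nothing prevents the glued object from revisiting a prefix vertex there.

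The paper's proof supplies exactly the two ingredients you are missing. First, it splits the path not at a single pivot pair but at \emph{distance separators} (vertices $v_i$ with $d(v_j) > d(v_i)$ for all $j<i$ and $d(v_j) < d(v_i)$ for all $j>i$, where $d(\cdot)=\dist(\cdot,t)$), and proves (\cref{thm:dist-sep}) that along any path of detour at most~$k$ a separator occurs in every window of $2k+1$ consecutive vertices. This makes \emph{every} inter-separator segment short (length $\le 2k+1$), so a separate DP that additionally tracks the segment's vertex set via unordered representative families (\cref{thm:dfp-fpt-l,thm:brute-force}) enforces simplicity inside a segment in $r^{\bigO(r+k)} \cdot m$ time. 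Second, the strict distance monotonicity at separators guarantees that segments between different consecutive separators live in disjoint vertex sets $B_{u,v}$, so gluing color-ending representatives across separators (\cref{thm:above-repr}) can never create a self-intersection --- this is what replaces your unfounded layer-disjointness claim. Without an analogue of the separator-frequency lemma and the disjointness-by-construction of the segment graphs $G_{u,v}$, your single-pivot scheme cannot yield a path within the claimed $r^{\bigO(r+k)} \cdot \ell n^2 m$ bound.
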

\fi

Our approach for \cref{thm:dfp-above} is to merge
our above techniques to keep track of the last~$r$ vertices
with a central observation for paths with detour length~$k$.
To this end, we will show that any hypothetical solution $P^*$ visits in regular intervals so-called \emph{distance separators} --- see \cref{fig:distsep} for an illustration.
At these points, 
we can partition the search space as we know that the subpath of~$P^*$ between two consecutive distance separators lies disjoint from any subpath between two other consecutive distance separators.
We then use a subroutine to compute a representative of all \rppref{} \abpaths{u}{v} of some fixed length whose running time is fixed-parameter tractable with respect to its length.
This fits into the promised running time
as any two distance separators~$u$ and~$v$ are at most~$2k+1$ vertices apart (\cref{thm:dist-sep}).
Indeed, such distance separators can be found in \emph{any} path with bounded detour length.
As mentioned earlier, this approach is inspired by works on parameterizations with respect to the detour length by \citet{bezakova2019detours} and \citet{zschoche2022restless}.
The challenge in our setting is that we need to keep track of the ordered representatives.

We start off with a basic observation.
We denote for every~$v \in V(G)$ by~$d(v) \coloneqq \dist(v, t)$ the distance to~$t$.

\newcommand{\thmvertexdistance}{%
	\label{thm:vertex-distance}
	For any \stpath{} $P = (s = v_0, \dots, v_\ell = t)$ with~$\ell \le d(s)+k$
	we have $i \le d(s) - d(v_i) + k$ for each~$i \in \nullto{\ell}$.
}
\ifshort
\begin{observation}[\appref{thm:vertex-distance}]
	\thmvertexdistance
\end{observation}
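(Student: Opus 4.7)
The plan is to use the fact that the portion of $P$ from $v_i$ to $t$ is itself a walk (in fact a path) from $v_i$ to $t$, and therefore cannot be shorter than the distance $d(v_i) = \dist(v_i, t)$.

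\medskip

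First I would fix $i \in \nullto{\ell}$ and consider the subpath $P_i \coloneqq (v_i, v_{i+1}, \dots, v_\ell = t)$ of $P$. This is an $\abpath{v_i}{t}$ of length exactly $\ell - i$. Since $d(v_i)$ is by definition the length of a shortest such path, we immediately obtain $d(v_i) \le \ell - i$, i.e., $i \le \ell - d(v_i)$.

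Next I would plug in the assumed bound $\ell \le d(s) + k$ to get
\[
    i \le \ell - d(v_i) \le d(s) + k - d(v_i) = d(s) - d(v_i) + k,
\]
which is the desired inequality. No step here should present any real obstacle; the observation is essentially a one-line consequence of the triangle inequality applied to the distance $d(\cdot) = \dist(\cdot, t)$ along the path $P$.
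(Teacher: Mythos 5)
Your proof is correct and matches the paper's argument in substance: both bound the suffix of $P$ from $v_i$ to $t$ below by $d(v_i)$ and combine this with $\ell \le d(s) + k$; the paper merely phrases it as a contradiction while you argue directly. No issues.
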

\else
\begin{observation}
	\thmvertexdistance
\end{observation}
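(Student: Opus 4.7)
The plan is to exploit the simple fact that the tail of the path from $v_i$ to $t$ is itself a walk (indeed, a path) from $v_i$ to $t$, so its length must be an upper bound on the distance $d(v_i) = \dist(v_i, t)$.

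Concretely, I would first observe that since $P = (v_0, \dots, v_\ell)$ is an \stpath{}, its suffix $(v_i, v_{i+1}, \dots, v_\ell) = (v_i, \dots, t)$ is a \abwalk{v_i}{t} of length $\ell - i$. By the definition of $d(v_i)$ as the shortest \abwalk{v_i}{t} length, we therefore have $d(v_i) \le \ell - i$, which rearranges to $i \le \ell - d(v_i)$.

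Next, I would plug in the hypothesis $\ell \le d(s) + k$ to obtain
\[
  i \;\le\; \ell - d(v_i) \;\le\; d(s) + k - d(v_i) \;=\; d(s) - d(v_i) + k,
\]
which is exactly the claimed inequality, valid for every $i \in \nullto{\ell}$.

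There is essentially no obstacle here: the statement is a direct consequence of the triangle inequality for the distance $d$ combined with the length bound on $P$. The only thing to be mindful of is that the argument uses only that $P$ is a walk from $s$ to $t$ (so that the suffix is a walk to~$t$); it neither uses that $P$ is a path (no repeated vertices) nor that $G$ is acyclic, which is consistent with the way \cref{thm:vertex-distance} will later be applied to arbitrary prefixes of a candidate solution.
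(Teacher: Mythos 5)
Your proof is correct and is essentially the paper's argument, just phrased directly rather than by contradiction: both rest on the observation that the suffix of $P$ from $v_i$ to $t$ has length $\ell - i \ge d(v_i)$, combined with $\ell \le d(s) + k$. No further comment needed.
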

\fi
\appendixproof{thm:vertex-distance}
{
\begin{proof}
	Suppose that~$i > d(s) - d(v_i) + k$.
	As any \abpath{v_i}{t} has length at least~$d(v_i)$,
	the length of $P$ is at least~$i + d(v_i) > d(s) + k \ge \ell$.
\end{proof}
}

\begin{definition}[Distance separator]
	Let $P = (s = v_0, v_1, \dots, v_{\ell} = t)$ be a path with detour length~$k \coloneqq \ell - \dist(s, t)$.
	Then~$v_i$ is a \emph{distance separator} if
	\ifshort{}
	 $d(v_i) < d(v_j)$ for all $j < i$ and $d(v_i) > d(v_j)$ for all $j > i$.
	\else{}\[
		d(v_i) < d(v_j) \text{ for all } j < i \quad\text{and}\quad d(v_i) > d(v_j) \text{ for all } j > i.
	\]
	\fi{}
\end{definition}

\begin{figure}[t]
	\centering
	\begin{tikzpicture}[scale=0.95]
	 \def\xr{0.5}
	 \def\yr{0.1} %
	 \def\xmx{26}
	 \def\ymx{40}

		\draw[->,>=latex,color=blue] (0,0) to (\xmx*\xr,0);
		\node at (\xmx*\xr,0)[label=0:{\textcolor{green!50!black}{$P$}}]{};
		\draw[->,>=latex,color=red] (0,0) to node[midway,above,sloped]{distance to~$t$}(0,\ymx*\yr);
		
		\node at (0,0)[label=-90:{$v_0$}]{};
		\node at (1*\xr,0)[label=-90:{$v_1$}]{};
		\node at (2*\xr,0)[label=-90:{$v_2$}]{};
		\node at (3.5*\xr,0)[label=-90:{$\cdots$}]{};
		\node at (8*\xr,0)[label=-90:{$\cdots$}]{};
		\node at (15*\xr,0)[label=-90:{$\cdots$}]{};
		\node at (22*\xr,0)[label=-90:{$\cdots$}]{};
		\node at (25*\xr,0)[label=-90:{$v_\ell$}]{};

		\node (s) at (0,36*\yr)[vertex,label=180:{$s$}]{};
		\node (z) at (25*\xr,0*\yr)[vertex,label=90:{$t$}]{};

		\tikzstyle{mypath}=[-,thick,green!50!black]
		\draw[mypath] (s) 
			to ++(1*\xr,0*\yr)
			to ++(1*\xr,-1*\yr)
			to ++(1*\xr,2*\yr)
			to ++(1*\xr,-1*\yr)
			to ++(1*\xr,-3*\yr) node[vertex] (a1){};
			;
		
		\draw[mypath] (a1) 
			to ++(1*\xr,-3*\yr)
			to ++(1*\xr,0*\yr)
			to ++(1*\xr,-1*\yr)
			to ++(1*\xr,2*\yr)
			to ++(1*\xr,-1*\yr)
			to ++(1*\xr,-4*\yr) node[vertex] (a2){}
			;
			
		\draw[mypath,dashed] (a2) to++ (1*\xr,-2*\yr); 
			
		\node (aL) at (19*\xr,6*\yr)[vertex]{};
		\draw[mypath,dashed] (aL) to++ (-1*\xr,2*\yr); 
		
		\node at ($(a2)!0.5!(aL)$)[scale=1.5]{$\ddots$};
		
		\draw[mypath] (aL) 
			to ++(1*\xr,-3*\yr)
			to ++(1*\xr,2*\yr)
			to ++(1*\xr,-4*\yr)
			to ++(1*\xr,1*\yr)
			to ++(1*\xr,-1*\yr)
			to (z)
			;
		
		\newcommand{\crossit}[1]{
			\draw[lightgray,dashed,thin] (0,0|-#1) to (\xmx*\xr,0|-#1);
			\draw[lightgray,dashed,thin] (#1|-0,0) to (#1|-0,\ymx*\yr);
			\node at (#1)[vertex]{};
		}
		\crossit{a1}
		\crossit{a2}
		\crossit{aL}
		
		\node at (a1|-0,0)[label=-90:{$v_{i_1}$}]{};
		\node at (a2|-0,0)[label=-90:{$v_{i_2}$}]{};
		\node at (aL|-0,0)[label=-90:{$v_{i_j}$}]{};

		\draw[decorate, decoration = {brace, mirror, amplitude=3pt}, thick] ([yshift=-1*\yr cm] a1.center |- a2.center) -- ([yshift=-1*\yr cm] a2.center) node [midway,yshift=-0.3cm,font=\small] {$\le 2k+1$};
	\end{tikzpicture}
	\caption{
		An exemplary \stpath{} $P$, circles marking distance separators.
		The~$x$-axis shows the vertices of~$P$ in the order of their appearance.
		The $y$-axis shows the distance of the current vertex to~$t$.
		Our algorithm exploits the property that the subpaths between any two distance separators are short (i.e., of length at most~$2k+1$) and internally vertex-disjoint.
	}
	\label{fig:distsep}
\end{figure}

By definition, if we have two distance separators $v_i$ and $v_j$, $j > i$, then we know that between~$v_i$ and~$v_j$, $P$ only visits vertices~$w$ with~$d(v_i) > d(w) > d(v_j)$.
\citet{zschoche2022restless} showed that a path with detour length~$k$ regularly visits distance separators.
We need a faintly different statement.

\newcommand{\thmdistsep}{%
	\label{thm:dist-sep}
	Let~$P = (s = v_0, v_1, \dots, v_p = v)$ be a path of length at most~$d(s) - d(v) + k$ and let~$v$ be a distance separator.
	Then, for all~$i \in \nullto{p-2k}$, there is a~$j \in \nullto{2k}$ such that~$v_{i+j}$ is a distance separator.
}
\ifshort
\begin{lemma}[\appref{thm:dist-sep}]
	\thmdistsep
\end{lemma}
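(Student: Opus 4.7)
The plan is to establish a stronger global bound: over the entire path $P$, at most $2k$ indices fail to be distance separators. Since any window contains $2k+1$ vertices, a union bound will then yield a distance separator inside each window. To set this up, define the productivity count $y_j \coloneqq d(v_j) - d(v_{j-1}) + 1$ for $j \in \{1, \dots, p\}$; the arc $(v_{j-1}, v_j)$ forces $d(v_{j-1}) \le d(v_j) + 1$, so $y_j \ge 0$, with $y_j = 0$ exactly when step $j$ is \emph{productive}, i.e., $d$ drops by one. Telescoping gives $\sum_{j=1}^{p} y_j = p + d(v) - d(s) \le k$ by the hypothesis on $p$.

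Split the distance separator condition into the \emph{left property} LP ($d(v_i) < d(v_j)$ for all $j < i$) and the \emph{right property} RP ($d(v_i) > d(v_j)$ for all $j > i$); being a DS is exactly having both. The heart of the proof is that at most $k$ indices fail LP and at most $k$ fail RP. For LP, introduce the \emph{left deficit} $e_i \coloneqq d(v_i) - \min_{j \le i} d(v_j) \ge 0$. Clearly $e_0 = 0$, and since $v_p$ is assumed to be a DS, its distance is the strict minimum on the whole path, so $e_p = 0$. A short case analysis of $e_j - e_{j-1}$ shows: a non-productive step contributes $y_j - 1 \ge 0$ and makes $j$ fail LP; a productive step with $e_{j-1} \ge 1$ contributes $-1$ and likewise fails LP; a productive step with $e_{j-1} = 0$ contributes $0$ and makes $j$ satisfy LP. Telescoping $\sum_{j=1}^{p} (e_j - e_{j-1}) = 0$ then equates the number of LP-failing indices with $\sum_{j} y_j \le k$. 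The RP bound is perfectly analogous: using the right deficit $e'_i \coloneqq \max_{j \ge i} d(v_j) - d(v_i)$ and telescoping from $i = p$ back to $i = 0$ gives at most $\sum_j y_j - e'_0 \le k$ RP-failures.

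Combining, any window $\{v_i, v_{i+1}, \dots, v_{i+2k}\}$ of $2k+1$ consecutive vertices contains at most $k + k = 2k$ vertices that fail LP or fail RP, so at least one has both properties and is therefore a distance separator. The main obstacle is the careful bookkeeping in the deficit evolution — in particular, noting that a productive step taken while $e_{j-1} \ge 1$ does \emph{not} yield a new LP despite locally decreasing $d$, and that this ``wasted'' productive step is paid for in the telescope exactly by the positive contribution of an earlier non-productive step that pushed $e$ above zero. Once these cases are verified, the rest is a simple counting argument.
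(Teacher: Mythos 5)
Your proof is correct, but it takes a genuinely different route from the paper's. The paper argues by contradiction on a single bad window: it takes the largest index~$i$ for which $v_i,\dots,v_{i+2k}$ contains no distance separator (so that $v_{i+2k+1}$ is one), shows by a pigeonhole argument that every distance value in $\xtoy{d(v_{i+2k})}{d(v_i)}$ must then be attained twice inside that window, and derives the two incompatible bounds $d(v_i)-d(v_{i+2k})<k$ and $d(v_i)-d(v_{i+2k})\ge k$. You instead run a global amortized count with two potential functions (the left and right deficits), charging every failure of the left or right property to a unit of the budget $\sum_j y_j = p + d(v) - d(s)\le k$. Your case analysis of the deficit evolution checks out --- including the subtle case of a productive step taken at positive deficit, which still fails the left property and is paid for by an earlier non-productive step --- and the two telescoping identities do give at most $k$ failures of each property, hence at most $2k$ non-separators in any window of $2k+1$ consecutive vertices. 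This buys a strictly stronger conclusion than the lemma: the \emph{entire} path has at most $2k$ vertices that are not distance separators; moreover, the telescoping inequalities only improve when $e_p$ or $e_0'$ is positive, so your argument does not actually need the hypothesis that $v$ itself is a distance separator. The paper's proof is more local and self-contained but more delicate; your potential-function argument is cleaner and yields more information.
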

\else
\begin{lemma}
	\thmdistsep
\end{lemma}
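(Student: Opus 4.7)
Set $\phi(q) \coloneqq d(v_q)$. Since $(v_q, v_{q+1})$ is an arc we have $\phi(q+1) \ge \phi(q) - 1$, so $h(q) \coloneqq \phi(q) + q$ is non-decreasing with $h(p) - h(0) = p - (d(s) - d(v)) \le k$ by hypothesis. In these terms, $v_q$ is a distance separator if and only if $\phi(q) < \phi(q')$ for all $q' < q$ and $\phi(q) > \phi(q')$ for all $q' > q$. The lemma follows as soon as I show (i) any two consecutive distance separators $v_a, v_b$ on $P$ satisfy $b - a \le 2k+1$, and (ii) the first distance separator $v_{q_0}$ satisfies $q_0 \le 2k$; together these guarantee that every window of $2k+1$ consecutive vertices contains a separator.

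For (i), the identity $b - a = (\phi(a) - \phi(b)) + (h(b) - h(a))$ rewrites as $b - a = K + 1 + \delta$, where $K \coloneqq \phi(a) - \phi(b) - 1$ and $\delta \coloneqq h(b) - h(a) \le k$. Hence it suffices to prove $K \le \delta$, which yields $b - a \le 2\delta + 1 \le 2k + 1$.

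I establish $K \le \delta$ by showing that each of the $K$ intermediate integer values $v_i \coloneqq \phi(a) - i$ ($i = 1, \ldots, K$) is attained by $\phi$ at least twice on $[a+1, b-1]$, so that the $b - a - 1$ positions there accommodate at least $2K$ visits, giving $\delta + K = b - a - 1 \ge 2K$. The first occurrence of $v_i$ is the new-running-min position $t_i \in (a, b)$, characterised by $\phi(q) > v_i$ for $q < t_i$ and $\phi(t_i) = v_i$. Because no position of $(a, b)$ is a separator, the left-witness condition $\phi(q') \le v_i$ for some $q' < t_i$ cannot hold, so there must be a right-witness $q' > t_i$ with $\phi(q') \ge v_i$; and since $v_b$ is a separator with $\phi(b) < v_i$, such $q'$ must lie in $(t_i, b)$. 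If $\phi(q') = v_i$, then $q'$ is the sought second occurrence. Otherwise $\phi(q') > v_i$, but $\phi(b) < v_i$, and because $\phi$ can decrease by at most one per step $\phi$ must pass through the value $v_i$ at some position $q^* \in (q', b) \subseteq [a+1, b-1]$. An entirely analogous argument for (ii) shows that each of the $K'' \coloneqq \phi(0) - \phi(q_0)$ values $\phi(0), \phi(0)-1, \ldots, \phi(q_0)+1$ occurs at least twice on $[0, q_0-1]$, and with $q_0 = K'' + \delta_0$ (for $\delta_0 \coloneqq h(q_0) - h(0) \le k$) this gives $K'' \le \delta_0$ and $q_0 \le 2k$.

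The main obstacle is the "second occurrence" step, in particular the case where the right-witness satisfies $\phi(q') > v_i$ strictly: one genuinely needs the step-size-one bound on $\phi$'s decreases together with the strict inequality $\phi(b) < v_i$ to force $\phi$ to re-cross the value $v_i$ strictly before position $b$. Once this counting is in place, the rest is the algebraic identity $b - a = K + 1 + \delta$ and the total-budget bound $\delta \le k$.
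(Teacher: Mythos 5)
Your proof is correct, and it reaches the statement by a genuinely different organization than the paper. The paper argues by contradiction on a single maximal separator-free window $\{v_i,\dots,v_{i+2k}\}$ (chosen so that $v_{i+2k+1}$ is a separator), shows every distance value in $[d(v_{i+2k}),d(v_i)]$ must be attained twice inside that window, and then plays the pigeonhole bound $d(v_i)-d(v_{i+2k})<k$ against the path-length bound $d(v_i)-d(v_{i+2k})\ge k$ obtained from $p\le d(s)-d(v)+k$. You instead prove two structural facts that immediately imply the lemma (using that $v_p=v$ is a separator): consecutive distance separators are at most $2k+1$ apart, and the first separator occurs within the first $2k$ positions. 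You get these from the monotone potential $h(q)=d(v_q)+q$, whose total increase is exactly the detour budget and is at most $k$, combined with the double count that every integer distance value strictly between two consecutive separator values is attained at least twice in between; the second occurrence comes from the ``no left-witness, hence right-witness'' dichotomy for non-separators together with the fact that $d$ drops by at most one per arc. The counting heart (each level paid for twice out of a budget of $k$ extra steps) is the same in both proofs, but your decomposition is more modular, avoids the maximal-window contradiction, and makes the role of the budget transparent; the paper's version is shorter because it treats only one window. Two points you should spell out when writing this up: the existence of the first occurrence $t_i$ with $\phi(t_i)=v_i$ exactly requires the same unit-decrease crossing argument you use for the second occurrence (it is not automatic that the running minimum hits the value rather than skipping it, so invoke the step bound there too), and, as in the paper's own proof, you implicitly use that $d(\cdot)$ is finite on the vertices of $P$, which is covered by the standing assumption that every vertex reaches $t$. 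Neither is a gap in substance.
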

\fi
\appendixproof{thm:dist-sep}
{
\begin{proof}
	Suppose that there is some~$i < p-2k$ such that none of the~$2k+1$ vertices~$v_i, \dots, v_{i+2k}$ is a distance separator.
	Let~$i$ be the largest such index, that is, $v_{i+2k+1}$ is a distance separator.
	We claim that
	for every~$x \in \xtoy{d(v_{i+2k})}{d(v_i)}$,
	there are two distinct~$y, z \in \xtoy{i}{i+2k}$ such that~$d(v_y) = d(v_z) = x$:
	If there is an~$x$ such that there is only one one vertex~$v_y$ with~$d(v_y) = x$ and~$v_y$ is not a distance separator,
	there exists a~$z \in \nullto{p}$ such that~$d(v_y)=d(v_z)$.
	As~$v_{i+2k+1}$ is a distance separator, we have~$z < i$.
	But then, as the path~$(s = v_0, \dots, v_z)$ has length at least~$d(s)-d(v_z)$ and the path~$(v_{i+2k}, \dots, v_p = v)$ has length at least~$d(v_{i+2k})-d(v)$,
	and as~$i+2k - z \ge 2k+1$,
	we have
	\[
		p \ge d(s) - d(v_z) + 2k+1 + d(v_{i+2k}) - d(v) \ge p + k+1,
	\]
	using~$p \le d(s) - d(v) + k$ --- a contradiction.
	Thus, our claim holds.

	From our claim we obtain by pigeonhole principle that~$\abs{\xtoy{i}{i+2k}} \ge 2 \abs{\xtoy{d(v_{i+2k})}{d(v_i)}}$.
	This yields
	\[
		d(v_i)- d(v_{i+2k}) = \abs{\xtoy{d(v_{i+2k})}{d(v_i)}} - 1 \le \textstyle\frac{1}{2}\abs{\xtoy{i}{i+2k}} - 1 = k - \frac{1}{2} < k.
	\]
	But as the length of the path~$(s = v_0, \dots, v_i)$ is at least~$d(s) - d(v_i)$ and the length of the path~$(v_{i+2k}, \dots, v_p = v)$ is at least~$d(v_{i+2k}) - d(v)$,
	we have
	\[
		p \ge (d(s)-d(v_i)) + 2k + (d(v_{i+2k}) - d(v)) \ge p-k + 2k + d(v_{i+2k}) - d(v_i),
	\]
	using~$p \le d(s) - d(v) + k$.
	That is, $d(v_{i}) - d(v_{i+2k}) \ge k$ --- a contradiction.
\end{proof}
}

Our algorithm can now guess the positions of the distance separators.
As the subpaths between the distance separators are (internally) vertex-disjoint,
we then only need to find an \rppref{} path that matches the color sequence of the subpath to the last distance separator and only uses vertices after this distance separator.
For any two distance separators~$u$ and~$v$ in our graph~$G$, we define
\begin{equation*}
	G_{u, v} \coloneqq G[B_{u,v}\cup\{u,v\}]
	\text{ and }
	B_{u, v} \coloneqq \begin{cases}
		\{w \in V(G) \mid d(u) > d(w) > d(v)\} & \text{ if } u \ne s,\\
		\{w \in V(G) \mid d(w) > d(v)\} & \text{ if } u = s.
	\end{cases}
\end{equation*}
On these graphs, we will compute an \orepresentative{} for the family of \rppref{} \abpaths{u}{v} of some length in~$G_{u,v}$.
Indeed, as we will append these paths to some \rppref{} \abpath{s}{u} that ends on some color sequence~$\tau$, we need the family to be $r$-compatible with~$\tau$.
We say that a path~$P = (v_0, \dots, v_q)$ \emph{fits} $\tau$ if~$\tau$ is $r$-compatible to~$(\col(v_0), \dots, \col(v_{\min\{r-1,q\}}))$.
We will need to compute an \orepresentative{} for the following family for any two distance separators~$u$ and~$v$ in our graph~$G$, any integer~$q \in \Nzero$, and any color sequence~$\tau$ of length at most~$r$:
\[
	\PPP^q_{\tau}(G_{u,v}) \coloneqq \left\{ \sigma\ \middle|\ 
	\begin{aligned}
		& \abs{\sigma} = \min \{q+1, r\} \text{ and there is an \rppref{} length-$q$ \abpath{u}{v}} \\
		& \text{in~$G_{u, v}$ that fits } \tau \text{ and whose color sequence ends on } \sigma
	\end{aligned}
	\right\}
\]

\ifshort{}
Computing such families can be done with an adaptation of \cref{alg:fpt-r} for walks,
the difference being that we additionally need to remember the set of vertices visited so far by our path%
\ifappendix{}
	(see \cref{thm:dfp-fpt-l} in the appendix).
\else{}
	(refer to the full version).
\fi{}%
Remembering these vertices comes at the cost of an additional running time factor of~$r^{\bigO(q)}$, where~$q$ is the length of the path.
As the path length is an upper bound for~$r$, this proves \fp{} to be fixed-parameter tractable with respect to the path length.
\else{}
This can be done using \cref{alg:fpt-l}.
\fi{}

\toappendix{
\ifshort{}
\subsection{Fixed-Parameter Tractability by Length}
\label{app:sec:fpt-l}
\else{}
\subsubsection{Fixed-Parameter Tractability by Length}
\fi{}

We will next prove the following.

\begin{proposition}
	\label{thm:dfp-fpt-l}
	One can decide in~$r^{\bigO(r+\ell)} \cdot m$ time whether a given vertex-colored, $m$-arc digraph contains an \rppref{} \stpath{}.
\end{proposition}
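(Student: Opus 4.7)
The plan is to adapt \cref{alg:fpt-r} to paths by augmenting each state of the dynamic program with the set of vertices already visited by the current prefix, and to extend the ordered-representative machinery of \cref{thm:ordered-rep} accordingly.

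Concretely, for each vertex $v \in V(G)$ and each $p \in \nullto{\ell}$, I would maintain a family $\hatty\PPP^p_v$ of pairs $(\sigma, V_P)$, where $V_P \subseteq V(G)$ is the vertex set of some \rppref{} length-$p$ \abpath{s}{v} and $\sigma$ is the suffix of length $\min\{p+1, r\}$ of its color sequence. The first step is to define an \emph{extended} ordered representative: $\hatty\PPP \subseteq \PPP$ represents $\PPP$ if, for every pair $(\rho, Y)$ consisting of a sequence $\rho$ with $|\rho| \le r$ and a vertex set $Y \subseteq V(G)$, whenever $\PPP$ contains some $(\sigma, V_P)$ with $\sigma$ being $r$-compatible to $\rho$ and $V_P \cap Y = \emptyset$, then $\hatty\PPP$ also contains such a pair.

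The key technical step is a combined encoding that reduces the construction of this extended representative to the unordered \cref{thm:unordered-rep}: I would encode $(\sigma, V_P)$ as the set $V_P \cup \pi(\sigma)$ in the disjoint-union universe $V(G) \sqcup (\colors \times \oneto{r})$, and each obstacle $(\rho, Y)$ as $Y \cup \pi'(\rho)$, with $\pi, \pi'$ as defined in~\eqref{eq:pi}. By \cref{thm:pi-disjoint} together with the disjointness of the two parts of the universe, an encoded set is disjoint from an encoded obstacle if and only if $(\sigma, V_P)$ satisfies $(\rho, Y)$ in the above sense. Encoded sets have size at most $\ell + 1 + r(r+1)/2$, and the relevant obstacles have size at most $(\ell - p) + r$; hence \cref{thm:unordered-rep} yields a representative of size at most $\binom{2\ell + 1 + r(r+1)/2 + r - p}{\ell + 1 + r(r+1)/2}$. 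A brief case split ($r^2 \le \ell$ versus $r^2 > \ell$), combined with the folklore inequality $\binom{z}{k} \le (ez/k)^k$, shows that this is bounded by $r^{\bigO(\ell + r)}$ whenever $r \ge 2$; the degenerate cases $r \in \{0, 1\}$ reduce to finding a shortest path or a properly vertex-colored \stpath{}, both solvable in polynomial time.

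Finally, I would run the dynamic program analogous to \cref{alg:fpt-r}: initialize $\hatty\PPP^0_s \coloneqq \{((\col(s)), \{s\})\}$ and, for each $p \ge 1$ and each vertex $v$, extend every pair in $\hatty\PPP^{p-1}_u$ along each incoming arc $(u, v)$, discarding extensions that revisit a vertex (i.e.\ with $v \in V_P$) or violate \rpprefness{}, and then compute an extended representative of the resulting family. Correctness follows by induction along the lines of \cref{lem:fpt-r-repr}, using a transitivity statement for extended representatives mirroring \cref{thm:transitive}. I expect the main difficulty to be verifying tightness of the size bound in the regime $r^2 > \ell$, where the color-encoded part $\pi(\sigma)$ dominates the vertex part $V_P$ in the universe, so that one has to check carefully that the representative-family bound of $r^{\bigO(\ell+r)}$ indeed absorbs the $r^2$-scale contribution. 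The overall running time $r^{\bigO(r+\ell)} \cdot m$ then follows by summing the representative-computation time from \cref{thm:unordered-rep} over the $\ell$ iterations and the $m$ arcs, absorbing polynomial factors in $\ell$ and $r$ into the exponent of $r$.
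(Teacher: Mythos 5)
Your plan matches the paper's proof essentially step for step: the paper likewise encodes each (vertex set, color-suffix) pair as a set over $V(G) \uplus (\colors \times \oneto{r})$ via $\pi$/$\pi'$, computes unordered $(\ell-p+r)$-representatives with \cref{thm:unordered-rep} inside the same arc-extension dynamic program (its notion of ``partial representatives'' is exactly your extended representative restricted to obstacles $Y \uplus \pi'(\rho)$ with $\abs{Y} \le \ell - p$), and proves correctness by the same induction-plus-transitivity argument. The only cosmetic difference is the size estimate, where the paper bounds $\binom{r'+r+\ell}{r+\ell-p}$ via Vandermonde's convolution rather than your case split with the folklore inequality; both yield $r^{\bigO(r+\ell)}$.
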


Compared with the family $\WWW^p_v$ computed in \cref{alg:fpt-r} for walks, we additionally remember the set of vertices visited so far by our path.
As we will use unordered representatives to ensure that every vertex is visited at most once,
we will directly remember~$\pi(\sigma)$ 
instead of~$\sigma$.
As~$\pi(\sigma) \subseteq \colors \times \oneto{r}$, the universe of our family will be~$V(G) \cup (\colors \times \oneto{r})$.
However, our family will only contain sets~$X \uplus S$ such that~$X \subseteq V(G)$ and there exists a sequence~$\sigma$ such that~$\pi(\sigma) = S$.

Formally, we will compute a representative of the following for every~$p \in \nullto{\ell}$ and~$v \in V(G)$.

\begin{equation}
	\label{eq:path-table}
	\PPP^{p}_{v} \coloneqq \left\{
	\begin{gathered}
		X \uplus S\\
		\subseteq V(G) \cup (\colors \times \oneto{r})
	\end{gathered}
	\ \middle\lvert\ 
	\begin{aligned}
		& X \subseteq V(G),  \abs{X}=p+1, \text{ and}\\
		& \text{there is } \sigma=(a_1, \dots, a_{p'}) \text{ with}\\
		& p' = \min\{p+1, r\}, \pi(\sigma) = S,\\
		& \text{such that $G$ contains an \rppref{}}\\
		& \text{\abpath{s}{v} $P$ with } V(P)=X \text{ whose}\\
		& \text{color sequence ends on } \sigma
	\end{aligned}
	\right\}.
\end{equation}

Our algorithm works mostly in analogy to \cref{alg:fpt-r}; however it runs only on so-called \emph{partial representatives},
which are a relaxation of standard representatives in the sense that one does not need a representative for every set~$Y$ of some size, but only for sets that are contained in a specified family~$\YYY$.

\begin{definition}[Partial representative]
	\label{def:prepr}
	Let~$\TTT$ be a family of~$p$-element sets and let~$\YYY$ be a set family.
	A subfamily~$\hatty \TTT$ of~$\TTT$ is a \prepresentative{$\YYY$} (written $\hatty \TTT \prepr^\YYY \TTT$) if the following holds for every~$Y \in \YYY$:
	If~$\TTT$ contains a set~$X$ disjoint from~$Y$,
	then~$\hatty\TTT$ contains a set~$\hatty X$ disjoint from~$Y$.
\end{definition}

Note that if~$\YYY$ contains all sets of size at most~$q$, then the above definition coincides with the definition of (unordered) $q$-representatives.

Partial representatives still have the same transitivity property as unordered representatives.

\begin{observation}
	\label{obs:prepr:transitive}
	If~$\hatty\TTT \prepr^\YYY \tildy\TTT$ and~$\tildy\TTT \prepr^\YYY \TTT$, then~$\hatty\TTT \prepr^\YYY \TTT$.
\end{observation}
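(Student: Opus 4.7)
The plan is to prove this by directly unwinding the definition of partial representative twice in succession, following exactly the pattern used for the analogous statement about ordered representatives (Observation~\ref{thm:transitive}). The statement is essentially a chain-composition property, so no new combinatorial ideas are needed beyond careful bookkeeping.

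More concretely, I would fix an arbitrary witness set $Y \in \YYY$ and assume that $\TTT$ contains some element $X$ with $X \cap Y = \emptyset$; the goal is to exhibit some $\hatty X \in \hatty\TTT$ with $\hatty X \cap Y = \emptyset$. First, I would apply the hypothesis $\tildy\TTT \prepr^\YYY \TTT$ to the pair $(Y, X)$: since $Y \in \YYY$ and $X \in \TTT$ is disjoint from $Y$, the definition of partial representative (\cref{def:prepr}) yields a set $\tildy X \in \tildy\TTT$ with $\tildy X \cap Y = \emptyset$. Second, I would apply $\hatty\TTT \prepr^\YYY \tildy\TTT$ to the pair $(Y, \tildy X)$: again, because $Y \in \YYY$ and $\tildy X \in \tildy\TTT$ is disjoint from $Y$, there exists $\hatty X \in \hatty\TTT$ with $\hatty X \cap Y = \emptyset$. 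Since $Y$ was arbitrary, this establishes $\hatty\TTT \prepr^\YYY \TTT$.

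There is no real obstacle here; the key observation is simply that the same family $\YYY$ appears on both sides of both hypotheses, so the same $Y$ can be handed off from one invocation to the next without any adjustment. This is why the proof is essentially identical in structure to the ordered-representative version, just with ``$r$-compatible to $\rho$'' replaced everywhere by ``disjoint from $Y$'' and the quantifier over $\rho$ replaced by a quantifier over $Y \in \YYY$.
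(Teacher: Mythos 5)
Your proposal is correct and matches the paper's own proof: both fix $Y \in \YYY$ and a disjoint $X \in \TTT$, then chain the two hypotheses to obtain first $\tildy X \in \tildy\TTT$ and then $\hatty X \in \hatty\TTT$ disjoint from $Y$. Nothing further is needed.
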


\begin{proof}
	Let~$Y \in \YYY$
	and let~$X \in \TTT$ with~$X \cap Y = \emptyset$.
	Then by definition of partial representatives,
	there is
	a~$\tildy X \in \tildy\TTT$ with~$\tildy X \cap Y = \emptyset$.
	Then there also exists
	a~$\hatty X \in \hatty\TTT$ with~$\hatty X \cap Y = \emptyset$, and the claim follows.
\end{proof}

Further, if all sets in~$\YYY$ have size at most~$q$, then any ``proper'' unordered \representative[$q$] is also a \prepresentative{$\YYY$}.

\begin{observation}
	\label{obs:prepr:repr}
	Let~$\TTT$ be a family of~$p$-element sets and let~$\YYY$ be a family of sets of size at most~$q$.
	If~$\hatty\FFF$ is a \representative[$q$] of~$\TTT$,
	then~$\hatty\FFF$ is a \prepresentative{$\YYY$} of~$\TTT$.
\end{observation}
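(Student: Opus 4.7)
The plan is to unfold both definitions and observe that the hypothesis of \cref{def:unordered-rep} is already strictly stronger than the hypothesis of \cref{def:prepr} under the given size assumption on $\YYY$. Concretely, I would pick an arbitrary $Y \in \YYY$ and use the assumption that $\abs{Y} \le q$ to invoke the $q$-representative property of $\hatty\FFF$ directly on $Y$.

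In more detail: suppose $\TTT$ contains a set $X$ disjoint from $Y$. Since $\abs{Y} \le q$ by assumption on $\YYY$, \cref{def:unordered-rep} applied to $Y$ yields a set $\hatty X \in \hatty\FFF$ with $\hatty X \cap Y = \emptyset$. Because this holds for every $Y \in \YYY$, $\hatty\FFF$ satisfies the defining property of a \prepresentative{$\YYY$} of $\TTT$ as given in \cref{def:prepr}.

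There is essentially no obstacle here: the statement is a direct set-theoretic consequence of the two definitions, with the only subtle point being to verify that the size restriction on $\YYY$ is exactly what is needed to transfer the universal quantifier over $q$-sized sets to the restricted quantifier over members of $\YYY$. In particular, no properties of $\TTT$ beyond those given in the statement are used, and the argument works uniformly regardless of the structure of $\YYY$ (e.g., whether $\YYY$ happens to contain all sets of size at most $q$, in which case one simply recovers the usual notion of $q$-representative).
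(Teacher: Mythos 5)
Your argument is correct and matches the paper's own proof exactly: for any $Y \in \YYY$, the bound $\abs{Y} \le q$ lets the $q$-representative property of $\hatty\FFF$ supply the required disjoint set $\hatty X$. Nothing further is needed.
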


\begin{proof}
	Let~$Y \in \YYY$ and let~$X \in \TTT$ be disjoint from~$Y$.
	Then $\hatty\FFF$ contains a set~$\hatty X$ that is also disjoint from~$Y$ as~$\abs{Y} \le q$, and the claim follows.
\end{proof}

Recall that we are given a graph~$G$ with two terminals~$s$ and~$t$, a coloring~$\col \colon V(G) \to \colors$, and two integers~$r$ and~$\ell$ as input.
In the following, let~$\YYY^q_r$ contain all those sets that can be partitioned into a subset of~$V(G)$ of size at most~$q$ and a subset of~$(\colors \times \oneto{r})$ of size at most~$r$ which is~$\pi'(\rho)$ for some color sequence~$\rho$, that is,
\[
	\YYY^q_r \coloneqq \{ Y \uplus T \mid \abs{Y} \le q,\, \abs{T} \le r, \text{ and there is a sequence } \rho \text{ with } \pi'(\rho) = T \}.
\]
Note that the sets in~$\YYY^q_r$ have size at most~$q+r$.
Our algorithm now works as follows.

\begin{alg}\label{alg:fpt-l}
	Set~$\hatty\PPP^0_{s} \ceq \{ (\col(s)) \}$
	and for all~$v \in V(G) \setminus \{s\}$, set~$\hatty\PPP^0_{v} \ceq \emptyset$.
	Now, for each~$p = 1, 2, \dots, \ell$,
	for each~$v \in V(G)$,
	compute~$\QQQ^p_v$, which is
	\begin{align*}
			&\displaystyle\bigcup_{u \in N^-(v)}
			\left\{
				\begin{gathered}
					(X \cup \{v\}) \cup {}\\
					\pi((a_1, \dots, a_{p+1}))
				\end{gathered}
				\ \middle\lvert\
				\begin{aligned}
					& v \notin X, (X\cup \pi((a_1, \dots, a_p))) \in \hatty\PPP^{p-1}_u,\\
					& \col(v) = a_{p+1} \notin (a_1, \dots, a_p)
				\end{aligned}
			\right\}\\
			\intertext{ if $p < r$ and}
			&\displaystyle\bigcup_{u \in N^-(v)}
			\left\{
				\begin{gathered}
					(X \cup \{v\}) \cup {}\\
					\pi((a_2, \dots, a_{r+1}))
				\end{gathered}
				\ \middle\lvert\
				\begin{aligned}
					& v \notin X, (X\cup \pi((a_1, \dots, a_r))) \in \hatty\PPP^{p-1}_u,\\
					& \col(v) = a_{r+1} \notin (a_1, \dots, a_r)
				\end{aligned}
			\right\}
	\end{align*}
	if~$p \ge r$, and compute an (unordered) \representative[$(\ell-p+r)$] $\hatty\PPP^p_v$ for~$\QQQ^p_{v}$ using \cref{thm:unordered-rep}.
\end{alg}

We next prove that~$\hatty\PPP^p_{v}$ are indeed partial representatives of~$\PPP^p_v$.
The proof is very similar to the one for \cref{lem:fpt-r-repr}.

\begin{lemma}
	\label{lem:fpt-l-repr}
	For each~$v \in V(G)$ and~$p \in \nullto{\ell}$,
	the family $\hatty \PPP^p_{v}$
	computed by \cref{alg:fpt-l}
	contains at most~$\binom{r'+r+\ell}{r+\ell-p}$ sets,
	where~$r' = r(r+1)/2$,
	and
	is a \prepresentative{$\YYY^{\ell-p}_r$} of $\PPP^p_{v}$ as defined in~\eqref{eq:path-table}.
\end{lemma}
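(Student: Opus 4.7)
The plan is to proceed by induction on $p$, mirroring the proof of \cref{lem:fpt-r-repr} for walks but tracking the additional vertex component. The base case ($p = 0$) follows immediately from the initialization $\hatty\PPP^0_s = \{(\col(s))\}$ together with $\hatty\PPP^0_v = \emptyset$ for $v \neq s$.

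For the inductive step, assume $\hatty\PPP^{p-1}_u \prepr^{\YYY^{\ell-p+1}_r} \PPP^{p-1}_u$ for every $u \in V(G)$. I will split the goal $\hatty\PPP^p_v \prepr^{\YYY^{\ell-p}_r} \PPP^p_v$ into two parts, linked by transitivity (\cref{obs:prepr:transitive}): first show that the recurrence captures all relevant sets, i.e., $\QQQ^p_v \prepr^{\YYY^{\ell-p}_r} \PPP^p_v$; second, that the unordered representative step preserves partial representation, i.e., $\hatty\PPP^p_v \prepr^{\YYY^{\ell-p}_r} \QQQ^p_v$. The second part is immediate from \cref{obs:prepr:repr}, since every set in $\YYY^{\ell-p}_r$ has size at most $(\ell-p)+r$, so the computed $(\ell-p+r)$-representative is automatically a partial representative for that family.

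The first part carries the main work. Given $Y = Y_V \uplus \pi'(\rho) \in \YYY^{\ell-p}_r$ and a set $X^* \cup \pi(\sigma) \in \PPP^p_v$ disjoint from $Y$, I identify the predecessor $u$ of $v$ on the realizing \rppref{} path $P$, extract the corresponding set $X \cup \pi(\sigma') \in \PPP^{p-1}_u$ (where $X^* = X \cup \{v\}$ and $\sigma'$ is the window that precedes the step to $v$), and construct $Y^* \coloneqq (Y_V \cup \{v\}) \uplus \pi'((\col(v)) \circ \rho) \in \YYY^{\ell-p+1}_r$, truncating the auxiliary sequence to length at most $r$ if necessary. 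The key technical claim is that $X \cup \pi(\sigma')$ is disjoint from $Y^*$: the disjointness of the vertex parts is immediate because $v \notin X$ (as $P$ is a path) and $X \subseteq X^*$ is disjoint from $Y_V$, while the disjointness $\pi(\sigma') \cap \pi'((\col(v)) \circ \rho) = \emptyset$ reduces via \cref{thm:pi-disjoint} to the $r$-compatibility of $\sigma'$ with $(\col(v)) \circ \rho$, which in turn follows from the given compatibility of $\sigma$ with $\rho$ together with the \rpprefness{} of $P$ (which rules out the only potentially problematic equality, $\col(v) = $ the oldest color in $\sigma'$). Applying the induction hypothesis to $Y^*$ yields some $X' \cup \pi(\sigma'') \in \hatty\PPP^{p-1}_u$ disjoint from $Y^*$; the extension step of \cref{alg:fpt-l} is then triggered, because $v \notin X'$ and $\col(v)$ not appearing in $\sigma''$ both follow directly from the disjointness with $Y^*$ (the latter via the first coordinate of $\pi'$). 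A second application of \cref{thm:pi-disjoint}, now using the sliding-window relationship between $\sigma''$ and its extension by $\col(v)$, shows that the resulting set of $\QQQ^p_v$ is disjoint from $Y$, completing step one.

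The size bound follows from \cref{thm:unordered-rep} applied to the family of sets of uniform size $(p+1) + \abs{\pi(\sigma)}$ (for fixed $p$), where the representative is taken against sets of size at most $\ell-p+r$; the binomial simplification (via $\binom{a}{b}=\binom{a}{a-b}$) yields the stated expression up to the constants in $r'$. I expect the main obstacle to be the index bookkeeping in the core compatibility translation between $\sigma$--$\rho$ and $\sigma'$--$((\col(v)) \circ \rho)$, and in particular making the argument go through uniformly in the two regimes $p < r$ (where $\sigma'$ is shorter than $r$ and the algorithm's first branch is used) and $p \geq r$ (where the standard sliding window applies).
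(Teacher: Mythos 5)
Your proposal matches the paper's proof essentially step for step: induction on $p$, splitting the goal via \cref{obs:prepr:transitive} and \cref{obs:prepr:repr}, and in the core step passing to the predecessor $u$ of $v$ on the realizing path, augmenting the obstruction to $Y^* = (Y\cup\{v\}) \uplus \pi'((\col(v))\circ\rho)$ so the induction hypothesis applies, then extending the returned set back by $v$ and the shifted color window and verifying disjointness via \cref{thm:pi-disjoint}. The compatibility bookkeeping you flag (including the $p<r$ versus $p\ge r$ regimes and the size bound from \cref{thm:unordered-rep}) is handled in exactly the same way in the paper, so your approach is sound and essentially identical.
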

\begin{proof}
	Our proof is by induction.
	By the initial assignments of $\hatty \PPP^0_{v}$ for each~$v \in V(G)$, the statement is correct for~$p=0$.
	Now, fix some~$p \in \oneto{\ell}$ and assume that~$\hatty\PPP^{p-1}_u$
	is a \prepresentative{$\YYY^{\ell-p+1}_r$} for~$\PPP^{p-1}_u$ for all~$u \in V(G)$.

	Let~$(Y \uplus \pi'(\rho)) \in \YYY^{\ell-p}_r$ and let~$\rho = (b_1, \dots, b_q)$.
	Suppose there exists a set~$(X \uplus \pi(\sigma)) \in \PPP^p_v$ that is disjoint from
	$(Y \uplus \pi'(\rho))$,
	that is,
	$X \cap Y = \emptyset$ and~$\sigma$ is $r$-compatible to~$\rho$.
	We claim that there exists~$(\hatty X \uplus \pi(\hatty\sigma)) \in \hatty\PPP^p_v$
	which is disjoint from~$(Y \uplus \pi'(\rho))$,
	thus proving that
	$\hatty\PPP^p_v$ is a \prepresentative{$\YYY^{\ell-p}_r$} of $\PPP^p_{v}$.
	As the bound on~$\abs{\hatty\PPP^p_v}$ then follows from \cref{thm:unordered-rep}, we are done once the claim is proven.
	We will prove the claim first for~$p \ge r$ and afterwards for~$p < r$.

	If~$p \ge r$, then~$\sigma$ is an $r$-sequence~$(a_2, \dots, a_{r+1})$,
	and there exists an \rppref{} length-$p$ \abpath{s}{v} $P$ whose color sequence ends on~$\sigma$ and whose vertex set is~$X$.
	Let~$a_1$ be the color that~$P$ visits just before visiting the colors in~$\sigma$, that is, the color sequence of~$P$ ends on~$(a_1, a_2,\dots, a_{r+1})$.
	Further, let~$u$ be the penultimate vertex visited by~$P$
	and let~$P'$ be the length-$(p-1)$ \subp{} of~$P$ ending on~$u$.
	Then~$P'$ ends on the sequence~$\sigma' \coloneqq (a_1, \dots, a_r)$ and~$V(P') = X \setminus \{v\} \eqqcolon X'$.
	Let~$\rho' \coloneqq (a_{r+1}) \circ \rho$ and let~$Y' \coloneqq Y \cup \{v\}$ (note that~$v \notin Y$).
	Observe that~$\sigma'$ is $r$-compatible to~$\rho'$,
	due to~$\sigma$ being $r$-compatible to~$\rho$ and~$P'$ being \rppref{}.
	Further, $X' \cap Y' = \emptyset$.
	Now, as $(Y' \uplus \pi'(\rho')) \in \YYY^{\ell-p+1}_r$,
	by our induction hypothesis and the definition of partial representatives,
	there exists~$(\hatty X' \uplus \pi(\hatty\sigma')) \in \hatty\PPP^{p-1}_u$
	which is disjoint from $(Y' \uplus \pi'(\rho'))$; thus
	$\hatty X' \cap Y' = \emptyset$ and $\hatty\sigma'$ is $r$-compatible to~$\rho'$.
	Let~$\hatty P'$ be the \abpath{s}{u} with~$V(\hatty P') = \hatty X'$ whose color sequence ends on~$\hatty\sigma'$ and let~$\hatty\sigma' \coloneqq (\hatty a'_1, \dots, \hatty a'_r)$.
	Define~$\hatty\sigma \coloneqq (\hatty a'_2, \dots, \hatty a'_r) \circ (a_{r+1})$ and~$\hatty X \coloneqq \hatty X' \cup \{v\}$.
	As~$u \in N^-(v)$, $v \notin \hatty X'$ and~$\col(v) = a_{r+1} \notin \hatty\sigma'$, we have that~$(\hatty X \uplus \pi(\hatty\sigma)) \in \QQQ^p_v$.
	Finally, observe that~$\hatty\sigma$ is $r$-compatible with~$\rho$ and that~$\hatty X \cap Y = \emptyset$;
	thus~$(\hatty X \uplus \pi(\hatty\sigma)) \cap (Y \uplus \pi'(\rho)) = \emptyset$,
	and we have that
	$\QQQ^p_v$ is a \prepresentative{$\YYY^{\ell-p}_r$} for~$\PPP^p_v$.
	Since $\hatty\PPP^p_v \repr^{(r+\ell-p)} \QQQ^p_v$, 
	we also have that~$\hatty\PPP^p_v$ is a \prepresentative{$\YYY^{\ell-p}_r$} for~$\QQQ^p_v$ by \cref{obs:prepr:repr} (note that the sets in $\YYY^{\ell-p}_r$ have size~$\ell-p+r$).
	Finally, due to the transitivity of partial representatives (\cref{obs:prepr:transitive}), the claim follows for~$p \ge r$.

	If~$p < r$, then~$\sigma$ is a~$(p+1)$-sequence~$(a_1, \dots, a_{p+1})$
	and there exists an \rppref{} length-$p$ \abpath{s}{v} $P$ whose color sequence ends on~$\sigma$.
	Indeed, $\sigma$ is the entire color sequence of~$P$.
	This case is similar to the above case, but there exists no color that is visited before~$\sigma$ in~$P$.
	Hence, in this case, $\sigma' \coloneqq (a_1, \dots, a_p)$
	and~$\hatty\sigma \coloneqq \hatty\sigma \circ (a_{p+1})$.
	The remainder of the proof is the same.
\end{proof}

\begin{lemma}
	\label{lem:fpt-l-time}
	\cref{alg:fpt-l} runs in $r^{\bigO(\ell+r)} \cdot m$ time, where~$m$ is the number of arcs in the input digraph.
\end{lemma}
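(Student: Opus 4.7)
The plan is to mirror the analysis of \cref{lem:fpt-r-time} for walks, with two adjustments needed to handle the fact that (i) each set in $\hatty\PPP^p_v$ now includes the visited-vertex subset (so its size and the representative parameter are both larger), and (ii) we must sum the per-step running time geometrically over $p = 1, \dots, \ell$ using the handshaking identity. First, I would fix $p \in \oneto{\ell}$ and $v \in V(G)$ and bound the work of one step. By \cref{lem:fpt-l-repr}, every in-neighbor's family $\hatty\PPP^{p-1}_u$ has size at most $B \coloneqq \binom{r'+r+\ell}{r+\ell-p+1}$. Since $v$ and $\col(v)$ are fixed, each set in $\hatty\PPP^{p-1}_u$ produces at most one set in $\QQQ^p_v$, so $\abs{\QQQ^p_v} \le \deg^-(v) \cdot B$ and $\QQQ^p_v$ can be assembled from the predecessor families in time $\bigO(\deg^-(v) \cdot B \cdot (\ell + r'))$, where the last factor accounts for copying sets of size at most $\ell + r'$ and performing the colour-disjointness check.

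Next, I would invoke \cref{thm:unordered-rep} to compute the $(\ell-p+r)$-representative $\hatty\PPP^p_v$ from $\QQQ^p_v$. In the notation of that proposition, the set size is $p' \le \ell + r'$ and the representative parameter is $q = \ell - p + r$, so $p' + q \le \ell + r' + r + 1$. This gives a per-step running time of
\[
\bigO\!\left(\abs{\QQQ^p_v} \cdot \binom{p'+q}{p'} (p')^\omega + \abs{\QQQ^p_v} \cdot \binom{p'+q}{q}^{\omega-1}\right),
\]
which, substituting $\abs{\QQQ^p_v} \le \deg^-(v) \cdot B$, is at most $\deg^-(v) \cdot \binom{r'+r+\ell}{r+\ell-p+1}^{O(1)} \cdot (\ell + r')^{O(1)}$.

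Summing over $v \in V(G)$ uses $\sum_v \deg^-(v) = m$ (valid since $G$ is weakly connected), and summing over $p \in \oneto{\ell}$ contributes a factor of $\ell$ which is absorbed in the polynomial part. The total running time is therefore $\bigO\!\left(m \cdot \ell \cdot (\ell + r')^{O(1)} \cdot \binom{r'+r+\ell}{r+\ell}^{O(1)}\right)$. The remaining step is purely estimative: bound the binomial coefficient. Using the standard inequality $\binom{n}{k} \le (n\,e/k)^k$ with $k = r + \ell - p \ge r$ gives
\[
\binom{r'+r+\ell}{r+\ell-p} \le \left(\frac{e(r'+r+\ell)}{r}\right)^{r+\ell} = \left(\bigO\!\left(r + \tfrac{\ell}{r} + 1\right)\right)^{r+\ell},
\]
and in the intended parameter regime (the algorithm will be invoked with $\ell$ bounded by a polynomial in $r$, namely $\ell \le 2k+1$ in the outer algorithm behind \cref{thm:dfp-above}) the base is $\bigO(r)$, yielding $r^{\bigO(r+\ell)}$.

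The only real obstacle is the estimate of the binomial coefficient in the last step; everything else is bookkeeping that follows the walk case almost verbatim. I would therefore be careful to state which regime of $\ell$ the bound is used in and to verify that the polynomial overhead $(\ell + r')^{O(1)}$ and the factor $\ell$ from the outer loop are indeed absorbed into $r^{\bigO(r+\ell)}$ for $r \ge 2$ (and handle $r \le 1$ separately via \cref{thm:r=1}).
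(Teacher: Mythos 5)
Your overall accounting (per-step family sizes via \cref{lem:fpt-l-repr} and \cref{thm:unordered-rep}, one output set per predecessor set, handshaking over in-degrees, absorbing the factor $\ell$) mirrors the paper's proof. The genuine gap is in the one step you yourself flag as ``purely estimative'': your bound $\binom{r'+r+\ell}{r+\ell-p} \le \bigl(e(r'+r+\ell)/r\bigr)^{r+\ell} = \bigl(\bigO(r+\ell/r)\bigr)^{r+\ell}$ is \emph{not} $r^{\bigO(r+\ell)}$ unless $\ell \le r^{\bigO(1)}$, because you shrink the denominator of $(ne/k)^k$ from $k=r+\ell-p$ down to $r$ while simultaneously inflating the exponent to $r+\ell$; when $\ell \gg r$ this gives roughly $(\ell/r)^{\ell}$, i.e.\ $2^{\Theta(\ell\log\ell)}$ rather than $2^{\bigO(\ell\log r)}$. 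The lemma, however, is stated (and needed) unconditionally: in \cref{thm:brute-force} the algorithm is run on paths of length up to $r+2k+1$, and $k$ is not bounded by any function of $r$, so your fallback assumption ``$\ell$ is polynomial in $r$ because $\ell\le 2k+1$'' is both unwarranted and false as a bound in $r$ --- with it, \cref{thm:dfp-above} would degrade from $r^{\bigO(r+k)}$ to $k^{\bigO(k)}$-type running time.

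The paper closes exactly this step differently: it bounds $\binom{r'+r+\ell}{r'+p}=\binom{r^2+\ell}{r+\ell-p}$ (for $r\ge 3$) via Vandermonde's convolution as $\sum_{j}\binom{r^2}{j}\binom{\ell}{r+\ell-p-j}\le 2^\ell\,(r+\ell-p+1)\,r^{2(r+\ell-p)}$, which is $r^{\bigO(r+\ell)}$ for all $\ell$ once $r\ge 2$. Alternatively, your own inequality can be made to work if you do not decouple base and exponent: $(ne/k)^k$ is increasing in $k\le n$, so the worst case is $k=r+\ell-1$, where the base is $e(r'+r+\ell)/(r+\ell-1)\le e(r'/r+2)=\bigO(r)$, again giving $r^{\bigO(r+\ell)}$ unconditionally. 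As written, though, your argument proves only a conditional (and for the paper's purposes insufficient) version of the lemma.
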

\begin{proof}
	Let~$v \in V(G)$ and~$p \in \oneto{\ell}$.
	Each set in~$\QQQ^p_v$ has size at most~$r'+p$, where~$r' = r(r+1)/2$.
	To compute~$\QQQ^p_v$, we need to consider the sets~$\hatty\PPP^{p-1}_u$ for each in-neighbor~$u$ of~$v$.
	By \cref{alg:fpt-l}, $\hatty\PPP^{p-1}_u$ is an $(r+\ell-p+1)$-representative for a family of sets of size~$r'+p-1$.
	Hence, by \cref{thm:unordered-rep}, $\hatty\PPP^{p-1}_u$ contains~$\bigO\bigl(\binom{r'+r+\ell}{r'+p}\bigr)$ sets.
	As~$r'+r \le r^2$ whenever~$r \ge 3$,
	we have
	\begin{align*}
		\binom{r'+r+\ell}{r'+p} &= \binom{r'+r+\ell}{r+\ell-p} \le \binom{r^2+\ell}{r+\ell-p}
		= \sum_{k=0}^{r+\ell-p} \textstyle\binom{r^2}{k} \binom{\ell}{r+\ell-p -k}\\
		\textstyle &\le 2^\ell \sum_{k=0}^{r+\ell-p} \textstyle\binom{r^2}{k}
		\textstyle \le 2^\ell (r+\ell-p) r^{2(r+\ell-p)} \in r^{\bigO(\ell)},
	\end{align*}
	wherein the last equality in the first line holds due to Vandermonde's convolution~\cite[Eq.~5.22]{graham1994concrete}.
	Thus, $\abs{\QQQ^p_v} \le \deg(v) \cdot r^{\bigO(\ell)}$, and the time to compute~$\QQQ^p_v$ is linear in its size.
	Thus, the time to compute the~$(r+\ell-p)$-representative~$\hatty\PPP^p_v$ for~$\QQQ^p_v$ is upper-bounded by
	\begin{align*}
		& \abs{\QQQ^p_v} \cdot \textstyle \binom{r'+r+\ell}{r'+p} (r'+p)^\omega + \abs{\QQQ^p_v} \cdot \binom{r'+r+\ell}{r'+p}^{\omega-1}
		\in \abs{\QQQ^p_v} r^{\bigO(\ell)} \subseteq r^{\bigO(\ell)}.
	\end{align*}
	Doing this for each~$v \in V(G)$ and each~$p \in \oneto{\ell}$ yields the claimed running time by the handshaking lemma.
\end{proof}

\cref{thm:dfp-fpt-l} now trivially follows from \cref{lem:fpt-l-repr,lem:fpt-l-time}.

\begin{proof}[Proof of \cref{thm:dfp-fpt-l}]
	By \cref{lem:fpt-l-repr}, \cref{alg:fpt-l} correctly computes a \prepresentative{$\YYY^{\ell-p}_r$} $\hatty \PPP^p_{t}$ for the family~$\PPP^p_{t}$ in~\eqref{eq:path-table} for each~$p \in \nullto{\ell}$,
	that is, there exists an \rppref{} \stpath{} of length~$\ell$ if and only if~$\hatty \PPP^\ell_t \ne \emptyset$.
	The running time follows from \cref{lem:fpt-l-time}.
\end{proof}
} %

\newcommand{\thmbruteforce}{%
	\label{thm:brute-force}
	Given a digraph~$G$ with~$m$ arcs, an integer~$r$, two distance separators~$u, v \in V(G)$, an integer~$q \in \Nzero$, and a color sequence~$\tau$ of length at most~$r$, one can compute in~$r^{\bigO(r+q)} \cdot m$ time an \orepresentative{} for~$\PPP^q_{\tau} (G_{u,v})$ of size at most~$r^{\bigO(r+q)}$.
}
\ifshort
\begin{lemma}[\appref{thm:brute-force}]
	\thmbruteforce
\end{lemma}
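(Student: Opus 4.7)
My approach is to reuse \cref{alg:fpt-l} (the dynamic program underlying \cref{thm:dfp-fpt-l}) on the subgraph~$G_{u,v}$, with two straightforward modifications: start from~$u$ and run for exactly~$q$ steps, and modify the initialization at~$u$ to encode the ``fits~$\tau$'' condition.

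Recall that \cref{alg:fpt-l} maintains, at step~$p$ and vertex~$w$, a family~$\hatty\PPP^p_w$ of sets of the form~$X \cup \pi(\sigma)$ which, by \cref{lem:fpt-l-repr}, is a \prepresentative{$\YYY^{\ell-p}_r$} of~$\PPP^p_w$; here~$X$ is the vertex set of a length-$p$ prefix and $\pi(\sigma)$ encodes its color tail~$\sigma$. I would run the same recurrence on~$G_{u,v}$ with target length~$\ell \coloneqq q$. To encode the fit condition, I pre-load the initial state at~$u$ with a full-length color tail representing~$\tau$: pad~$\tau$ at its beginning with fresh symbols (not in~$\colors$) to obtain a sequence~$\tau^\star$ of length exactly~$r-1$. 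If~$\col(u)$ appears in~$\tau^\star$, no path can fit~$\tau$ and the output is empty; otherwise set $\hatty\PPP^0_u \coloneqq \{\{u\} \cup \pi(\tau^\star \circ (\col(u)))\}$ and $\hatty\PPP^0_w \coloneqq \emptyset$ for~$w \neq u$. The subsequent DP steps then proceed exactly as in \cref{alg:fpt-l} using the $p \geq r$ branch from step~$1$ onward. Since the fresh symbols never match any real vertex color, they impose no spurious constraints, while the residual entries of~$\tau$ in the tail correctly block any color that would violate $r$-compatibility with~$\tau$, until they have scrolled out of the window. At the end, projecting $\hatty\PPP^q_v$ onto its sequence component yields the desired \orepresentative{}.

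Correctness follows by applying \cref{lem:fpt-l-repr} with $\ell=q$: $\hatty\PPP^q_v$ is then a \prepresentative{$\YYY^0_r$} of~$\PPP^q_v$ (adjusted for the fit condition via our initialization). Given any~$\sigma \in \PPP^q_\tau(G_{u,v})$ that is $r$-compatible with a sequence~$\rho$ of length at most~$r$, the underlying path~$P$ witnesses $V(P) \cup \pi(\sigma) \in \PPP^q_v$; since $\emptyset \cup \pi'(\rho) \in \YYY^0_r$ is disjoint from $V(P) \cup \pi(\sigma)$ by \cref{thm:pi-disjoint}, the partial representative property yields some $\hatty X \cup \pi(\hatty\sigma) \in \hatty\PPP^q_v$ also disjoint from~$\pi'(\rho)$, whence $\hatty\sigma$ is $r$-compatible with~$\rho$. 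The time and size bounds of $r^{\bigO(r+q)} \cdot m$ and $r^{\bigO(r+q)}$ respectively follow by plugging $\ell=q$ into \cref{lem:fpt-l-time}. The main technical point is to verify that the padded initial tail exactly captures the fit condition, i.e., that a length-$q$ \abpath{u}{v}~$P$ in~$G_{u,v}$ fits~$\tau$ iff the modified DP admits a sequence of rainbowness-preserving extensions corresponding to~$P$; once this routine check is done, the remainder is a mechanical transcription of \cref{lem:fpt-l-repr} and \cref{lem:fpt-l-time} to~$G_{u,v}$.
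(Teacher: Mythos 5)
Your proposal is correct and follows essentially the same route as the paper: both reduce the task to the path dynamic program of \cref{alg:fpt-l}, read the answer off the family computed at~$v$ after the appropriate number of steps, and obtain correctness, size, and running time from \cref{lem:fpt-l-repr}, \cref{thm:pi}, and \cref{lem:fpt-l-time}. The only difference is how the ``fits~$\tau$'' condition is encoded: the paper prepends an auxiliary path of~$\abs{\tau}$ fresh vertices colored according to~$\tau$ in front of~$u$ and runs \cref{alg:fpt-l} unchanged for length~$\abs{\tau}+q$ (so \cref{lem:fpt-l-repr} applies as a black box), whereas you seed the initial state at~$u$ with a dummy-padded tail and switch to the $p\ge r$ branch immediately --- an equivalent encoding, but one that modifies the algorithm, so \cref{lem:fpt-l-repr} cannot be cited verbatim and its induction must be rechecked with your base case, the ``routine check'' you yourself defer.
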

\else
\begin{lemma}
	\thmbruteforce
\end{lemma}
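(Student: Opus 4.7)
The plan is to adapt the dynamic programming of \cref{alg:fpt-r} (walks) to paths within the subgraph~$G_{u,v}$, tracking simultaneously the last~$r$ colors visited \emph{and} the set of vertices used so far. Since $\PPP^q_\tau(G_{u,v})$ consists of color sequences arising from \emph{paths} rather than walks, we cannot simply use \cref{alg:fpt-r}; the set of visited vertices must enter the DP state, which in turn forces us to use \emph{partial representatives} (cf.\ \cref{def:prepr} in the appendix of \cref{ssec:FPTwalks}).

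First I would initialize the table at~$u$ with the single set $\{u\} \cup \pi((\col(u)))$ if~$\tau$ is $r$-compatible to~$(\col(u))$, and with~$\emptyset$ otherwise; the compatibility check incorporates the constraint imposed by the prefix~$\tau$. Then, for each~$p = 1, 2, \dots, q$ and each vertex~$w$ of~$G_{u,v}$, I would extend all representative entries for in-neighbors~$w' \in N^-(w)\cap V(G_{u,v})$ by the vertex~$w$ and its color~$\col(w)$, dropping the oldest color once $p \geq r$, and refusing to extend if~$w$ already appears in the stored vertex set or if the extension would violate \rpprefness{}. This is the same recurrence as in \cref{alg:fpt-l} (appendix), but now driven over~$G_{u,v}$ and initialised via~$\tau$.

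To control the size after each extension step, I would apply \cref{thm:unordered-rep} to compute an unordered~$q'$-representative with~$q' \coloneqq q + 1 + r$, viewing each DP entry as a set over the universe $V(G_{u,v}) \cup (\colors \times \oneto{r})$. The sets to be blocked are of the form $Y \cup \pi'(\rho)$ where~$Y$ is the vertex set of the future continuation (of size at most~$q+1$) and~$\pi'(\rho)$ encodes~$r$-compatibility with the color sequence~$\rho$ at the junction; these blockers have size at most~$q'$. Since each DP entry has size at most~$q + 1 + r(r+1)/2$, the representative size is bounded by $\binom{q + 1 + r(r+1)/2 + r}{q + 1 + r}$, which lies in~$r^{\bigO(r+q)}$; the running time bound then follows by summing~$r^{\bigO(r+q)}$ over all arcs of~$G_{u,v}$ and invoking the handshaking lemma.

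Correctness of the representative pruning is the main technical obstacle, and I expect it to follow by an induction on~$p$ analogous to \cref{lem:fpt-r-repr} and \cref{lem:fpt-l-repr}: if some \rppref{} \abpath{u}{w}{} of length~$p$ with end-sequence~$\sigma$ is compatible with a blocker~$Y \cup \pi'(\rho)$, then the inductive hypothesis yields a representative predecessor whose extension by~$w$ is still compatible with the same blocker. The two ingredients needed here are \cref{obs:prepr:transitive} (transitivity of partial representatives) and \cref{thm:pi} (translation between ordered sequence-representatives and unordered set-representatives via~$\pi$). Finally, the output \orepresentative{} for~$\PPP^q_\tau(G_{u,v})$ is obtained by reading off the sequence components at~$v$ after step~$q$, thereby discarding the vertex-set coordinate of the DP state.
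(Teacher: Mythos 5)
Your overall architecture --- a path DP over~$G_{u,v}$ whose state couples the visited vertex set with~$\pi(\sigma)$, pruned via unordered representatives over the universe $V(G)\cup(\colors\times\oneto{r})$ and translated back to an \orepresentative{} through \cref{thm:pi} --- is essentially the paper's, which simply runs \cref{alg:fpt-l} and invokes \cref{lem:fpt-l-repr,lem:fpt-l-time}. The genuine gap is in how you handle~$\tau$. A path $P=(v_0,\dots,v_q)$ fits~$\tau$ only if $\tau$ is $r$-compatible to $(\col(v_0),\dots,\col(v_{\min\{r-1,q\}}))$, so the prefix constraint reaches the first $\min\{r,q+1\}$ vertices of~$P$, not just~$u$. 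Your initialization stores only $\{u\}\cup\pi((\col(u)))$ and checks $r$-compatibility of~$\tau$ with~$(\col(u))$ once; after that the DP state carries no trace of~$\tau$'s colors, so the extension steps never test the second through $r$-th vertices of the path against the tail of~$\tau$. Consequently the table at~$v$ can contain end-sequences of \rppref{} \abpaths{u}{v} that do not fit~$\tau$, and the representative pruning may even keep only such non-fitting entries while discarding all fitting ones; the output is then neither a subfamily of~$\PPP^q_{\tau}(G_{u,v})$ nor guaranteed to represent it, and appending such a segment in \cref{alg:above} can violate \rpprefness{} across the junction.

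The fix is to seed the color component of the state with~$\tau$ itself: initialize at~$u$ with $\{u\}\cup\pi(\sigma_0)$, where $\sigma_0$ is the suffix of length $\min\{\abs{\tau}+1,r\}$ of $\tau\circ(\col(u))$ (provided this concatenation is \rppref{}), so that the ordinary extension checks automatically enforce compatibility with~$\tau$ during the next $r-1$ steps. The paper achieves exactly this without modifying \cref{alg:fpt-l}: it builds an auxiliary digraph~$G^*$ consisting of~$G_{u,v}$ together with a directed path carrying the color sequence~$\tau$ and ending in~$u$, runs \cref{alg:fpt-l} on~$G^*$ for length $\abs{\tau}+q$ from the new source to~$v$, and reads off the end-sequences; correctness and the $r^{\bigO(r+q)}\cdot m$ bounds then follow directly from \cref{lem:fpt-l-repr,lem:fpt-l-time}. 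Your remaining deviations (a uniform blocker budget $q+1+r$ instead of the step-dependent $\ell-p+r$, and bounding the entry size by $q+1+r(r+1)/2$) are harmless and stay within $r^{\bigO(r+q)}$, but they do not compensate for the missing $\tau$ information in the state.
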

\fi
\appendixproof{thm:brute-force}{
\begin{proof}
	Let~$\tau = (a_1, \dots, a_p)$ be a color sequence.
	We will create an auxiliary digraph~$G^*$ which consists of a copy of~$G_{u,v}$ and a path~$(s = v'_1, \dots, v'_p, u)$
	where for each~$i \in \oneto{p}$, $v_i$~has color~$a_i$.
	If~$p=0$, then we identify~$u$ with $s$.
	Observe that any \rppref{} \abpath{s}{v} $P$ of length~$p+q$ in~$G^*$ first visits the vertices with color sequence~$\tau$ and then visits the vertices of an \abpath{u}{v} $P'$ of length~$q$.
	Clearly, $P'$ fits~$\tau$.

	We now run
	\cref{alg:fpt-l} with input graph~$G^*$, length~$p+q$, and terminals~$s$ and~$v$ to compute~$\hatty\PPP^q_v$ for~$G^*$.
	Then, by \cref{lem:fpt-l-repr,thm:pi}, the family $\hatty\PPP^q_{\tau}(G_{u,v}) \coloneqq \{ \sigma^* \mid X \cup \pi(\sigma) \in \hatty\PPP^{p+q}_v \}$ is an \orepresentative{} for~$\PPP^q_{\tau}(G_{u,v})$.
	By \cref{lem:fpt-l-time,lem:fpt-l-repr}, the running time and size bounds are met.
\end{proof}
} %

Now that we know how to compute the families $\hatty\PPP^q_{\tau}(G_{u,v})$, we can state the main algorithm.
Herein, for every~$p \in \nullto{\ell}$ and~$v \in V(G)$, we are interested in the family
\begin{equation}
	\label{eq:table-above}
	\RRR^{p}_{v} \coloneqq \left\{ \sigma\ \Bigg\lvert\ 
	\begin{aligned}
		& \abs{\sigma} = \min \{p+1, r\} \text{ and there is an \rppref{} length-$p$} \\
		& \text{\abpath{s}{v} in~$G_{s, v}$ whose color sequence ends on } \sigma
	\end{aligned}
	\right\}.
\end{equation}
Hence, there is a length-$\ell$ \rppref{} \stpath{} $P^*$ if and only if~$\RRR^\ell_t$ is nonempty.
As, by \cref{thm:vertex-distance}, $P^*$ will have~$v$ as its~$p$-th vertex only if~$p \le d(s) - d(v) + k$, we only need to consider those families~$\RRR^p_v$ for which this inequality holds.

\begin{alg}
	\label{alg:above}
	Set~$\hatty\RRR^0_s \coloneqq \{(\col(s))\}$ and for all~$v \in V(G) \setminus \{s\}$, set~$\hatty\RRR^0_v \coloneqq \emptyset$.
	Now, 
	for each~$p = 1, 2, \dots, \ell$, 
	for each~$v \in V(G)$ with~$p \le d(s) - d(v) + k$, 
	compute
	\begin{equation}
		\label{eq:above-recurrence}
		\SSS^p_v \coloneqq
		\bigcup_{\mathclap{\substack{u \in V(G),\\q \in \oneto{\min\{2k+1, p\}}}}}
		\hspace{0.60em}
			\left\{\sigma' \! \circ \sigma \left\vert\,
				\begin{aligned}
					& \exists \sigma'' : \sigma'' \!\! \circ \sigma'\! \in \hatty\RRR^{p-q}_u\!,\, \abs{\sigma'} = \max\{0, \min\{p\!-\!q\!+\!1, r\!-\!q\}\}, \\
					& \abs{\sigma}=\min\{q, r\}, \text{ and } \sigma \in \hatty\PPP^q_{(\sigma'' \circ \sigma')}(G_{u,v})
				\end{aligned}
				\right.
			\right\},
	\end{equation}
	and compute~$\hatty\RRR^p_v \orepr^r \SSS^p_v$ using \cref{thm:ordered-rep}.
	Return \yes{} if and only if~$\hatty\RRR^{\ell'}_t \ne \emptyset$ for some $\ell' \in \oneto{\ell}$.
\end{alg}

We prove the correctness by showing that indeed $\hatty\RRR^p_v$ is an \orepresentative{} for~$\RRR^p_v$.
The main part herein is to prove this property for~$\SSS^p_v$.
In the same step we also analyze the running time of the algorithm.
With following lemma, the proof of \cref{thm:dfp-above} is immediate.

\newcommand{\thmaboverepr}{%
	\label{thm:above-repr}
	For each~$v \in V(G)$ and~$p \in \nullto{\ell}$ with~$p \le d(s) - d(v) + k$,
	the family~$\hatty\RRR^p_{v}$ computed in \cref{alg:above}
	is of size at most~$(r \cdot e)^r$
	and is an \orepresentative{} for $\RRR^p_v$ as defined in \eqref{eq:table-above}.
	Moreover, \cref{alg:above} runs in~$r^{\bigO(r+k)} \cdot \ell n^2 m$ time.
}
\ifshort
\begin{lemma}[\appref{thm:above-repr}]
	\thmaboverepr
\end{lemma}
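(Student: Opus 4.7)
The plan is to prove both claims simultaneously by induction on $p$. The base case $p=0$ is immediate from the initialization of \cref{alg:above}. For the inductive step, I will first establish that $\SSS^p_v \orepr^r \RRR^p_v$; combined with $\hatty\RRR^p_v \orepr^r \SSS^p_v$ (which, along with the size bound $\abs{\hatty\RRR^p_v} \le (r\cdot e)^r$, comes directly from \cref{thm:ordered-rep}) and the transitivity of ordered representatives (\cref{thm:transitive}), this yields $\hatty\RRR^p_v \orepr^r \RRR^p_v$ as required.

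The main work lies in showing $\SSS^p_v \orepr^r \RRR^p_v$. I will take any $\sigma \in \RRR^p_v$ that is $r$-compatible to a sequence $\rho$, witnessed by an \rppref{} \abpath{s}{v} $P = (v_0, \ldots, v_p)$ in $G_{s,v}$. Because $P$ lies in $G_{s,v}$, every internal vertex $w$ of $P$ satisfies $d(w) > d(v)$, so $v$ is itself a distance separator of $P$. The hypothesis $p \le d(s) - d(v) + k$ then puts me into the setting of \cref{thm:dist-sep}, giving a distance separator $u = v_{p-q}$ with $1 \le q \le 2k+1$ (taking $u = s$ in the trivial case $p \le 2k$). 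I split $P = P_1 \circ P_2$ at $u$; because $u$ is a distance separator, the vertices of $P_1$ lie in $V(G_{s,u}) \cup \{s, u\}$ and those of $P_2$ other than $u$ lie in $V(G_{u,v})$. By \cref{thm:vertex-distance} applied to $P$, $p - q \le d(s) - d(u) + k$, so the inductive hypothesis applies at $(p-q, u)$: the color sequence of $P_1$ is an element of $\RRR^{p-q}_u$, so there is a representative $\hatty\tau = \hatty\sigma'' \circ \hatty\sigma' \in \hatty\RRR^{p-q}_u$ (split as in \eqref{eq:above-recurrence}) that is $r$-compatible with the colors of $P_2$ following $u$ concatenated with $\rho$. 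Then \cref{thm:brute-force} yields a representative $\hatty\sigma_2 \in \hatty\PPP^q_{\hatty\tau}(G_{u,v})$ that is itself $r$-compatible with $\rho$; its last $\min\{q, r\}$ entries form a sequence $\hatty\sigma$ for which $\hatty\sigma' \circ \hatty\sigma \in \SSS^p_v$ witnesses $r$-compatibility with $\rho$.

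For the running time, computing $\SSS^p_v$ requires iterating over at most $n$ choices of $u$, $O(k)$ choices of $q \le 2k+1$, and the $(r\cdot e)^r$ sequences of $\hatty\RRR^{p-q}_u$; each invocation of \cref{thm:brute-force} costs $r^{\bigO(r+q)} \cdot m \subseteq r^{\bigO(r+k)} \cdot m$ and produces at most $r^{\bigO(r+k)}$ sequences. The subsequent call to \cref{thm:ordered-rep} to distill $\hatty\RRR^p_v$ from $\SSS^p_v$ fits into the same bound. Summing over the $\ell n$ pairs $(p, v)$ yields the claimed $r^{\bigO(r+k)} \cdot \ell n^2 m$ overall.

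The main obstacle I anticipate is the careful bookkeeping around the shared vertex $u$: ensuring that the lengths $\abs{\sigma'}$ and $\abs{\sigma}$ declared in \eqref{eq:above-recurrence} really combine to the correct suffix length $\min\{p+1, r\}$ at $v$, and in particular that one handles the first entry of any sequence from $\hatty\PPP^q$ (which corresponds to $\col(u)$ and overlaps with the last entry of $\hatty\tau$) without double-counting. A related subtlety is that $\hatty\PPP^q$ is parameterized by the representative $\hatty\tau$ rather than by the true color sequence $\tau$ of $P_1$, so one has to check that $P_2$ itself fits $\hatty\tau$---which is precisely the compatibility of $\hatty\tau$ with the colors on $P_2$ after $u$ that the inductive hypothesis delivers.
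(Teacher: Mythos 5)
Your proposal follows the paper's proof essentially step for step: induction on $p$, splitting the witness path at the last distance separator within at most $2k+1$ steps (via \cref{thm:dist-sep} and \cref{thm:vertex-distance}), invoking the induction hypothesis with the extended sequence formed by the colors of the second segment followed by a prefix of the target sequence, applying \cref{thm:brute-force} to the second segment, and concluding via \cref{thm:ordered-rep} and \cref{thm:transitive}, with the same running-time accounting over $u$, $q$, and the pairs $(p,v)$. The subtleties you flag --- the overlap of $\col(u)$ at the junction and parameterizing the second-segment family by $\hatty\tau$ rather than the true $\tau$, which is what \cref{alg:above} actually computes --- are precisely the delicate points, and your handling of them is, if anything, slightly more faithful to the algorithm than the paper's own wording.
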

\else
\begin{lemma}
	\thmaboverepr
\end{lemma}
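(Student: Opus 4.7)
The plan is to prove both claims simultaneously by induction on $p$, adapting the template of \cref{lem:fpt-r-repr} so that a one-step extension is replaced by a short ``distance-separator-to-distance-separator'' extension of length at most $2k+1$. The base case $p=0$ follows directly from the initialization $\hatty\RRR^0_s = \{(\col(s))\}$ and $\hatty\RRR^0_v = \emptyset$ for $v \ne s$. For the inductive step, fix $p \ge 1$ and $v$ with $p \le d(s) - d(v) + k$. Let $\sigma^* \in \RRR^p_v$ be $r$-compatible with some length-$r$ sequence $\rho$, realized by a path $P^* = (s=v_0,\dots,v_p=v)$ inside $G_{s,v}$. Since every intermediate vertex of $P^*$ has $d$-value strictly above $d(v)$, the endpoint $v$ is itself a distance separator of $P^*$, so \cref{thm:dist-sep} yields a preceding distance separator $u^* = v_{p-q^*}$ with $q^* \in \oneto{\min\{2k+1, p\}}$ (with the convention $u^* = s$ in the boundary case $p \le 2k+1$).

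Split $P^*$ at $u^*$ into a prefix $P_1$ of length $p-q^*$ and a suffix $P_2$ of length $q^*$. Because $u^*$ is a distance separator of $P^*$, every internal vertex of $P_1$ has $d$-value above $d(u^*)$, so $P_1$ is a path in $G_{s, u^*}$; similarly, the internal vertices of $P_2$ have $d$-value strictly between $d(v)$ and $d(u^*)$, so $P_2$ is a path in $G_{u^*, v}$. Let $\tau_1$ be the color-sequence suffix of $P_1$, which lies in $\RRR^{p-q^*}_{u^*}$, and let $\sigma^{**}$ be the color-sequence suffix of $P_2$, which lies in $\PPP^{q^*}_{\tau_1}(G_{u^*,v})$. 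The $r$-prefness of $P^*$ together with the $r$-compatibility of $\sigma^*$ with $\rho$ translates via the bookkeeping of \cref{thm:pi-disjoint} into $r$-compatibility of $\tau_1$ with an auxiliary sequence $\rho_1$ built by prepending to $\rho$ the portion of $\sigma^{**}$ that falls inside the last $r$ colors of the full path. The induction hypothesis provides some $\hatty\tau_1 \in \hatty\RRR^{p-q^*}_{u^*}$ that is $r$-compatible with $\rho_1$. Applying \cref{thm:brute-force} to $(G_{u^*,v}, q^*, \hatty\tau_1)$ produces an \orepresentative{} $\hatty\PPP^{q^*}_{\hatty\tau_1}(G_{u^*,v})$ containing some $\hatty\sigma^{**}$ that is $r$-compatible with $\rho$. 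Decomposing $\hatty\tau_1$ as $\sigma'' \circ \sigma'$ exactly as in~\eqref{eq:above-recurrence}, the concatenation $\sigma' \circ \hatty\sigma^{**}$ (with the overlap at $\col(u^*)$ handled as in the recurrence) is an element of $\SSS^p_v$ that is $r$-compatible with $\rho$. Hence $\SSS^p_v \orepr^r \RRR^p_v$, and \cref{thm:ordered-rep} then yields $\hatty\RRR^p_v \orepr^r \SSS^p_v$ of size at most $(r\cdot e)^r$; the claim closes by transitivity (\cref{thm:transitive}).

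For the running time, there are $\bigO(\ell n)$ pairs $(p, v)$. For each pair, the union in~\eqref{eq:above-recurrence} ranges over $\bigO(n)$ choices of $u$, $\bigO(k)$ choices of $q$, and at most $(r\cdot e)^r$ choices of $\sigma'' \circ \sigma' \in \hatty\RRR^{p-q}_u$; for each such tuple, \cref{thm:brute-force} runs in $r^{\bigO(r+q)} \cdot m \subseteq r^{\bigO(r+k)} \cdot m$ time and outputs a family of size $r^{\bigO(r+k)}$. Thus $|\SSS^p_v| \in r^{\bigO(r+k)} \cdot n$, and constructing $\SSS^p_v$ takes $r^{\bigO(r+k)} \cdot nm$ time. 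Extracting $\hatty\RRR^p_v$ via \cref{thm:ordered-rep} merely adds a factor of $(r\cdot e)^{\omega r}$, which is absorbed into $r^{\bigO(r+k)}$. Multiplying by $\bigO(\ell n)$ yields the claimed $r^{\bigO(r+k)} \cdot \ell n^2 m$ bound. The main obstacle lies in the bookkeeping of the representation argument: verifying that the compatibility of $\tau_1$ with the constructed auxiliary sequence $\rho_1$ corresponds precisely to the split $\sigma'' \circ \sigma'$ of~\eqref{eq:above-recurrence}, so that the representative chosen by the inductive hypothesis plugs correctly into \cref{thm:brute-force}; once that translation is checked, the rest of the argument mirrors the proof of \cref{lem:fpt-r-repr} almost verbatim.
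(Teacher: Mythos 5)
Your proposal is correct and follows essentially the same route as the paper's proof: induction on $p$, splitting the witness path at a distance separator at most $2k+1$ steps before $v$ (via \cref{thm:dist-sep}), transferring $r$-compatibility to the prefix through an auxiliary sequence (your $\rho_1$, the paper's $\lambda'$), invoking the induction hypothesis and \cref{thm:brute-force} with the chosen representative as the fitting sequence, and closing with \cref{thm:ordered-rep} and transitivity, with the identical running-time bookkeeping. The bookkeeping step you flag (that compatibility with $\rho_1$ makes the suffix path fit $\hatty\tau_1$, so $\PPP^{q^*}_{\hatty\tau_1}(G_{u^*,v})$ indeed contains a witness) is exactly the detail the paper handles in its $\lambda'$ paragraph, so nothing essential is missing.
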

\fi
\appendixproof{thm:above-repr}
{
\begin{proof}
	Our proof is by induction.
	By the initial assignments of $\hatty \RRR^0_{v}$ in \cref{alg:fpt-r}, the statement is correct for~$p=0$.
	Now, fix some~$p \in \oneto{\ell}$ and assume that for every~$p' \in \nullto{p-1}$ and every~$u \in V(G)$ with~$p' \le d(s)-d(u)+k$,
	the family $\hatty\RRR^{p'}_u$ is an \orepresentative{} for $\RRR^{p'}_u$.

	Let~$\lambda$ be a sequence on~$\colors$ with~$\abs{\lambda} \le r$ and let~$v \in V(G)$ be a vertex with~$p \le d(s)-d(v)+k$.
	Suppose that there exists a sequence~$\rho \in \RRR^p_v$ that is $r$-compatible to~$\lambda$.
	That is, there exists an \rppref{} length-$p$ \abpath{s}{v} $P = (s = v_0, v_1, \dots, v_p = v)$ in~$G_{s,v}$ whose color sequence ends on~$\rho$.

	We claim that there exists a $\hatty\rho \in \SSS^p_v$ such that $\hatty\rho$ is $r$-compatible to~$\lambda$, and there is an \rppref{} length-$p$ \abpath{s}{v} $\hatty P$ in~$G_{s,v}$ whose color sequence ends on~$\hatty\rho$.
	Then, $\hatty \rho \in \RRR^p_v$, and~$\SSS^p_v$ is an \orepresentative{} for~$\RRR^p_v$.
	As~$\hatty\RRR^p_v \orepr^r \SSS^p_v$ by \cref{thm:ordered-rep}, we have~$\hatty\RRR^p_v \orepr^r \RRR^p_v$ by \cref{thm:transitive}.
	Further, the size bound follows from \cref{thm:ordered-rep}, and the lemma is proven.

	Let us prove our claim.
	Let~$q \in \oneto{2k+1}$ be the smallest number such that~$u \coloneqq v_{p-q}$ is a distance separator --- such a $q$ exists due to \cref{thm:dist-sep} and the fact that~$p \le d(s)-d(v)+k$.
	Then the subpath~$P' \coloneqq (s = v_0, \dots, v_{p-q} = u)$ of~$P$ is contained in~$G_{s, u}$,
	and the subpath~$P'' \coloneqq (u = v_{p-q}, \dots, v_{p})$ of~$P$ is contained in~$G_{u, v}$.

	Choose~$\sigma$ and~$\sigma'$ such that~$\abs{\sigma} = \min\{q, r\}$ and that~$\rho = \sigma' \circ \sigma$ (note that~$\sigma'$ may be empty).
	Further, choose~$\sigma''$ such that~$\abs{\sigma'' \circ \sigma'} = \min\{p-q+1,r\}$ and let~$\tau \coloneqq \sigma'' \circ \sigma'$.
	Then the color sequence of~$P'$ ends on~$\tau$.
	As~$P'$ is a subpath of~$P$, it is \rppref{}.
	Further, as its length is~$p-q$ and it is contained in~$G_{s, u}$, we have~$\tau \in \RRR^{p-q}_u$.
	We next define $\lambda'$ as follows.
	If~$r \le q$, then~$\lambda' \coloneqq (\col(v_{p-q+1}), \dots, \col(v_{p-q+r}))$ and note that~$\tau$ is $r$-compatible to~$\lambda'$.
	If~$r > q$, then~$\lambda' \coloneqq (\col(v_{p-q+1}), \dots, \col(v_{p})) \circ \lambda''$, where~$\lambda''$ consists of the first~$\min\{r-q, \abs{\lambda}\}$ entries of~$\lambda$.
	In this case, $\tau$ is also $r$-compatible to~$\lambda'$.
	Hence, by induction hypothesis,
	as~$\tau$ is $r$-compatible to~$\lambda'$,
	there exists a~$\hatty\tau \in \hatty\RRR^{p-q}_u$ that too is~$r$-compatible to~$\lambda'$.
	Let~$\hatty P' \coloneqq (\hatty v_0, \dots, \hatty v_{p-q})$ be the corresponding \abpath{s}{u} whose color sequence ends on~$\hatty\tau$.

	Recall that~$P''$ is contained in~$G_{u,v}$, fits~$\tau$, and ends on~$\sigma$, which is $r$-compatible to~$\lambda$.
	Hence, $\sigma \in \PPP^q_{\tau}(G_{u,v})$.
	Then, by \cref{thm:brute-force}, there exists a sequence~$\hatty\sigma \in \hatty\PPP^q_{\tau}(G_{u,v})$ that is $r$-compatible to~$\lambda$,
	and~$\hatty\sigma$ will be found by \cref{alg:above}.
	Let~$\hatty P'' \coloneqq (u = \hatty v_{p-q}, \dots, \hatty v_p = v)$ be the corresponding \abpath{u}{v} in~$G_{u,v}$.

	As~$\hatty P'$ is in~$G_{s,u}$ and~$\hatty P''$ is in~$G_{u,v}$, and the two graphs only have vertex~$u$ in common,
	$\hatty P \coloneqq (\hatty v_0, \dots, \hatty v_{p-q}, \hatty v_{p-q+1}, \dots, \hatty v_p)$ is a valid path in~$G_{s, v}$.
	Define~$\hatty\rho$ such that~$\abs{\rho} = \max\{r, p+1\}$ and the color sequence of~$\hatty P$ ends on~$\hatty\rho$.
	Observe that~$\hatty\tau \circ \hatty\sigma \circ \lambda$ is \rppref{}.
	As~$\hatty\rho$ is a substring at the end of~$\hatty\tau \circ \hatty\sigma$, we have that~$\hatty\rho \circ \lambda$ is \rppref{}.
	This implies that~$\hatty\rho$ is $r$-compatible to~$\lambda$.
	This proves our claim.

	Let us now focus on the running time of \cref{alg:above}.
	Let~$v \in V(G)$ and~$p \in \oneto{\ell}$.
	To compute~$\SSS^p_v$, we need to consider the sets~$\hatty\RRR^{p-q}_u$ for each~$u \in V(G)$ and each~$q \in \oneto{\min\{2k, p\}}$.
	By \cref{thm:above-repr}, $\abs{\hatty\RRR^{p-q}_u} \le (r\cdot e)^r$.
	For each~$\sigma'' \circ \sigma \in \hatty\RRR^{p-q}_u$ we then need to compute~$\hatty\PPP^q_{\sigma''\circ\sigma'}(G_{u,v})$, which, by \cref{thm:brute-force} takes~$r^{\bigO(r+q)} \cdot m$ time and, by \cref{lem:fpt-l-repr}, is of size at most~$\binom{r'+r+q}{r'+q} \in r^{\bigO(r+q)}$.
	Thus, $\SSS^p_v$ contains at most~$n \cdot 2k \cdot (r\cdot e)^r \cdot r^{\bigO(r+k)} \subseteq r^{\bigO(r+k)} \cdot n$ sets, and the running time to compute~$\SSS^p_v$ is at most
	\[
		n \cdot 2k \cdot (r\cdot e)^r \cdot r^{\bigO(r+k)} \cdot m = r^{\bigO(r+k)} \cdot nm.
	\]
	Finally, the time to compute~$\hatty\RRR^p_v \orepr^r \SSS^p_v$ is
	\[
		\bigO \big( \abs{\SSS^p_v} \cdot (r\cdot e)^r r^\omega + \abs{\SSS^p_v} \cdot (r\cdot e)^{(\omega-1)r} \big) \subseteq r^{\bigO(r+k)}. %
	\]
	Doing this for every~$v \in V(G)$ and~$p \in \oneto{\ell}$ results in the stated running time.
\end{proof}
}

\ifshort{}\else{}\cref{thm:dfp-above} now follows from \cref{thm:above-repr}.\fi{}
\appendixproof{thm:dfp-above}
{
\begin{proof}[Proof of \cref{thm:dfp-above}]
	If~$\ell = \dist(s, t)$, 
	then any \stwalk{} of length~$\ell$ is also a path; hence we can use \cref{thm:dft-fpt-r} to compute a solution within the claimed time.
	Otherwise, we use \cref{alg:above}.
	By \cref{thm:above-repr}, the algorithm runs in the claimed running time and computes for every~$v \in V(G)$ and each~$p \in \nullto{\ell}$ with~$p \le d(s)-d(v)+k$
	an \orepresentative{} of~$\RRR^{p}_v$.
	Clearly, if the algorithm returns \yes{}, then there exists an \rppref{} \stpath{} of length~$\ell$.
	Conversely, if there exists an \rppref{} \stpath{} $P^*$ of length~$\ell$,
	then by \cref{thm:vertex-distance}, $P^*$ will contain a vertex~$v \in V(G)$ as its~$p$-th vertex only if $p \le d(s)-d(v)+k$.
	More specifically, as~$\ell \le d(s)-d(t)+k = d(s)-0+k$, the algorithm correctly determines whether~$\hatty\RRR^{\ell'}_t \ne \emptyset$ and thus, whether~$\RRR^{\ell'}_t \ne \emptyset$ for every $\ell' \in \oneto{\ell}$.
\end{proof}
}

\section{Conclusion}
\label{ft:sec:conclusion}

We introduced a \emph{local rainbow} constraint to the classic problem of finding \stpaths{} and walks, 
modeling scenarios in which resources (i.e., colors) are replenished over time.
For walks, we are able to prove fixed-parameter tractability for the locality parameter $r$ thanks to a new adaptation of the representative sets technique.
In contrast, \fp{} remains \NP-hard even for constant~$r$ due to the added non-local constraint of forbidding self-intersections.
However, when the allowed length of the path is not too large in comparison to the distance between its endpoints, then the no-intersection constraints effectively become local again.
This is exploited to prove \fp{} to be fixed-parameter tractable with the combined parameter~$r + k$
where $k$~is the detour length.

Towards future work, we believe that \pprefness{} is only the tip of the iceberg when it comes to interesting local constraints.
A straightforward generalization would be to allow for multi-colored vertices, so as to model a setting in which multiple types of resources can be used at once.
Another natural variant would be to relax the \pprefness{} constraint of the subpaths, allowing some bounded number of vertices to share the same color.

One could also extend our \pprefness{} constraint to other connectivity problems.
Canonical candidates would be the traveling salesperson problem or the problem of finding multiple disjoint \stpaths{}.
Similarly, one could be interested in finding Steiner trees in which all subpaths are \ppref{} (a generalization of rainbow Steiner trees \cite{ferone2022rainbow}),
or vertex sets whose deletion destroys all \stpaths{} that are not \ppref{}.

We already observed that in practice, the locality constraint is usually motivated by some regeneration over time.
Therefore, it may be sensible to study the \pprefness{} constraints also on temporal graphs, such as finding temporal walks~\cite{himmel2020walks} or paths~\cite{casteigts2021paths,zschoche2022restless}. 

\printbibliography

\ifshort{}\ifappendix{}
\clearpage
\onecolumn
\appendix
\appendixProofText
\fi{}\fi{}

\end{document}